\pgfplotsset{compat=1.18}
\titleformat{\paragraph}[runin]{\normalfont\normalsize\bfseries}{}{0em}{}[.]
\newtheorem{theorem}{Theorem}[section]
\newtheorem{mainresult}{Main Result}
\newtheorem{lemma}[theorem]{Lemma}
\newtheorem{proposition}[theorem]{Proposition}
\newtheorem{claim}[theorem]{Claim}
\newtheorem{observation}[theorem]{Observation}
\theoremstyle{definition}
\newtheorem{definition}[theorem]{Definition}
\newtheorem{example}[theorem]{Example}
\newtheorem{remark}[theorem]{Remark}
\newcommand{\actions}{A}
\newcommand{\agents}{A} % let's use this rather than \actions
\newcommand{\outcomespace}{\Omega}
\newcommand{\outcome}{\omega}
\newcommand{\sstar}{S^\star}
\newcommand{\reals}{\mathbb{R}}
\newcommand{\indicator}[1]{\mathbbm{1}\left[{#1}\right]}
\newcommand{\cand}{\mathcal{C}}
\newcommand{\downscaling}{\Psi}
\newcommand{\optestimate}{\tilde{y}}
\newcommand{\optestimateround}[1]{\tilde{y}^{(#1)}}
\title{Multi-Agent Contracts\thanks{An extended abstract of this work appeared in the Proceedings of the 55th ACM Symposium on Theory of Computing (STOC 2023) \citep{DuettingEFK23}. The impossibility result for submodular rewards first appeared in an extended abstract in the Proceedings of the 36th ACM-SIAM Symposium on Discrete Algorithms (SODA 2025) \citep{DuettingEFK25}.}}
\author{Paul D\"utting\thanks{Google Research, Switzerland. Email: \texttt{duetting@google.com}} \and Tomer Ezra\thanks{Tel Aviv  University, Israel. Email: \texttt{tomerezra@tauex.tau.ac.il}} \and Michal Feldman\thanks{Tel Aviv University, Israel. Email: \texttt{mfeldman@tauex.tau.ac.il}}%\thanks{Tel Aviv University and Microsoft ILDC, Israel. Email: \texttt{mfeldman@tau.ac.il}} 
\and Thomas Kesselheim\thanks{University of Bonn, Germany. Email: \texttt{thomas.kesselheim@uni-bonn.de}}}
\date{}
\begin{document}

\maketitle

\begin{abstract}
We study a natural combinatorial  single-principal multi-agent contract design problem, in which a principal motivates a team of agents to exert effort toward a given task.
At the heart of our model is a {\em reward function}, which maps the agent efforts to an expected reward of the principal.
We seek to design computationally efficient algorithms for finding optimal (or near-optimal) linear contracts for reward functions that belong to the complement-free hierarchy.

Our first main result gives constant-factor approximation algorithms for submodular and XOS reward functions, with value oracles for submodular reward functions and value and demand oracles for XOS reward functions.
It relies on an unconventional use of ``prices'' and (approximate) demand queries for selecting the set of agents that the principal should contract with, and exploits a novel scaling property of XOS functions and their marginals, which may be of independent interest.

As our second main result, we show that constant approximation is the best we can get for submodular reward functions, even with both value and demand oracles.
For the larger class of subadditive reward functions, we establish an $\Omega(\sqrt{n})$ impossibility for settings with $n$ agents.
A striking feature of this impossibility is that it applies to subadditive functions that are constant-factor close to submodular.
This rapid degradation presents a surprising departure from previous literature, e.g., on combinatorial auctions, where approximation guarantees tend to deteriorate more gracefully.
\end{abstract}

%% INTRODUCTION
\section{Introduction}
\label{sec:intro}

Contract theory is a core problem in economic theory (c.f.~the 2006 ``Nobel Prize'' to Oliver Hart and Bengt Holmstr\"om), which seeks, using incentive mechanisms, to achieve desirable outcomes in the presence of unobservable actions.
It plays a major role in the design of markets for effort (or services), similar to the role that mechanism design and auction theory play in the design of markets for goods. 
Applications of contract design range from multi-million dollar markets, such as influencer marketing on social media platforms \citep[e.g.,][]{Statista}, to contracts for social goods, such as government-run programs to encourage afforestation/prevent deforestation \citep[e.g.,][]{LiIL21, AshlagiLL22}.

At its heart is the hidden-action principal-agent problem \citep[e.g.,][]{Holmstrom79,GrossmanHart83}, in which a principal seeks to incentivize an agent to take a costly action, whose stochastic outcome determines a reward for the principal. A contract defines monetary transfers from the principal to the agent based on the observable outcome. 
The principal's goal is to find the contract that maximizes her utility (expected reward minus transfer), when the agent chooses the action that maximizes his utility (expected transfer minus cost).

In its vanilla version, with a single principal and a single agent, the problem can be solved in polynomial time by solving one LP for each action \citep{GrossmanHart83}. While this approach can also be applied in more complex scenarios, its running time will usually be exponential in the (succinct) representation. Therefore, one has to understand the structure of optimal contracts in order to obtain computationally efficient algorithms for finding optimal contracts \citep[e.g.][]{BabaioffFNW12,DuttingRT21,DuttingEFK21}.

A natural extension of the classic single-principal single-agent model, are settings where a principal seeks to incentivize a \emph{team} of agents \citep{Holmstrom82}. In such scenarios, the complexity arises from the complex combinatorial structure of dependencies between the agent actions, and is already compelling with \emph{binary actions} (i.e., each agent can either exert effort or not) and \emph{binary outcome} (e.g., a project that can succeed or fail) \citep{BabaioffFNW12}. 
In this work, we provide computationally efficient algorithms and impossibilities for (approximating) the optimal contract in such multi-agent settings.

As in many economic models, the problem of optimally incentivizing teams of agents boils down to a clean and compelling combinatorial optimization problem. 
The challenge is that, even in cases where the dependencies between the agent actions exhibit nice structure, this structure does \emph{not} transfer to the objective~function.

\paragraph{Model -- Key ingredients}
To state and discuss our results, it will be useful to describe the key ingredients of the multi-agent hidden-action principal-agent problem \citep[c.f.][]{BabaioffFNW12} that we study in this paper.

Our focus is on the binary-action binary-outcome setting. In this model, a principal interacts with a set $\agents$ of $n$ agents. Agents have binary actions: They can either exert effort or not. Exerting effort comes with cost $c_i \in \reals_{\geq 0}$ for agent $i$. There are two possible outcomes $\outcomespace = \{0,1\}$, which we interpret as failure and success, respectively. The principal receives a reward of $r(1) \geq 0$ for success, and a reward of $r(0) = 0$ for failure. Without loss of generality, we assume that the reward for success is normalized to $1$.

A \emph{success probability function} $f: 2^\agents \rightarrow [0,1]$ maps each subset of agents $S \subset \agents$ that exert effort to a probability $f(S)$ of success. Note that since we normalize the principal's reward for success to $1$, $f(S)$ is also the expected reward of the principal. For this reason, we also refer to $f$ as the \emph{reward function}. 
We generally assume that $f$ is monotone so that $f(S) \leq f(T)$ for $S \subseteq T$, and normalized so that $f(\emptyset) = 0$. We assume that we can access $f$ through one or both of the following means: a \emph{value oracle}, which allows us to query $f(S)$ for any given $S$, or a \emph{demand oracle}, which on input $p_1, \ldots, p_n \in \reals_{\geq 0}$ returns $S$ that maximizes $f(S) - \sum_{j \in S} p_j$.

To incentivize the agents to exert effort, the principal designs a \emph{contract} $t: \Omega \rightarrow \mathbb{R}^n_{\geq 0}$, which defines payments $t_i(\outcome)$ for each agent $i \in \agents$ and possible outcome $\outcome \in \outcomespace$. In the binary-outcome case, it is without loss to focus on linear contracts  (Proposition~\ref{prop:binary-outcome}). A \emph{linear contract} is defined by a vector $\alpha = (\alpha_1, \ldots, \alpha_n) \in [0,1]^n$, with the interpretation that agent $i \in \agents$ should be paid $\alpha_i \cdot r (\outcome)$ for outcome $\outcome \in \outcomespace$. Each contract $\alpha$ induces a game among the agents, and we are interested in the (pure) Nash equilibria of that game.
We say that a linear contract incentivizes a set of agents $S \subseteq \agents$ to exert effort if for each agent $i \in S$ it holds that $\alpha_i \cdot f(S) - c_i \geq \alpha_i \cdot f(S\setminus\{i\})$, while for each agent  $i \not\in S$ it holds that $\alpha_i \cdot f(S) \geq \alpha_i \cdot f(S \cup \{i\}) -c_i$. 
The goal is to find a linear contract $\alpha$ and a set $S$ such that $S$ is incentivized by $\alpha$, and maximizes the principal's expected utility given by $g(S) = (1-\sum_{i \in \agents} \alpha_i) f(S)$.

We note that all our results extend to the problem of computing optimal linear contracts for more general outcome spaces (see Appendix~\ref{app:general-outcomes}). While linear contracts are no longer optimal in such settings, they can be shown to be max-min optimal when only the expected reward of each set of agents is known \citep{DuttingRT19,DuttingEFK21} (see Appendix~\ref{app:max-min}).

\subsection{Our Contribution}

We study the computational complexity of computing optimal and near-optimal linear contracts for the multi-agent hidden-action principal-agent problem for different classes of ``complement-free'' reward functions \citep{LehmannLN06}.

As it turns out, moving from a single-agent to a multi-agent setting significantly complicates matters. Even for the simplest class of reward functions --- additive reward --- it is NP-hard to compute the optimal contract; but the problem admits 
an FPTAS (see Appendix~\ref{app:additive}). As we progress in the hierarchy,  
the problem becomes  significantly more challenging.

Our first main 
result concerns submodular, and in fact, the more general class of XOS reward functions 
(for definitions see Section~\ref{sec:preliminaries}). 
We show that for both of these classes there exist polynomial-time %\pde{polynomial-time} 
algorithms that provide constant-factor approximation to the optimal contract, assuming either value oracles or value and demand oracles.

Our result relies on an unconventional use of ``prices'' and (approximate) demand queries for selecting the set of agents that the principal should seek to incentivize, and exploits a novel scaling property for XOS functions and their marginals, which may be of independent interest.

\begin{mainresult}[Theorem~\ref{thm:xos}] 
For any multi-agent setting with {\em submodular} reward function $f$, there exists a polynomial-time $O(1)$-approximation 
to the optimal contract,  that uses a {\em value} oracle.
If $f$ is {\em XOS}, an $O(1)$-approximation can be achieved using {\em demand} and {\em value} oracles.
\end{mainresult}

Our second main result shows that this result is tight. Specifically, we show that it is impossible to obtain a better than constant approximation for submodular functions with a sub-exponential number of queries, even with access to both demand and value oracles.

\begin{mainresult}[Theorem~\ref{thm:upper}] 
There exists a constant $\eta > 1$ %$\eta<1$ 
for which any algorithm that uses a sub-exponential number of value and demand queries to the %submodular success probability function 
submodular reward function 
returns a $\eta$-approximate optimal contract with an exponentially small probability in $n$. 
\end{mainresult}

We also show that, the broader class of subadditive functions does not admit a constant-factor approximation. 
Specifically, for any polynomial-time algorithm, with demand or value oracles, there exists a multi-agent setting with  subadditive reward function $f$, such that the algorithm achieves no better than $\Omega\left(\sqrt{n}\right)$-approximation to the optimal contract, where $n$ is the number of agents (see Theorem~\ref{thm:subadditive-demand}). 
We also observe that there is a simple polynomial-time $O(n)$-approximation for subadditive rewards with value oracles (Remark~\ref{rem:subaddtive}). Table~\ref{tab:multi-agent} summarizes the state-of-the-art computational landscape of multi-agent contracts with binary actions, incorporating both our results and follow-up work (discussed below, in Section~\ref{sec:state-of-the-art}).

We remark that, while in this work we focus on the problem of computing pure Nash equilibria and we compare ourselves against the best pure Nash equilibrium, subsequent work has shown that there is at most a constant gap between the principal's utility under pure Nash equilibria and under more general equilibrium concepts such as mixed Nash equilibria and (coarse) correlated equilibria \citep{DuettingEFK25Lifting}. Therefore, all our approximation guarantees extend to guarantees against these stronger benchmarks. 

\begin{table}[t]
\centering
\fontsize{10pt}{12pt}
\selectfont
\scalebox{0.9}{
\begin{tabular}{|
>{\columncolor[HTML]{C0C0C0}}c |c
>{\columncolor[HTML]{EFEFEF}}c |
>{\columncolor[HTML]{EFEFEF}}c 
>{\columncolor[HTML]{FFFFC7}}c |}
\hline
\textbf{\begin{tabular}[c]{@{}c@{}}Multiple\\ agents\end{tabular}}    & \multicolumn{2}{c|}{\cellcolor[HTML]{C0C0C0}\textbf{Value Oracle}}                                                                                                                                                          & \multicolumn{2}{c|}{\cellcolor[HTML]{C0C0C0}\textbf{Value and Demand Oracle}}                                                                                                                                     \\ \hline
\multicolumn{1}{|l|}{\cellcolor[HTML]{C0C0C0}}                        & \multicolumn{1}{c|}{\cellcolor[HTML]{C0C0C0}\textbf{\begin{tabular}[c]{@{}c@{}}Upper bound \\ (pos)\end{tabular}}} & \cellcolor[HTML]{C0C0C0}\textbf{\begin{tabular}[c]{@{}c@{}}Lower bound\\ (neg)\end{tabular}}           & \multicolumn{1}{c|}{\cellcolor[HTML]{C0C0C0}\textbf{\begin{tabular}[c]{@{}c@{}}Upper bound \\ (pos)\end{tabular}}} & \cellcolor[HTML]{C0C0C0}\textbf{\begin{tabular}[c]{@{}c@{}}Lower bound\\ (neg)\end{tabular}} \\ \hline
\textbf{Additive}                                                     & \multicolumn{1}{c|}{\cellcolor[HTML]{FFFFC7}\begin{tabular}[c]{@{}c@{}}FPTAS\\ Proposition~\ref{prop:fptas-for-additive}\end{tabular}}                & \begin{tabular}[c]{@{}c@{}}OPT is\\ NP-hard\end{tabular}                                               & \multicolumn{1}{c|}{\cellcolor[HTML]{EFEFEF}FPTAS}                                                                 & \begin{tabular}[c]{@{}c@{}}OPT is\\ NP-hard\\ Proposition~\ref{prop:hardness-for-additive}\end{tabular}                            \\ \hline
\textbf{GS}
%\textbf{\begin{tabular}[c]{@{}c@{}}Gross-\\ substitutes\end{tabular}}
& \multicolumn{1}{c|}{\cellcolor[HTML]{EFEFEF}\begin{tabular}[c]{@{}c@{}}Constant \\ approx\end{tabular}}            & \begin{tabular}[c]{@{}c@{}}OPT is\\ NP-hard\end{tabular}                                               & \multicolumn{1}{c|}{\cellcolor[HTML]{EFEFEF}\begin{tabular}[c]{@{}c@{}}Constant\\ approx\end{tabular}}             & \cellcolor[HTML]{EFEFEF}\begin{tabular}[c]{@{}c@{}}OPT is\\ NP-hard\end{tabular}             \\ \hline
%\textbf{Submodular}    
\textbf{\begin{tabular}[c]{@{}c@{}}Sub-\\ modular\end{tabular}}& \multicolumn{1}{c|}{\cellcolor[HTML]{FFFFC7}\begin{tabular}[c]{@{}c@{}}Constant\\ approx\\ Theorem~\ref{thm:xos}\end{tabular}}    & 
\multicolumn{1}{c|}{\cellcolor[HTML]{EFEFEF}\begin{tabular}[c]{@{}c@{}}No PTAS\\
%\cite{EzraFS24}\\
\end{tabular}}  
& \multicolumn{1}{c|}{\cellcolor[HTML]{EFEFEF}\begin{tabular}[c]{@{}c@{}}Constant\\ approx\end{tabular}}             & 
\cellcolor[HTML]{FFFFC7}\begin{tabular}[c]{@{}c@{}}No PTAS\\ 
Theorem~\ref{thm:upper}\end{tabular}  
\\ \hline
\textbf{XOS}                                                          & \multicolumn{1}{c|}{\cellcolor[HTML]{EFEFEF} $O(n)$-approx}                                                                      & \cellcolor[HTML]{FFFFC7}\begin{tabular}[c]{@{}c@{}}No better\\ than $\Omega(n^{1/6})$\\ \citep{EzraFS24}\end{tabular} & \multicolumn{1}{c|}{\cellcolor[HTML]{FFFFC7}\begin{tabular}[c]{@{}c@{}}Constant\\ approx\\ Theorem~\ref{thm:xos}\end{tabular}}    & {\cellcolor[HTML]{EFEFEF}\begin{tabular}[c]{@{}c@{}}No PTAS\end{tabular}}                                    \\ \hline
%\textbf{Subadditive}       
\textbf{\begin{tabular}[c]{@{}c@{}}Sub-\\ additive\end{tabular}}& 
\multicolumn{1}{c|}{\cellcolor[HTML]{FFFFC7}\begin{tabular}[c]{@{}c@{}}$O(n)$-approx\\(Remark~\ref{rem:subaddtive}) \end{tabular}}
% \multicolumn{1}{c|}{\cellcolor[HTML]{FFFFFF}\textbf{}}                            
& 
\begin{tabular}[c]{@{}c@{}}No better\\ than $\Omega(n^{1/6})$\end{tabular}                                  & \multicolumn{1}{c|}{\cellcolor[HTML]{EFEFEF} $O(n)$-approx}                                                                      & \begin{tabular}[c]{@{}c@{}}No better\\ than $\Omega(n^{1/2})$\\ Theorem~\ref{thm:subadditive-demand}\end{tabular}               \\ \hline \hline
\textbf{\begin{tabular}[c]{@{}c@{}}Super-\\ modular\end{tabular}}     & 
\multicolumn{1}{c|}{\cellcolor[HTML]{FFFFFF}}                                                                     & \begin{tabular}[c]{@{}c@{}}No constant\\ approx\end{tabular}                                           & \multicolumn{1}{c|}{\cellcolor[HTML]{FFFFFF}}                                                                              & \begin{tabular}[c]{@{}c@{}}No constant\\ approx\\(if P $\neq$ NP)\\ {\small{\citep{VuongDPP23}}}\end{tabular}                      \\ \hline
\end{tabular}
}
%\vspace*{10pt}
\caption{
This table presents the state-of-the-art approximation results for multi-agent contract settings. 
The left part presents results under access to value oracle, and the right part presents results under access to both value and demand oracles. For each one we present both upper bounds (positive results) and lower bounds (negative results) on the achievable approximation.
The rows represent different reward function classes.
Yellow cells give the results, whereas gray cells represent results derived from other cells (where positive results carry over north (to sub-classes) and east (from value oracle to value and demand oracle), and negative results carry over south and west). 
}
\label{tab:multi-agent}
%\vspace*{-20pt}
\end{table}

\subsection{Our Techniques}

In Section~\ref{sec:preliminaries}, we observe that the underlying optimization problem we need to solve is as follows. Given a cost $c_i \geq 0$ for each agent $i$ and a reward function 
$f: 2^\actions \to [0,1]$
%%$f: 2^\actions \to \mathbb{R}_{\geq 0}$ 
our goal is to maximize the function 
$g: 2^\actions \to \{-\infty\} \cup (-\infty,1]$
%%$g: 2^\actions \to \mathbb{R}$ 
defined by
\[
g(S) = \left( 1 - \sum_{i \in S} \frac{c_i}{f(i \mid S \setminus \{i\})} \right) f(S),
\]
where, for every $i \in S$, we use $f(i \mid S \setminus \{i\}) = f(S) - f(S \setminus \{i\})$ to denote the marginal value of agent $i$ with respect to set $S$.

The function $g$ gives the optimal principal's utility from incentivizing a given set of agents $S \subseteq A$, since $\alpha_i = c_i / f(i \mid S\setminus\{i\})$ for $i \in S$ and $\alpha_i = 0$ for $i \in A \setminus S$ are the smallest payments under which the agents in $S$ prefer to exert effort (see Proposition~\ref{prop:opt-alphas}).

The difficulty is that, even in cases where $f$ is highly structured, 
such as submodular, XOS, or subadditive, 
this structure does not carry over to $g$. 
For example, even in cases where $f$ is non-negative monotone 
and submodular, the induced $g$ %\tke{
will usually not be monotone and take negative values. If $f$ is only XOS, $g$ may even not be subadditive.

The following example illustrates this loss of structure for a simple setting with two agents and a submodular reward function.

\begin{example}[Multiple agents with submodular $f$]
Consider a setting with two agents $\agents = \{1,2\}$, with costs $c_1=c_2=0.25$, and  with the following submodular  
reward function $f$: $f(\emptyset) = 0$, $f(\{1\})=f(\{2\})=0.5$, and $f(\{1,2\})=0.75$.

To find the optimal contract, we can compute $g(S)$ for all $S \subseteq A$. If we don't want any of the agents to exert effort,
the best contract is $\alpha_1=\alpha_2=0$, for a principal's utility of $g(\emptyset)=0$. 
If we want to incentivize only agent 1 (resp., agent 2) to exert effort, the best %the optimal 
contract is $\alpha_1=c_1/f(\{1\})=0.5$ and $\alpha_2=0$ (resp., $\alpha_1=0, \alpha_2=c_2/f(\{2\})=0.5$), for a principal's utility of $g(\{1\})=g(\{2\})=(1-\alpha_1)f(\{1\})=(1-\alpha_2)f(\{2\}) = 0.25$. Finally, 
if we want to incentivize
both agents to exert effort, the best contract is $\alpha_1=c_1/(f(\{1,2\})-f(\{2\}))=0.25/(0.75-0.5)=1$ and similarly $\alpha_2=1$, for a principal's utility of $g(\{1,2\})=(1-2)f(\{1,2\})<0$. 

Thus, the optimal contract is either $\alpha_1=0.5, \alpha_2=0$ or $\alpha_1=0, \alpha_2=0.5$, incentivizing
a single agent to exert effort.
Note that, while $f$ is monotone, non-negative,  and submodular, $g$ is non-monotone and takes negative values.
\end{example}

\paragraph{Constant-factor approximations for submodular and XOS}
To establish our algorithmic results, we show how to find a set $S$ such that $g(S) \geq \mathrm{constant} \cdot g(\sstar)$, where $\sstar$ is a set that maximizes $g(\sstar)$.
For the purpose of conveying the intuition behind our approach,  
assume in the following that $f(\sstar)$ is known to the algorithm 
(but not $\sstar$ itself)
and the reward contribution of a single agent is negligible, meaning that $f(\{i\})$ is tiny compared to $f(\sstar)$. In the technical sections, these assumptions will not be necessary.

A key ingredient in our proof is a pair of lemmas that draw connections between the value of the optimal solution and its marginals to the costs. Specifically, in Lemma~\ref{lem:sumci}, we show that $\sum_{i \in \sstar} \sqrt{c_i} \leq \sqrt{f(\sstar)}$. In Lemma~\ref{lem:stronger-condition}, we show that if for a set $S$ we have $f(i \mid S \setminus \{i\}) \geq \sqrt{2 c_i f(S)}$ for every $i \in S$, then $g(S) \geq \frac{1}{2} f(S)$.
These observations give rise to guiding the choice of our set $S$ by defining a ``price'' for each agent. 
Namely, let $p_i = \frac{1}{2} \sqrt{c_i f(\sstar)}$ for each agent $i$ and consider a \emph{demand set} $T$, i.e., a set that maximizes $f(T) - \sum_{i \in T}p_i$. 
We now have $f(T) \geq f(T) - \sum_{i \in T} p_i \geq f(\sstar) - \sum_{i \in \sstar} p_i \geq \frac{1}{2} f(\sstar)$, using the definition of a demand set and Lemma~\ref{lem:sumci}. 
By definition, the marginal contribution of every agent in the demand set must exceed its price, namely $f(i \mid T \setminus \{i\}) \geq p_i = \frac{1}{2} \sqrt{c_i f(\sstar)}$.  This condition looks almost like the one that is necessary to invoke Lemma~\ref{lem:stronger-condition}. However, note that we only have a lower bound on $f(T)$, no upper bound. Therefore possibly $f(T)$ is much larger than $f(\sstar)$. 

To deal with this, we establish a novel scaling property of XOS functions (Lemma~\ref{lem:xosscaling}). 
Our lemma shows that from every set $T$ one can remove 
elements so that its value reaches essentially any level, while the marginals of the remaining elements are kept sufficiently high with respect to their original marginals.
Namely, for every set $T$ and every $\downscaling < f(T)$, one can compute a subset $U \subseteq T$ such that 
$\frac{1}{2} \downscaling \leq f(U) \le \downscaling$ and $f(i \mid U \setminus \{ i \}) \geq \frac{1}{2} f(i \mid T \setminus \{ i \})$.
While this property is not too surprising for submodular functions, for XOS functions this is far from obvious, given the apparent lack of control over marginals, and may be of independent interest. Setting $\downscaling = \frac{1}{32}  f(\sstar)$, 
we can invoke Lemma~\ref{lem:stronger-condition} because with this choice 
$f(i \mid U \setminus\{i\}) \geq \frac{1}{2} f(i \mid T \setminus \{i\}) \geq \frac{1}{2} p_i = \frac{1}{4} \sqrt{c_i f(\sstar)} \geq \sqrt{2 c_i f(U)}$
and conclude that 
$g(U) \geq \frac{1}{2} f(U) \geq \frac{1}{128} f(\sstar) \geq \frac{1}{128} g(\sstar)$.

\paragraph{Inapproximability result for submodular}
Our impossibility proof for submodular functions under value and demand queries takes a different approach from previous works, particularly the hardness result of \citet{DuettingEFK23} for XOS functions with value and demand oracles and the hardness result of \citet{EzraFS24} for XOS functions with value oracles only.

In \citet{DuettingEFK23}, the key idea is to hide a single ``good'' set of agents with high value that cannot be efficiently learned using value and demand queries. Their construction ensures that all subsets of this set appear non-attractive, relying heavily on the non-monotonicity of XOS marginals. Since submodular functions do not exhibit this non-monotonicity, their approach does not extend to our setting.

The hardness result of \citet{EzraFS24}, in contrast, hides a small set of ``good'' agents with high value and high marginals. A crucial property of their construction is that this hidden set has significantly higher value than a random set of the same size. While effective under value queries, this property allows the hidden set to be learned relatively easily via demand queries, making their approach unsuitable for our problem.

Thus, proving hardness for submodular functions with access to both value and demand oracles requires new techniques. To this end, our proof introduces several novel ideas.

Our hard instance consists of a small group of ``good'' agents (of size $k = n/5$) and a larger group of ``bad'' agents (of size $n-k = 4n/5$), where the optimal contract incentivizes all good agents. The instance is constructed such that achieving better than a 
$1.03$-approximation requires identifying a set of at most 
$k$ elements containing at least $k/2$ good agents.

We first show that this task is impossible using value queries. A query on a large set provides no useful information, while a query on a small set only reveals whether the intersection with the good agents is ``large.'' However, very few small sets contain many good agents. The optimal set size for learning (i.e., one that is not too large and not too small) is shown to be $k$, 
leading to $\binom{n}{k}$ potential sets. Among these, at most $\binom{k}{k/2} \cdot \binom{n}{k/2}$ contain sufficiently many good agents, which is exponentially smaller than $\binom{n}{k}$.

The final step is to show that, by the particular structure of our instance, demand queries offer no additional power over value queries. Stated formally, any demand query in this instance can be simulated using a polynomial number of value queries.
Thus, a hardness result for value queries extends to demand queries as well.

\paragraph{Inapproximability result for subadditive}
The impossibility result for subadditive reward functions is more easily described for a non-normalized reward function. The proof can be adjusted to use a normalized reward function by dividing both the reward function and all the costs by $f(\actions)$.

To show the impossibility result for subadditive reward functions,
we first construct a subadditive function $f$ and costs $c_i \geq 0$ with the following properties: 
The function $f$ is such that $g(S) = O(1)$ for $\lvert S \rvert < \sqrt{n}$ and $g(S) \leq 0$ if $\lvert S \rvert \geq \sqrt{n}$. That is, the optimal principal utility is constant and any attempt to incentivize more than $\sqrt{n}$ agents would result in a negative (or zero) utility. 

We then slightly modify $f$ by choosing a random set $T^\star$ of size $\frac{n}{2}+1$, and increasing its value. Monotonicity and subadditivity of $f$ are preserved by this change. This change significantly increases the principal's utility from the set $T^\star$ to $g(T^\star) = \Omega(\sqrt{n})$. At the same time, it does not increase the utility from any other set, so that, not only is $T^\star$ the unique optimal set, but it is also the only one that approximates it well. The principal's problem thus boils down to finding $T^\star$. 

It remains to show that $T^\star$ cannot be found by a polynomial number of value or demand queries. 
Our argument for this exploits the fact that the modified function is almost identical to the original (symmetric) one to show 
that every value or demand
query reveals information only on a small set of candidates for $T^\star$.  
This then implies that
every algorithm that uses only a polynomial number of value and demand queries cannot guess $T^\star$ with high enough probability, resulting in an $\Omega(\sqrt{n})$ approximation to the principal's optimal utility.

A remarkable feature of this impossibility result is that it applies to a subadditive function that is constant-factor close to a submodular one (Observation~\ref{obs:subadditive-close}). This stands in contrast to prior work, e.g., on combinatorial auctions, where approximation guarantees tend to degrade more gracefully.

\subsection{Related Work}\label{sec:related-work}

\paragraph{Optimizing the effort of others}
Our work is part of an emerging frontier in Algorithmic Game Theory  on optimizing the effort of others  (see, e.g., the STOC 2022 TheoryFest workshop with the same title). This includes work on algorithmic contract design \citep[e.g.,][]{BabaioffFNW12,DuttingRT19,DuttingRT21}, strategic classification \citep[e.g.,][]{KleinbergR19,BechavodPWZ22}, optimal scoring rule design \citep[e.g.,][]{ChenY21,LiHSW22}, and delegation \citep[e.g.,][]{KleinbergK18,BechtelDP22}.

\paragraph{Combinatorial optimization and auctions} 
A number of fundamental papers explore %explored 
combinatorial optimization problems with ``complement-free'' set functions. In a landmark paper, \citet{Feige09} gives constant-factor approximation algorithms for the welfare maximization problem in combinatorial auctions with submodular, XOS, and subadditive  bidders. An exciting line of work seeks to understand whether it is possible to match these bounds with truthful mechanisms, with the current state of the art being polyloglog approximations \citep{Dobzinski21,AssadiKS21}. Complement-free valuations also play a crucial role in combinatorial auctions with item bidding, where subadditive valuations enable constant-factor Price of Anarchy bounds  \citep{ChristodoulouKS16,BhawalkarR11,SyrgkanisT13,FeldmanFGL13}; and in the prophet inequalities/posted-price literature, where constant-factor approximations are known for XOS and subadditive valuations \citep{FeldmanGL15,DuttingFKL20,CorreaC23}. While the constant-factor approximation for XOS is via static item pricing \citep{FeldmanGL15}, the state of the art for such algorithms for subadditive is a loglog approximation \citep{DuttingFKL20}. There are also polynomial-time constant-factor approximation results for truthful revenue maximization with unit-demand bidders \citep{ChawlaHMS10}, additive bidders \citep{Yao15}, and XOS bidders \citep{CaiOZ22}.

\paragraph{Computational approaches to contracts} 
The current rise in interest in algorithmic approaches to contracts is motivated by the fact that more and more of the classic applications of contract theory are moving online, growing in scale, and happening in data-rich environments. 
We refer the reader to the recent survey of \citet{DuettingFT24} for a comprehensive overview of this literature.

Most relevant to us is the literature on \emph{combinatorial contracts}, which can be divided into three categories, depending on whether it concerns exponentially large outcome spaces \citep{DuttingRT21}, taking sets of  actions \citep{DuttingEFK21}, or settings with multiple agents \citep{BabaioffFNW12,BabaioffFN09,BabaioffFN10,EmekF12}.
The vast majority of research on combinatorial contracts has focused on the latter two dimensions (multiple actions and multiple agents) and is surveyed in detail in \citep{Feldman25}.

\citet{DuttingEFK21} study a 
single-principal single-agent setting, where 
the agent can take any subset of $n$ actions. The main result is a polynomial-time (in $n$) algorithm for gross-substitutes reward functions. They also show NP-hardness for general submodular reward functions (with value oracle access). 
While both \citep{DuttingEFK21} 
and the present paper deal with a reward function that maps any subset of $n$ actions to some expected 
reward, the induced optimization problems are fundamentally different: In the model of \citet{DuttingEFK21} a single agent chooses a subset of $n$ actions. 
A contract is defined by a single parameter $\alpha$, and the agent chooses a set of actions that is better than any other set of actions. In particular, not every set of actions can be incentivized. 
In the model we consider in this work, 
in contrast, each one of the $n$ agents makes a binary choice over actions (exert effort or not), thus an action profile corresponds to a subset of the $n$ agents. 
A contract is now defined by a vector $(\alpha_1,\ldots,\alpha_n)$. Typically, every set of agents can be incentivized, so there are exponentially many feasible solutions. The challenge is to find a feasible solution of high value.

\citet{BabaioffFNW12} and \citet{EmekF12} study the same model studied here---$n$ agents with binary actions.  
They assume that each agent succeeds in his individual task with a certain probability, depending on whether he exerts effort or not; 
then a Boolean function maps individual successes and failures to a success or failure of the project.
\citet{BabaioffFNW12} show that for Boolean functions represented by %general 
read-once networks the 
optimal contract problem is \#P-complete, and give a polynomial-time algorithm for AND networks. \citet{EmekF12} show that for OR networks (a special case of submodular) the problem is NP-hard, and give a FPTAS for this problem.
Our work significantly expands the landscape of this model, by considering general submodular, XOS, and subadditive reward~functions.

\subsection{Follow-Up Work}\label{sec:state-of-the-art}

Since the publication of the conference version of this paper, several studies have further mapped out the computational landscape of multi-agent contract settings. 
Specifically, \citet{EzraFS24} show that, for XOS reward functions, no algorithm that makes polynomially-many value queries can approximate the optimal contract (with high probability) to within a factor better than
$\Omega(n^{1/6})$.
In addition, \citet{VuongDPP23} show that, for supermodular reward functions, the optimal contract problem admits no polynomial-time constant-factor approximation algorithm nor an additive FPTAS, even with access to both value and demand queries (unless $\text{P} = \text{NP}$). 
In another recent follow-up work, \citet{QongEtAl24} consider a variation of our model in which the principal is constrained to contract with at most $k$ agents, and present a constant-factor approximation algorithm for XOS rewards with access to value and demand oracles.

Several follow-up studies have explored multi-agent contract settings, in which agents have richer action spaces. \citet{DuettingEFK25} consider a generalization of the model studied in this work, in which each agent can take any subset of a set of actions. Their main result is a constant-factor approximation for submodular rewards, when the algorithm has access to both value and demand oracles. This result is (asymptotically) best possible by the impossibility result that we give in Theorem~\ref{thm:upper}. \citet{DuettingEFK25Lifting} explore multi-agent settings with combinatorial action spaces under richer equilibrium notions, such as mixed Nash equilibria and (coarse) correlated equilibria. A key finding of their work is that for XOS rewards, there is at most a constant gap between the principal's utility under the best pure Nash equilibrium and the best (coarse) correlated equilibrium. \citet{DasarathaGS24} study a multi-agent contract setting in which agents have a continuum of actions, and characterize optimal payment schemes for such settings. \citet{CacciamaniBC024} also consider settings in which agents can take more than one action. They introduce the concept of a randomized contract, and show that an optimal randomized contract can be computed in time polynomial in the (explicit representation) size of the problem.

\citet{CastiglioniM023} and \citet{GoelCaruthersWade24} study an incomparable multi-agent model in which the agents' actions lead to observable individual outcomes, and payments can depend on the individual outcomes rather than on an aggregated global outcome as in our model. \citet{CastiglioniM023} explore approximation algorithms for IR-supermodular rewards and DR-submodular rewards. \citet{GoelCaruthersWade24} study a budgeted setting, where the principal seeks to maximize welfare subject to a budget constraint, and show that optimal contracts for this problem are so-called Luce contracts.
A different budgeted model is studied in \citep{FeldmanTPS25,Aharoni0T25,FeldmanGPS26budgets}, where the goal is to maximize an objective function—such as welfare, the principal’s utility, or reward—under a budget constraint on total payments to agents.

\citet{alon2025multi} have extended our model to settings with many projects, where the principal needs to
partition the agents among the projects, and within each project, the principal incentivizes the
agents through a contract.

\subsection{Organization}

We formally set up the model 
in Section~\ref{sec:preliminaries}. Our main positive result---the constant-factor approximation guarantees for submodular and XOS reward functions---% 
appear in Section~\ref{sec:sm-and-xos}. 
We prove the impossibility results for submodular and subadditive reward functions in Section~\ref{sec:beyond-submodular}. We discuss how to extend our model and results to general outcome spaces in Appendix~\ref{app:general-outcomes} and Appendix~\ref{app:max-min}. Our results for additive reward functions 
appear in Appendix~\ref{app:additive}.

%% PRELIMINARIES
\section{Preliminaries}
\label{sec:preliminaries}

\paragraph{The multi-agent hidden-action principal-agent setting}
In our model, a single principal interacts with a set of $n$ agents $\agents = [n] = \{1, \ldots, n\}$. We focus on the following combinatorial binary-action and binary-outcome 
setup: Agents can either take action (exert effort) or not. Taking action comes with a cost $c_i \in \reals_{\geq 0}$ for agent $i$. There are two possible outcomes $\outcomespace = \{0,1\}$, which we interpret as failure ($\outcome = 0$) and success ($\omega = 1$), respectively. Each outcome $\outcome \in \outcomespace$ is associated with a reward $r(\omega) \geq 0$ for the principal. We assume that the principal's reward for success is normalized so that $r(1) = 1$, and that the reward for failure is $r(0) = 0$.

A \emph{success probability function} $f: 2^\agents \rightarrow [0,1]$ maps each set of agents $S \subseteq \agents$ that exert effort to a probability $f(S)$ of success. Note that since we normalize the principal's reward for success to $1$, $f(S)$ is also the expected reward of the principal. For this reason, we also refer to $f$ as the \emph{reward function}.
We generally assume that the reward function $f$ is monotone non-decreasing so that for any two sets of agents $S, S'$ with $S \subseteq S' \subseteq \agents$ it holds that $f(S) \leq f(S')$. We also assume that the reward function $f$ is normalized in the sense that $f(\emptyset) = 0$. We write $f(i \mid S) := f(S \cup \{i\}) - f(S)$ for the marginal contribution of $i \in \agents$ to $S \subseteq \agents$.

An important feature of the model is that the agents' actions are \emph{hidden}. That is, the principal cannot directly observe the actions chosen by the agents, only the outcome, which is determined stochastically based on the actions.

\paragraph{Moral hazard and contracts}

A main challenge in our problem is what economists refer to as \emph{moral hazard}: In and by itself the agents have no interest in exerting effort, as exerting effort is costly and the benefits from that effort go to the principal.

To incentivize the agents to exert effort, the principal designs a \emph{contract} $t: \outcomespace \rightarrow \reals_{\geq 0}^n$, specifying a non-negative payment $t_i(\omega)$ to agent $i$ for each possible outcome $\omega \in \Omega$. Note that in this general form, a contract may specify positive payments $t_i(1), t_i(0)$ for both success and failure. The requirement that payments should be non-negative is a standard assumption in the contracts literature, known as \emph{limited liability}.
 
A particularly popular class of contracts in practice are so-called linear (or commission-based) contracts.
A \emph{linear contract} is defined by a vector $\alpha = (\alpha_1, \ldots, \alpha_n) \in \reals^n_{\geq 0}$, and sets $t_i(\omega) = \alpha_i r(\omega)$; namely, $t_i(1)= \alpha_i$ and $t_i(0) = 0$.
Thus, when $S \subseteq \agents$ is the set of agents that exert effort, the expected transfer to agent $i$ is $\alpha_i f(S)$. Note that this only depends on the expected reward $f(S)$, and not the details of the distribution.

\paragraph{Utility functions and equilibria}

Consider contract $t$, and let $S$ be the set of agents that exert effort. Let $T_i(S,t)$ denote the expected payment to agent $i\in A$, i.e., $T_i(S,t) = f(S) t_i(1) + (1-f(S)) t_i(0)$, and let $T(S,t)$ denote the expected payment to all agents, i.e., $T(S,t) = \sum_{i \in \agents} T_i(S,t)$. Then the \emph{principal's utility} is given by $u_P(S,t) = f(S) - T(S,t)$; while agent $i$'s utility is given by $u_i(S,t) = T_i(S,t) - \indicator{i \in S} \cdot c_i$, where $\indicator{i \in S} = 1$ if $i \in S$ and $\indicator{i \in S} = 0$. 

Note that this means that agent $i$ is paid $T_i(S,t)$
irrespective of whether $i \in S$, while the cost $c_i$ is only incurred when $i \in S$ (i.e., agent $i$ exerts effort). 

Each contract $t$ induces a game among the agents, and we are interested in the (pure) Nash equilibria of that game. Formally, we say that contract $t$ \emph{incentivizes} the set of agents $S \subseteq A$ to exert effort if 
\begin{align*}
u_i(S,t) &\geq u_i(S \setminus\{i\},t) && \text{for all $i \in S$, and}\\
u_i(S,t) &\geq u_i(S\cup \{i\},t) && \text{for all $i \not\in S$.}
\end{align*}

In general games, pure Nash equilibria need not exist. However, in follow-up work,    \citet{VuongDPP23,DuettingEFK25} showed that for each linear contract $\alpha$, the induced game is a potential game, and hence every $\alpha$ admits at least one pure Nash equilibrium.

In the binary-action binary-outcome setting that we consider here, we show a much stronger property, namely that each non-redundant set of agents (i.e., sets of agents that do not contain agents with zero marginal contribution but positive cost) can be incentivized. Moreover, the optimal contract to incentivize any such set is a linear contract.

\begin{proposition}
\label{prop:binary-outcome}
\label{prop:opt-alphas}
   (a) A set of agents $S \subseteq \agents$ can be incentivized by some contract if and only if no agent $i \in S$ simultaneously has $f(i \mid S \setminus \{i\}) = 0$ and $c_i > 0$. 
   If a set of agents $S \subseteq \agents$ can be incentivized by some contract then it can also be incentivized by a linear contract.
   (b) Among all contracts that incentivize set $S \subseteq \agents$ the one that maximizes the principal's utility is the following linear contract:
    \begin{align*}
    &\alpha_i = \frac{c_i}{f(S) - f(S \setminus \{i\})} = \frac{c_i}{f(i \mid S \setminus \{i\})} &&\text{for all $i \in S$ s.t. $f(i \mid S \setminus\{i\}) > 0$, and}\\
    &\alpha_i = 0 &&\text{otherwise.}
    \end{align*}
\end{proposition}

\begin{proof}
Consider a general (not necessarily linear) contract $t$. Let's write $\alpha_i = t_i(1)$ and $\beta_i = t_i(0)$. A set of agents $S$  is incentivized by contract $t$ if and only if
\begin{align}
    &\alpha_i f(S) + \beta_i (1-f(S)) - c_i \geq \alpha_i f(S \setminus \{i\}) + \beta_i (1-f(S \setminus \{i\}))
    &&\text{for all $i \in S$, and}  \label{eq:ic-one}\\
    &\alpha_i f(S) + \beta_i (1-f(S))\geq \alpha_i f(S \cup \{i\}) + \beta_i (1-f(S \cup \{i\})) - c_i 
    &&\text{for all $i \not\in S$}. \label{eq:ic-two}
\end{align}

For claim (a) observe that if there exists an agent $i \in S$ such that $f(i \mid S \setminus\{i\}) = 0$ and $c_i > 0$ then no $t$ (resp.~$\alpha_i,\beta_i$) can satisfy 
Inequality~\eqref{eq:ic-one}
because the inequality is equivalent to $- c_i \ge 0$. 
On the other hand, if there is no such agent, then for each agent $i \in S$ either $f(i \mid S \setminus\{i\}) > 0$ or $c_i = 0$. In this case, for $i \in S$ with $f(i \mid S \setminus\{i\}) > 0$  
Inequality~\eqref{eq:ic-one} 
can be satisfied by choosing $\beta_i = 0$ and $\alpha_i$ large enough; while for any $i \in S$ with $c_i = 0$ or any $i \not \in S$ the respective Inequalities~\eqref{eq:ic-one} and \eqref{eq:ic-two} can be satisfied by choosing $\alpha_i = \beta_i = 0$.

For claim (b) observe that the principal's utility $u_P(S,t)$ is non-increasing in all $\alpha_i$ and $\beta_i$. So, for $i \in S$ such that $f(i \mid S \setminus \{i\}) > 0$, it is optimal to set $\beta_i = 0$ and then the lower bound on $\alpha_i$ in Inequality~\eqref{eq:ic-one} becomes  $\alpha_i \geq c_i/f(i \mid S \setminus \{i\})$. On the other hand, for $i \in S$ such that $c_i = 0$ as well as any $i \not\in S$ we minimize payments by setting $\alpha_i = \beta_i = 0$.
\end{proof}

\paragraph{The contract design problem}

We consider the problem of maximizing the principal's utility, when the principal can propose both the contract $t$ and the set of agents $S \subseteq A$ that exert effort.

By Proposition~\ref{prop:opt-alphas}, the principal can restrict attention to linear contracts, and 
the problem reduces to a purely combinatorial problem:
\[
\max_{S \in 2^\agents} g(S) \quad\text{where}\quad g(S) := \left(1-\sum_{i \in S} \frac{c_i}{f(i \mid S \setminus \{i\})} \right) f(S),
\]
and we let $c_i/f(i \mid S \setminus \{i\}) = 0$ when $c_i = 0$ and $f(i \mid S \setminus \{i\}) = 0$, and $c_i/f(i \mid S \setminus \{i\}) = \infty$ when $c_i > 0$ and $f(i \mid S \setminus \{i\}) = 0$.

Let $\sstar$ be the optimal choice of agents, i.e., the set that maximizes $g$. We say that $S$ is a \emph{$\gamma$-approximation} (where $\gamma \geq 1$) if $\gamma \cdot g(S) \geq g(\sstar)$.

\paragraph{Classes of reward functions $f$} 

We focus on  reward functions $f: 2^\agents \to [0,1]$ that belong to one of the following classes of complement-free set functions \citep{LehmannLN06}: 

\begin{itemize}
    \item Set function $f$ is \emph{additive} if there exist values $v_1, \ldots, v_n \in \reals_{\geq 0}$ such that $f(S)=\sum_{i\in S} v_i$. 
    \item Set function $f$
    is \emph{submodular} if for any two sets $S, S' \subseteq \agents$ with $S \subseteq S'$ and any $i \in \agents$ it holds that $f(i \mid S) \geq f(i \mid S')$.
    \item Set function $f$
    is \emph{XOS} if there exists a collection of additive functions 
    $\{a_i: 2^\agents \rightarrow \reals_{\geq 0}\}_{i = 1, \ldots, k}$ 
    such that for each set $S \subseteq \agents$ it holds that $f(S) = \max_{i = 1, \ldots, k} a_i(S) = \max_{i = 1, \ldots, k} \sum_{j \in S} a_{ij}$. 
    Given an XOS function $f$ and a set $S \subseteq \agents$, there exists an additive function $a_i$ such that $a_i(S)=f(S)$ and $a_i(T) \leq f(T)$ for all $T \subseteq \agents$; this function is called the additive supporting function of $f$ on $S$.
    
    \item Set function $f$
    is \emph{subadditive} if for any two sets $S, S' \subseteq \agents$ it holds that $f(S) + f(S') \geq f(S \cup S')$.
\end{itemize}

It is well known that $\text{submodular} \subset \text{XOS} \subset \text{subadditive}$ and all containment relations are strict \citep{LehmannLN06}.

\paragraph{Primitives for accessing $f$}
As is common in the combinatorial optimization literature involving set functions, we assume two primitives for accessing $f$: 
\begin{itemize}
    \item A \emph{value oracle} for $f$ is given $S \in 2^\agents$ and returns $f(S)$.
    \item A \emph{demand oracle} for $f$ is given a vector of prices $p = (p_1, \ldots, p_n) \in \reals^n_{\geq 0}$ 
    and returns a set $S \in 2^\agents$ that maximizes $f(S) - \sum_{j \in S} p_j$.
\end{itemize}

Both value and demand queries are considered standard in combinatorial optimization problems over set functions. 
In markets for goods (e.g., combinatorial auctions), a demand query 
returns an optimal
bundle to purchase given item prices. 
In our combinatorial contracting setting, a demand query 
returns an optimal set of actions 
the principal would choose if she executed the actions herself.
Demand oracles have proven useful in prior work on combinatorial contracts; see \citep{DuttingEFK21}.

\paragraph{Auxiliary lemma for XOS functions}

For our main result on XOS reward functions we need the following lemma, which generalizes a well-known property of submodular functions to XOS functions.

\begin{lemma}\label{lem:xos-sum-of-marginals-extended}[Cf.~\citet[][Lemma 1]{fu2012conditional}]\label{lem:xos-sum-of-marginals}
For any XOS function $f$ and any sets $S\subseteq T$,
\[
\sum_{i \in S} f(i \mid T \setminus\{i\}) \leq f(S).
\]
\end{lemma}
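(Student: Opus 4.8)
The plan is to fix, for the larger set $T$, an additive supporting function $a$ of the XOS function $f$ on $T$, so that $a(T) = f(T)$ and $a(R) \le f(R)$ for every $R \subseteq \agents$. The key observation is that for each $i \in S \subseteq T$ the marginal $f(i \mid T \setminus \{i\})$ can be bounded above by the single coordinate $a_i$ of this supporting function. Indeed, $f(i \mid T \setminus \{i\}) = f(T) - f(T \setminus \{i\}) = a(T) - f(T \setminus \{i\})$, and since $a$ is a supporting function we have $f(T \setminus \{i\}) \ge a(T \setminus \{i\}) = a(T) - a_i$. Subtracting gives $f(i \mid T \setminus \{i\}) \le a_i$ for every $i \in T$, and in particular for every $i \in S$.

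Summing this inequality over $i \in S$ yields $\sum_{i \in S} f(i \mid T \setminus \{i\}) \le \sum_{i \in S} a_i = a(S)$, where the last equality is just additivity of $a$. Finally, $a(S) \le f(S)$ because $a$ is a supporting function of $f$ (it lower-bounds $f$ on every subset, in particular on $S$). Chaining these gives exactly $\sum_{i \in S} f(i \mid T \setminus \{i\}) \le f(S)$, which is the claim.

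I do not anticipate a serious obstacle here: the only thing to be slightly careful about is invoking the existence of the additive supporting function on $T$ (guaranteed by the definition of XOS recalled in the preliminaries) and making sure the chain of inequalities uses it in the two places above — once to pass from the marginal to the coordinate $a_i$ (via the lower bound on $f(T \setminus \{i\})$), and once to pass from $a(S)$ back to $f(S)$. It is worth noting explicitly that this argument is exactly the XOS analogue of the standard submodular fact $\sum_{i \in S} f(i \mid T \setminus \{i\}) \le \sum_{i \in S} f(i \mid T \setminus S) \le f(S)$ (which uses diminishing marginals rather than a supporting function), so the statement and its role as a drop-in generalization should feel natural to the reader.
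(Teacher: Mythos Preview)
Your proposal is correct and follows essentially the same approach as the paper: fix an additive supporting function $a$ for $f$ on $T$, bound each marginal $f(i \mid T \setminus \{i\}) \le a_i$ via $f(T\setminus\{i\}) \ge a(T\setminus\{i\})$, then sum over $i \in S$ and use $a(S) \le f(S)$. The argument and its steps coincide with the paper's proof.
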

\begin{proof}
Let $a$ be an additive supporting 
function for $f$ on $T$ so that $a(T) = f(T)$ and $a(T') \leq f(T')$ for all $T' \subseteq T$. Then, for any $i \in T$, it holds that
\[
f(i \mid T \setminus\{i\}) = f(T) - f(T \setminus\{i\}) \leq  a(T) - a(T \setminus\{i\}) = a(\{i\}).
\]
Summing over all $i \in S$ we obtain 
\[
\sum_{i \in S} f(i \mid T \setminus\{i\}) \leq \sum_{i \in S} a(\{i\}) = a(S) \leq  f(S),
\]
as claimed.
\end{proof}

\begin{remark}[Extension to more general equilibrium notions]
We note that, while it is natural to focus on pure Nash equilibria, the principal may strictly benefit from inducing a mixed rather than a pure Nash equilibrium, even for submodular rewards \citep[][Example 3.1]{BabaioffFN10}. 
However, recent work by \citet{DuettingEFK25Lifting} shows that for submodular and XOS rewards, the principal loses at most a constant factor by restricting attention to pure Nash equilibria. Remarkably, this guarantee holds not only relative to mixed Nash equilibria, but even relative to the more permissive notion of coarse correlated equilibria.
\end{remark}

%% CONSTANT FACTOR FOR SUBMODULAR AND XOS
\section{Constant Factor for Submodular and XOS}
\label{sec:sm-and-xos}

In this section, we present our main positive results: polynomial-time constant-factor approximation algorithms 
for submodular and XOS multi-agent combinatorial contracts.

\begin{theorem}\label{thm:xos}
The following hold:
\begin{enumerate}
\item For submodular $f$ and $n$ agents, it is possible to compute an $O(1)$-approximation to the optimal contract in polynomial time using value queries.
\item For XOS $f$ and $n$ agents, 
it is possible to compute an $O(1)$-approximation to the optimal contract  in polynomial time using value and demand queries. 
\end{enumerate}
\end{theorem}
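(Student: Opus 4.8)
The plan is to follow the blueprint sketched in the introduction, turning the informal argument into an algorithm that does not assume knowledge of $f(\sstar)$ and does not assume single agents are negligible. First I would establish the two ``accounting'' lemmas: Lemma~\ref{lem:sumci}, that $\sum_{i \in \sstar} \sqrt{c_i} \le \sqrt{f(\sstar)}$, and Lemma~\ref{lem:stronger-condition}, that whenever $f(i \mid S \setminus \{i\}) \ge \sqrt{2 c_i f(S)}$ for all $i \in S$ we get $g(S) \ge \tfrac12 f(S)$. The first follows because $g(\sstar) \ge 0$ forces $\sum_{i \in \sstar} c_i / f(i \mid \sstar \setminus \{i\}) \le 1$, and then combining this with $\sum_{i\in\sstar} f(i \mid \sstar\setminus\{i\}) \le f(\sstar)$ (Lemma~\ref{lem:xos-sum-of-marginals}, valid since submodular $\subset$ XOS) via Cauchy–Schwarz: $\sum_i \sqrt{c_i} = \sum_i \sqrt{c_i/m_i}\sqrt{m_i} \le \sqrt{\sum_i c_i/m_i}\sqrt{\sum_i m_i} \le \sqrt{f(\sstar)}$ where $m_i = f(i\mid\sstar\setminus\{i\})$. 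The second is a one-line manipulation: $\sum_{i\in S} c_i/f(i\mid S\setminus\{i\}) \le \sum_{i\in S} c_i/\sqrt{2c_i f(S)} = \tfrac{1}{\sqrt{2f(S)}}\sum_i\sqrt{c_i/2}$, and then bounding $\sum_{i\in S}\sqrt{c_i}$ — here one needs the analogue of Lemma~\ref{lem:sumci} applied to $S$ itself, which holds under the hypothesis, giving the factor $\tfrac12$.

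Next I would prove the scaling lemma for XOS functions: for any $T$ and any threshold $\downscaling < f(T)$, there is a computable (via value queries, given a demand-derived starting point, or just value queries in the submodular case) subset $U \subseteq T$ with $\tfrac12\downscaling \le f(U) \le \downscaling$ and $f(i \mid U \setminus \{i\}) \ge \tfrac12 f(i \mid T \setminus \{i\})$ for all $i \in U$. The natural approach is greedy removal: repeatedly discard from $T$ an element whose \emph{current} marginal into the rest is small relative to its original marginal in $T$, or order elements and peel a prefix. The XOS case is the delicate one — I would fix an additive supporting function $a$ of $f$ on $T$, sort agents by $a$-value, and remove a carefully chosen suffix/prefix so that the retained set still has $f$-value sandwiched in $[\tfrac12\downscaling, \downscaling]$ while the supporting function controls marginals from below. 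The factor-of-$2$ slack in both the value window and the marginal bound is what gives room for this.

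Then I would assemble the algorithm. Since $f(\sstar)$ is unknown, I would guess it by trying all powers of $(1+\epsilon)$ — or more crudely, since only a constant loss is needed, all values $f(i \mid S)$ arising as a single-agent marginal are enough to seed a polynomial set of candidate scales; for each guessed value $v$ of $f(\sstar)$, set prices $p_i = \tfrac12\sqrt{c_i v}$, obtain the demand set $T$ via the demand oracle (XOS) or via a direct greedy/value-oracle computation (submodular), verify $f(T) \ge \tfrac12 v$, apply the scaling lemma with $\downscaling = \tfrac{1}{32} v$ to get $U$, and output the best $U$ over all guesses together with its optimal $\alpha$. The chain of inequalities in the introduction then shows that for the correct guess, $g(U) \ge \tfrac12 f(U) \ge \tfrac{1}{64}v = \Omega(1)\cdot g(\sstar)$. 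The single-agent non-negligibility issue I would handle by also separately trying every singleton $\{i\}$ as a candidate contract and taking the max — the optimal utility is always at least that of the best single agent up to the approximation, so dropping ``large'' agents from consideration in the main argument costs only a constant.

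The main obstacle I expect is the XOS scaling lemma: submodularity makes ``removing elements keeps marginals from collapsing'' almost automatic, but for XOS the supporting additive function changes as we shrink the set, so controlling $f(i \mid U \setminus \{i\})$ from below by $\tfrac12 f(i \mid T \setminus \{i\})$ requires pinning everything to the \emph{single} supporting function of $f$ on $T$ and arguing that a prefix (in the right order) simultaneously loses little value and retains large marginals. A secondary subtlety is making the ``guess $f(\sstar)$'' step genuinely polynomial and lossless up to constants, and ensuring the demand oracle (rather than value oracle) suffices for the XOS computations of $T$; I would be careful that every step after the demand query uses only value queries, which the scaling lemma and the verification need.
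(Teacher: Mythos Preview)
Your overall architecture matches the paper closely: the two accounting lemmas, the demand-query-plus-scaling pipeline, the geometric guessing of $f(\sstar)$, and the singleton fallback are all there. Your derivations of Lemma~\ref{lem:sumci} and Lemma~\ref{lem:stronger-condition} are correct (the paper's route to the latter is a touch more direct: rewrite the hypothesis as $c_i/f(i\mid S\setminus\{i\}) \le \tfrac12 f(i\mid S\setminus\{i\})/f(S)$ and sum via Lemma~\ref{lem:xos-sum-of-marginals}, avoiding the detour through $\sum\sqrt{c_i}$).

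The genuine gap is your proposed proof of the XOS scaling lemma. Fixing the supporting additive function $a$ of $f$ on $T$ does \emph{not} control marginals in a subset $U$ from below. From $a$ you get $f(i\mid T\setminus\{i\}) \le a(\{i\})$ and $f(U) \ge a(U)$, but $f(i\mid U\setminus\{i\}) = f(U) - f(U\setminus\{i\})$ has no useful lower bound in terms of $a$, since $f(U\setminus\{i\})$ may far exceed $a(U\setminus\{i\})$ (the supporting function of $U\setminus\{i\}$ can be entirely different from $a$). So the sentence ``the supporting function controls marginals from below'' is exactly the step that fails. The paper's Algorithm~\ref{alg:scaling} takes a different route that does not touch supporting functions: it repeatedly peels from the current set the element minimizing the ratio $f(i\mid T_{t-1}\setminus\{i\})/f(i\mid T_0\setminus\{i\})$ of \emph{current} to \emph{original} marginal, and then selects a stopping time via a two-level criterion (first $j^\star$ with $f(T_{j^\star})\le\downscaling$, then $k^\star$ with $f(T_{k^\star})\le(1-\delta)f(T_{j^\star-1})$, then $t^\star\in\{j^\star,\ldots,k^\star\}$ maximizing that ratio). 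The key inequality $\delta_{t^\star}\ge\delta$ comes from telescoping $f(T_{j^\star-1})-f(T_{k^\star}) = \sum_t \delta_t \cdot f(i_t\mid T_0\setminus\{i_t\}) \le \delta_{t^\star}\cdot f(T_{j^\star-1})$, the last step using Lemma~\ref{lem:xos-sum-of-marginals}.

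A second, smaller gap: the scaling lemma can only guarantee $f(U)\le \downscaling + \max_{i\in T} f(\{i\})$, not $f(U)\le\downscaling$, because one removal can drop the value by up to $\max_i f(\{i\})$. This additive slack is precisely why the paper needs the decomposition Lemma~\ref{lem:onebigagent}, which bounds $g(\sstar)$ by $f(\sstar\cap\agents') + \max\{0,\max_i g(\{i\})\}$ where $\agents'=\{i:c_i/f(\{i\})\le\tfrac12\}$, and runs the entire price/scaling argument only over $\agents'$. Your ``also try every singleton'' is the right instinct, but without this decomposition the accounting does not close.
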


Recall that we use $g:2^\actions \to \{-\infty\} \cup (-\infty,1]$ 
to denote the  principal's utility as a function of the set of incentivized agents. I.e., 
    $g(S) := f(S) \left(1-\sum_{i\in S} \frac{c_i}{f(i\mid S\setminus\{i\})}\right)$. 
Let $\sstar$ be the optimal set of agents, i.e., the set that maximizes $g$. 

Below we present the full argument for submodular/XOS reward functions assuming value and demand oracle access to the reward function. 
The result to submodular reward functions with only value oracle access requires only small modifications and relies on known  algorithms for computing approximate demand sets \citep{SviridenkoVW17,Harshaw19a}. We defer the details of this extension to Appendix~\ref{app:extension-to-submodular}.

\subsection{Decomposing the Benchmark}

Our first lemma provides a useful decomposition of the benchmark by showing that  $g(\sstar)$ is upper bounded by the sum of $f(S^{\star} \cap \agents')$, where $\agents' = \{ i \in \agents \mid \frac{c_i}{f(\{i\})} \leq \frac{1}{2} \}$, and $\max_{i \in A} g(\{i\})$ --- the best contract for incentivizing a single agent.

This may look innocent, but is not, because---as we already observed earlier---generally none of the nice structural properties of $f$ (such as non-negativity, monotonicity, submodularity or being XOS)  carry over to $g$.

An important consequence of the lemma is that, since it's easy to find the best contract for incentivizing a single agent, we can focus on the non-trivial task of finding a contract %that approximates $f(\sstar \cap \actions')$
whose principal's utility is in $\Omega(f(\sstar \cap \actions'))$.

\begin{lemma}
\label{lem:onebigagent}
If $f$ is XOS (or, more generally, subadditive),  then 
\[
g(\sstar) \leq f(\sstar \cap \agents') + \max\{0,\max_{i \in \agents} g(\{i\})\}.
\]
\end{lemma}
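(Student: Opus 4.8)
The plan is to bound $g(\sstar)$ by splitting the agents of $\sstar$ into those that are ``cheap'' (in $\agents'$) and those that are ``expensive'' (the complement $\agents \setminus \agents'$). Recall $g(\sstar) = f(\sstar)\left(1 - \sum_{i \in \sstar} \frac{c_i}{f(i \mid \sstar \setminus\{i\})}\right)$. Write $B = \sstar \cap \agents'$ (the cheap agents) and $E = \sstar \setminus \agents'$ (the expensive ones). The first observation I would make is that we may assume $g(\sstar) > 0$, since otherwise the claimed inequality is trivial as the right-hand side is manifestly nonnegative; in particular the coefficient $\left(1 - \sum_{i \in \sstar} \frac{c_i}{f(i \mid \sstar \setminus\{i\})}\right)$ is in $(0,1]$, so $g(\sstar) \le f(\sstar)$.

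The key step handles the case where $\sstar$ contains at least one expensive agent, i.e. $E \neq \emptyset$. The idea is that a single expensive agent $i \in E$ already ``pays for'' the whole reward $f(\sstar)$ in the following sense: if $i \in \agents \setminus \agents'$ then $\frac{c_i}{f(\{i\})} > \frac12$. I would use subadditivity (or XOS) of $f$ together with monotonicity to relate $f(i \mid \sstar \setminus\{i\})$ to $f(\{i\})$: since $f(\sstar) \le f(\sstar\setminus\{i\}) + f(\{i\})$ we get $f(i \mid \sstar\setminus\{i\}) = f(\sstar) - f(\sstar\setminus\{i\}) \le f(\{i\})$. Hence the term contributed by $i$ in the sum satisfies $\frac{c_i}{f(i \mid \sstar\setminus\{i\})} \ge \frac{c_i}{f(\{i\})} > \frac12$. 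Dropping all other (nonnegative) terms in the sum, we get $g(\sstar) \le f(\sstar)\left(1 - \frac12\right) = \frac12 f(\sstar)$. Now I need to compare $\frac12 f(\sstar)$ with $g(\{i\})$ for a well-chosen single agent. Taking $i$ to be the expensive agent in $E$, one computes $g(\{i\}) = f(\{i\})\left(1 - \frac{c_i}{f(\{i\})}\right) = f(\{i\}) - c_i$; this may be negative, which is why the statement has the $\max\{0,\cdot\}$. The remaining arithmetic to push through is showing $\frac12 f(\sstar) \le f(\sstar\cap\agents') + \max\{0, \max_j g(\{j\})\}$; I would need to argue that the ``expensive part'' of $f(\sstar)$ — what you lose by restricting from $\sstar$ to $\sstar\cap\agents'$ — is itself at most the best single-agent contract, plausibly again via subadditivity applied to the single agent achieving the largest marginal, combined with the bound $c_i > \frac12 f(\{i\})$ controlling how much reward an expensive agent can carry.

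In the complementary (easy) case $E = \emptyset$, every agent of $\sstar$ lies in $\agents'$, so $\sstar \cap \agents' = \sstar$ and then $g(\sstar) \le f(\sstar) = f(\sstar \cap \agents')$, and we are done immediately. So the whole content is in the case $E \neq \emptyset$, and within that, in correctly accounting for the gap $f(\sstar) - f(\sstar\cap\agents')$.

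I expect the main obstacle to be exactly this last bookkeeping: showing that when $\sstar$ has an expensive agent, the loss from passing to $\sstar \cap \agents'$ together with a constant fraction of $f(\sstar)$ can be absorbed into a single-agent term. The subtlety is that $\sstar \cap \agents'$ could in principle be much smaller than $\sstar$, and subadditivity alone only gives $f(\sstar) \le f(\sstar\cap\agents') + f(E)$ with $f(E)$ potentially large; the resolution must exploit that each agent of $E$ carries cost more than half its standalone value, so the ``effective'' reward attributable to $E$ inside the utility $g$ is heavily discounted. Making that trade-off precise — likely by isolating one expensive agent, bounding its contribution to $g$ from below by a constant times $f(\sstar)$ via the marginal-versus-standalone inequality above, and simultaneously bounding $f(\{i\}) - c_i$ in terms of $f(\sstar) - f(\sstar\cap\agents')$ — is the crux.
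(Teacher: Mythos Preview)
Your proposal identifies the right case split and the right basic inequality $f(i\mid \sstar\setminus\{i\})\le f(\{i\})$ from subadditivity, but it misses the one observation that dissolves the ``crux'' you are worried about: there can be \emph{at most one} expensive agent in $\sstar$. You use a single $i\in E$ to deduce $\frac{c_i}{f(i\mid \sstar\setminus\{i\})}>\tfrac12$ and stop; if instead you sum this over \emph{all} $i\in E$ you get
\[
g(\sstar)\le f(\sstar)\Bigl(1-\sum_{i\in E}\tfrac{c_i}{f(\{i\})}\Bigr)\le f(\sstar)\Bigl(1-\tfrac{|E|}{2}\Bigr),
\]
which is $\le 0$ whenever $|E|\ge 2$. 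Since you already assumed $g(\sstar)>0$, this forces $|E|\le 1$.

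With $E=\{i^\star\}$ a singleton, the bookkeeping you flagged as the main obstacle is immediate: subadditivity gives $f(\sstar)\le f(\sstar\cap\agents')+f(\{i^\star\})$, so (dropping the nonnegative cost terms for agents in $\agents'$ and using that the bracket is in $(0,1]$)
\[
g(\sstar)\le f(\sstar\cap\agents')+f(\{i^\star\})\Bigl(1-\tfrac{c_{i^\star}}{f(i^\star\mid \sstar\setminus\{i^\star\})}\Bigr)\le f(\sstar\cap\agents')+g(\{i^\star\}),
\]
the last step again using $f(i^\star\mid \sstar\setminus\{i^\star\})\le f(\{i^\star\})$. Your attempted route via the intermediate bound $g(\sstar)\le \tfrac12 f(\sstar)$ throws away exactly the structure (which agent is expensive) needed to control $f(\sstar)-f(\sstar\cap\agents')$, and I do not see how to close the gap along that path when $|E|$ is not known to be $1$.
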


\begin{proof}
If $g(\sstar) = 0$, then the claim is trivial.
Otherwise, we first prove that $|\sstar \setminus \agents'| \leq 1$.
This is since, 
\begin{eqnarray*}
0  & < & g(\sstar) = f(\sstar)  \left(1- \sum_{i\in \sstar} \frac{c_i}{f(i\mid\sstar\setminus\{i\})}\right) \\ & \leq &  f(\sstar) \left(1- \sum_{i\in \sstar\setminus\actions'} \frac{c_i}{f(\{i\})}\right) \leq f(\sstar) \left(1-  \frac{|\sstar\setminus\actions'|}{2}\right),
\end{eqnarray*}
where the second inequality follows from subadditivity of $f$ which implies that $f(i \mid \sstar \setminus \{i\}) = f(\sstar) - f(\sstar \setminus \{i\}) \leq f(\{i\})$.
Since $f(\sstar)>0$, this implies that $ |\sstar\setminus\actions'| \leq 1$.

If $|\sstar \setminus \agents'| = 0$, the statement follows since  $g(\sstar) = g(\sstar\cap \actions') \leq f(\sstar\cap \actions')$.
Else, let $i^\star$ be the single item in $\sstar \setminus \agents'$. We have 
\begin{eqnarray*}
g(\sstar) & \leq & f(\sstar \cap \agents') + f(\{i^\star\}) \left( 1 - \frac{c_{i^\star}}{f(i^\star \mid \sstar \setminus \{i^\star\})} \right) \\ & \leq & f(\sstar \cap \agents') + \max_{i \in \agents} g(\{i\}).
\end{eqnarray*}
The first inequality follows from subadditivity of $f$, implying that 
$f(\sstar) \leq f(\sstar \cap \agents') + f(\{i^\star\})$, 
and by decreasing the payments. The second inequality follows again by subadditivity of $f$, implying that $f(i^\star \mid \sstar \setminus \{i^\star\}) \leq f(\{i^\star\})$. 
This concludes the proof.
\end{proof}

\subsection{Relaxing the Problem}
We next present two crucial lemmas that draw connections between rewards, marginal rewards, and costs. This pair of lemmas relaxes the problem and motivates our approach for finding a good set of agents to incentivize in a contract via prices and demand queries.

Clearly, for the optimal set of agents we must have $\sum_{i \in S^{\star}} c_i \leq f(S^{\star})$ because otherwise the reward cannot compensate the incurred cost. In the multi-agent hidden-action setting with XOS rewards, we can strengthen this observation as follows.

\begin{lemma}
If $f$ is XOS, then for all $S \subseteq \sstar$ we have \[
\sum_{i\in S} \sqrt{c_i}  \leq \sqrt{f(S)}.
\] \label{lem:sumci} \label{lem:stronger-sumci}
\end{lemma}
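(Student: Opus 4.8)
The key fact I want to exploit is that $S^\star$ being optimal forces the payments to be "affordable," and the natural way to package the incentive constraints is through the quantity $\sum_{i \in S^\star} \frac{c_i}{f(i \mid S^\star \setminus \{i\})}$, which must be at most $1$ (otherwise $g(S^\star) \leq 0$, but one can always do at least as well as the empty set, and in fact at least as well as a single best agent, so $g(S^\star) > 0$ unless everything is trivial — and if $g(S^\star) \leq 0$ the claimed inequality $\sum \sqrt{c_i} \leq \sqrt{f(S)}$ still has to be argued, so I need to be a bit careful here). The plan is: first reduce to the case $\sum_{i \in S^\star} \frac{c_i}{f(i \mid S^\star \setminus \{i\})} \leq 1$; then use Cauchy–Schwarz together with Lemma~\ref{lem:xos-sum-of-marginals} (with $T = S^\star$) to bound $\sum_{i \in S} \sqrt{c_i}$ for any $S \subseteq S^\star$.

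Concretely, for $S \subseteq S^\star$ write $\sqrt{c_i} = \sqrt{\frac{c_i}{f(i \mid S^\star \setminus \{i\})}} \cdot \sqrt{f(i \mid S^\star \setminus \{i\})}$ and apply Cauchy–Schwarz:
\[
\sum_{i \in S} \sqrt{c_i} \;\leq\; \sqrt{\sum_{i \in S} \frac{c_i}{f(i \mid S^\star \setminus \{i\})}} \cdot \sqrt{\sum_{i \in S} f(i \mid S^\star \setminus \{i\})}.
\]
The first factor is at most $\sqrt{1} = 1$ since $S \subseteq S^\star$ and all summands are nonnegative, provided $\sum_{i \in S^\star} \frac{c_i}{f(i \mid S^\star \setminus \{i\})} \leq 1$. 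The second factor is at most $\sqrt{f(S)}$ by Lemma~\ref{lem:xos-sum-of-marginals} applied with the pair $S \subseteq S^\star$ (the lemma gives $\sum_{i \in S} f(i \mid S^\star \setminus \{i\}) \leq f(S)$). Combining yields $\sum_{i \in S} \sqrt{c_i} \leq \sqrt{f(S)}$.

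**Main obstacle.** The only real subtlety is justifying $\sum_{i \in S^\star} \frac{c_i}{f(i \mid S^\star \setminus \{i\})} \leq 1$. If some agent $i \in S^\star$ has $f(i \mid S^\star \setminus \{i\}) = 0$ while $c_i > 0$, the ratio is $+\infty$ and $g(S^\star)$ would be interpreted as $-\infty$, contradicting optimality (the empty set gives $0$); so every such agent must have $c_i = 0$, contributing $0$ to the left side of the target inequality and also $0$ to the ratio sum — fine. Given that, $g(S^\star) = f(S^\star)\bigl(1 - \sum_{i \in S^\star}\frac{c_i}{f(i\mid S^\star\setminus\{i\})}\bigr)$; if the parenthesized factor were negative then $g(S^\star) < 0 \leq g(\emptyset)$, contradicting optimality of $S^\star$, so the factor is in $[0,1]$ (when $f(S^\star) > 0$; if $f(S^\star) = 0$ then monotonicity forces $f(S) = 0$ and, by the same reasoning, $c_i = 0$ for all $i \in S^\star \supseteq S$, so both sides are $0$). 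This case analysis is the part to write carefully; the Cauchy–Schwarz step and the invocation of Lemma~\ref{lem:xos-sum-of-marginals} are routine.
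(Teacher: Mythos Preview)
Your proposal is correct and follows essentially the same approach as the paper: both handle the degenerate cases ($f(i \mid S^\star \setminus \{i\}) = 0$ and $f(S^\star) = 0$) by appealing to optimality of $S^\star$ against $\emptyset$, then use Cauchy--Schwarz with the factorization $\sqrt{c_i} = \sqrt{c_i/f(i \mid S^\star \setminus \{i\})}\cdot\sqrt{f(i \mid S^\star \setminus \{i\})}$ together with Lemma~\ref{lem:xos-sum-of-marginals} to bound the second factor by $\sqrt{f(S)}$.
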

\begin{proof}
Note that $f(i \mid \sstar \setminus \{i\}) > 0$ for all $i \in \sstar$ with $c_i > 0$ because otherwise $g(\sstar) = -\infty$, whereas $g(\emptyset) = 0$, contradicting the optimality of $\sstar$.

Let's first consider the case where $f(\sstar) = 0$. In this case, we have $f(i \mid \sstar \setminus \{i\}) \leq f(\{i\}) \leq f(\sstar) = 0$ for all $i \in \sstar$. This means that $c_i = 0$ for all $i \in \sstar$, implying the statement.

So consider the case where $f(\sstar) > 0$. By optimality $g(\sstar) \geq 0$, so
\begin{align*}
 & g(\sstar)  = f(\sstar) \left(1-\sum_{i\in \sstar} \frac{c_i}{f(i\mid \sstar \setminus\{i\})}\right) \geq 0 .
\end{align*}
Using $f(\sstar) >0$, we obtain
\[
\sum_{i\in \sstar} \frac{c_i}{f(i\mid \sstar \setminus\{i\})} \le 1.
\]

For $i \in S$ let $x_i = \sqrt{\frac{c_i}{f(i\mid \sstar \setminus\{i\})}}$ 
and $y_i = \sqrt{f(i\mid \sstar \setminus\{i\})}$. This is well-defined because $\frac{c_i}{f(i\mid \sstar \setminus\{i\})} \geq 0$ by our initial observation. The Cauchy-Schwarz inequality states 
\[
\left(\sum_{i \in S} x_i y_i \right)^2 \leq \left(\sum_{i \in S} x_i^2 \right) \left(\sum_{i \in S} y_i^2 \right).
\]
Using this we obtain
\[
\left(\sum_{i \in S} \sqrt{c_i} \right)^2 \leq  \underbrace{\left(\sum_{i \in S}\frac{c_i}{f(i\mid \sstar \setminus\{i\})}\right)}_{\leq 1} \left(\sum_{i \in S} f(i\mid \sstar \setminus\{i\}) \right) \leq f(S),
\]
where the last inequality holds by Lemma~\ref{lem:xos-sum-of-marginals}. Taking the square root on both sides of the inequality establishes the claim.
\end{proof}

The next lemma shows that if for some set $S$, the marginal of every agent $i$ is not too small, then the principal pays at most half of the reward as transfers. Therefore, our approach will be to find a set $S$ for which $f(S)$ is high and also all marginals fulfill these constraints.
\begin{lemma}\label{lem:stronger-condition}
If $f$ is XOS, then for any set $S$ that fulfills $f(S) > 0$ and
\[
f( i \mid S \setminus \{ i \}) \geq \sqrt{2 c_i f(S)} \quad \text{for all $i \in S$,}
\]
we have $g(S) \geq \frac{1}{2} f(S)$. 
\end{lemma}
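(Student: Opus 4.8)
The plan is to reduce the claim to a bound on the total ``payment fraction'': since $f(S)>0$, we have $g(S) = f(S)\bigl(1-\sum_{i\in S}\frac{c_i}{f(i\mid S\setminus\{i\})}\bigr)$, so it suffices to show $\sum_{i\in S}\frac{c_i}{f(i\mid S\setminus\{i\})}\le \frac12$, and then $g(S)\ge \frac12 f(S)$ follows immediately.

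The heart of the argument is a per-agent manipulation of the hypothesis. Write $m_i := f(i\mid S\setminus\{i\})$. For $i\in S$ with $c_i>0$, the assumption $m_i \ge \sqrt{2c_i f(S)}$ forces $m_i>0$ (as $f(S)>0$), so squaring and dividing by $2\,m_i\,f(S)>0$ gives $\frac{c_i}{m_i}\le \frac{m_i}{2f(S)}$; for $i\in S$ with $c_i=0$ this inequality holds trivially since the left-hand side is $0$ by the paper's convention while the right-hand side is nonnegative. (Note in particular that the degenerate case $c_i>0$, $m_i=0$ — where $\frac{c_i}{m_i}$ would be $+\infty$ — is ruled out by the hypothesis, whose right-hand side $\sqrt{2c_i f(S)}$ is strictly positive there.) Summing over $i\in S$ yields $\sum_{i\in S}\frac{c_i}{m_i}\le \frac{1}{2f(S)}\sum_{i\in S} f(i\mid S\setminus\{i\})$.

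To finish I would invoke Lemma~\ref{lem:xos-sum-of-marginals} with $T=S$ (using that $f$ is XOS), which gives $\sum_{i\in S} f(i\mid S\setminus\{i\})\le f(S)$; plugging this in bounds the payment fraction by $\frac12$, as desired. I do not expect a real obstacle here — the only points needing care are the degenerate terms handled above and the observation that, unlike Lemma~\ref{lem:sumci} (which exploits optimality of $\sstar$), here the sum-of-marginals inequality is applied to the arbitrary set $S$ itself; this step genuinely uses the XOS (fractionally subadditive) structure and would fail for general monotone subadditive $f$.
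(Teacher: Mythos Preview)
Your proof is correct and follows essentially the same route as the paper: derive the per-agent bound $\frac{c_i}{f(i\mid S\setminus\{i\})}\le \frac{f(i\mid S\setminus\{i\})}{2f(S)}$ (handling the $c_i=0$ case via the convention), sum, and apply Lemma~\ref{lem:xos-sum-of-marginals} with $T=S$ to bound the total payment fraction by $\tfrac12$. Your added remark that this step genuinely relies on the XOS structure is a nice observation not made explicit in the paper.
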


\begin{proof}
Consider any $i \in S$. If $c_i > 0$, note that we have to have $f( i \mid S \setminus \{ i \}) > 0$. So
\[
f( i \mid S \setminus \{ i \}) \geq \sqrt{2 c_i f(S)}
\]
is equivalent to 
\[
\frac{c_i}{f( i \mid S \setminus \{ i \})} \leq \frac{1}{2} \frac{f( i \mid S \setminus \{ i \})}{f(S)}.
\]
For any $i \in S$ with $c_i = 0$, we defined $\frac{c_i}{f( i \mid S \setminus \{ i \})} = 0$, even when the denominator is zero; so also $\frac{c_i}{f( i \mid S \setminus \{ i \})} \leq \frac{1}{2} \frac{f( i \mid S \setminus \{ i \})}{f(S)}$ because $f( i \mid S \setminus \{ i \}) \geq 0$. Summing over all $i \in S$ we obtain
\[
\sum_{i \in S} \frac{c_i}{f( i \mid S \setminus \{ i \})} \leq \frac{1}{2} \frac{\sum_{i \in S} f( i \mid S \setminus \{ i \})}{f(S)} \leq \frac{1}{2},
\]
where the second inequality holds by  Lemma~\ref{lem:xos-sum-of-marginals}. 
Therefore
\[
g(S) = \left( 1 - \sum_{i \in S} \frac{c_i}{f( i \mid S \setminus \{ i \})}\right) f(S) \geq \frac{1}{2} f(S).\qedhere
\] \end{proof}

\subsection{A Scaling Property of XOS Functions and Their Marginals}

The other crucial ingredient in our argument for finding  a contract is a novel scaling property of XOS functions and their marginals, which roughly says that we can scale down the reward $f(T)$ of any set $T$ to whatever level we wish, while also ensuring that the marginals of the elements that remain stay high (with respect to the original marginals).

This property is not too suprising for submodular $f$, and is indeed easy to obtain for this class of functions: Consider iteratively dropping elements from $T$ one by one to get down to the desired level. Then, by submodularity, the marginals of the items that remain can only go up.
For XOS $f$, however, this is far from obvious (given the apparent lack of control over marginals), and may be of independent interest.

\begin{lemma}
\label{lem:xosscaling}
Let $f:2^\actions\rightarrow \reals_{\geq 0}$ 
be an XOS function. Given a set $T \subseteq\actions$, and parameters $\delta \in (0,1]$ and $0\leq \downscaling < f(T)$, Algorithm~\ref{alg:scaling} runs in polynomial time with value oracle access to $f$ and finds a set $U \subseteq T$ such that 
\begin{equation}
 (1-\delta)\downscaling \leq f(U) \leq \downscaling + \max_{i\in T} f(\{i\}), \label{eq:prop_val} 
\end{equation}
and 
\begin{equation}
    f(i \mid U \setminus \{i\}) \geq \delta f(i \mid T \setminus \{i\}) \qquad \text{for all $i \in U$}. \label{eq:prop_marg}
\end{equation}
\end{lemma}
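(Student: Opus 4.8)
The plan is to design Algorithm~\ref{alg:scaling} as a greedy/removal procedure guided by an \emph{additive supporting function} of $f$ on the current set, and to track the reward via this additive proxy. Concretely, I would maintain a ``current'' set $U$, initialized to $U = T$, and repeatedly do the following: compute (via the value oracle, or assume it is given) an additive supporting function $a$ of $f$ on $U$, so that $a(U) = f(U)$ and $a(U') \le f(U')$ for all $U' \subseteq U$; then, while $f(U) > \downscaling$, remove from $U$ an element that is ``cheap'' according to $a$ — for instance, pick a minimal prefix (in order of increasing $a(\{i\})$) whose removal brings the $a$-value at or below $\downscaling$. The key point is that $a(\{i\}) \ge f(i \mid U \setminus \{i\})$ for the supporting function (as in the proof of Lemma~\ref{lem:xos-sum-of-marginals}), so removing elements of small $a$-value costs little in terms of the marginals we want to preserve. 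Recompute a fresh supporting function after each such batch of removals and iterate; since each iteration strictly shrinks $U$, the algorithm terminates in at most $n$ rounds, each polynomial in $n$.

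The first thing to establish is the value bound \eqref{eq:prop_val}. The upper bound $f(U) \le \downscaling + \max_{i \in T} f(\{i\})$ should follow because we stop removing as soon as we cross the threshold $\downscaling$: just before the last removed element $i^\star$ was taken out, the value (measured by the then-current supporting function, hence an upper bound on $f$) was still above $\downscaling$, and removing a single element drops it by at most $a(\{i^\star\}) \le f(\{i^\star\}) \le \max_{i\in T} f(\{i\})$ — here I'd use monotonicity and subadditivity of $f$ to bound $f(\{i^\star\})$, or just the supporting-function inequality directly. The lower bound $(1-\delta)\downscaling \le f(U)$ is more delicate: I would need that the removal process never overshoots by too much. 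The natural approach is: whenever $f(U) > \downscaling$, we only ever remove a single carefully chosen element (the cheapest available), so each step's decrease is controlled; but to get the multiplicative $(1-\delta)$ slack rather than an additive one, I expect we need $\delta$ to enter the \emph{rule} for when to stop or which element to drop — e.g. stop once $f(U) \le \downscaling$, and argue the last drop was by at most a $\delta$-fraction of $\downscaling$ because a cheap element's $a$-value is at most the average, which is at most $f(U)/|U| $; this needs a counting argument, possibly combined with the assumption $\downscaling < f(T)$ and an induction on the number of rounds. This tension between ``remove enough to get below $\downscaling$'' and ``don't remove so much that you fall below $(1-\delta)\downscaling$'' is, I anticipate, the main obstacle, and it is exactly where the $\delta$ parameter and the re-selection of supporting functions earn their keep.

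For the marginal bound \eqref{eq:prop_marg}, the plan is to charge the loss in each element's marginal to the total $a$-mass removed. Fix $i \in U$ (the final set). Across the whole run, let $R$ be the set of removed elements; I want $f(i \mid U \setminus \{i\}) \ge \delta\, f(i \mid T \setminus \{i\})$. Using a supporting function $b$ of $f$ on $T$, we have $f(i \mid T \setminus\{i\}) \le b(\{i\})$, and on the other side, for the final set, $f(i \mid U \setminus \{i\}) = f(U) - f(U\setminus\{i\}) \ge a(U) - a(U \setminus \{i\}) = a(\{i\})$ if $i$ was retained under the last supporting function $a$ — wait, that's the wrong direction, so instead I would track, for the retained element $i$, a lower bound on its marginal that degrades gracefully. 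A cleaner route: argue that because at every removal step we delete the cheapest element, every retained element's per-round supporting value stays at least the fraction of the current total that the removed element represented; telescoping over rounds gives that the final supporting value of $i$ is at least $\delta$ times its initial one (this is where $\delta$ must appear in the stopping/selection threshold, e.g. never remove an element whose $a$-value exceeds $\delta \cdot (\text{something})$). Then $f(i \mid U \setminus\{i\}) \ge a_{\text{final}}(\{i\}) \ge \delta\, a_{\text{initial}}(\{i\}) \ge \delta\, f(i \mid T \setminus \{i\})$, using the supporting-function inequalities at both ends. Making the middle telescoping step precise — in particular choosing the right invariant so that it is preserved when we swap supporting functions between rounds — is the crux, and I'd expect the algorithm's pseudocode to be written so that this invariant is literally the loop condition.
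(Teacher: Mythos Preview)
Your proposal has a genuine gap at exactly the point you flag with ``wait, that's the wrong direction.'' If $a$ is an additive supporting function of $f$ on $U$, then $a(U)=f(U)$ and $a(U')\le f(U')$ for all $U'\subseteq U$, so
\[
f(i\mid U\setminus\{i\}) = f(U)-f(U\setminus\{i\}) \le a(U)-a(U\setminus\{i\}) = a(\{i\}),
\]
i.e.\ supporting-function coordinates give only an \emph{upper} bound on marginals, never a lower bound. Your final chain $f(i\mid U\setminus\{i\})\ge a_{\text{final}}(\{i\})\ge \delta\,a_{\text{initial}}(\{i\})\ge \delta\,f(i\mid T\setminus\{i\})$ therefore fails at the first step, and there is no way to repair it: supporting functions simply cannot certify that a marginal is large. (There is a second, softer issue: your algorithm requires computing supporting functions, which is not a value-oracle operation; the lemma promises only value-oracle access.)

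The paper takes a different route that avoids this obstruction entirely. It never tries to lower-bound a marginal by a proxy; instead it removes, one element at a time, the element whose \emph{actual} current marginal has dropped the most relative to its original marginal, i.e.\ the minimizer of $f(i\mid T_{t-1}\setminus\{i\})/f(i\mid T_0\setminus\{i\})$. By definition of this greedy choice, at any step $t$ every surviving element's ratio is at least $\delta_t$, so the marginal bound~\eqref{eq:prop_marg} holds automatically for $U=T_{t-1}$ with constant $\delta_t$. The remaining work is to locate a step $t^\star$ in a window $\{j^\star,\ldots,k^\star\}$ (chosen so that $f$ drops from above $\downscaling$ to below $(1-\delta)f(T_{j^\star-1})$) where $\delta_{t^\star}\ge\delta$; this follows from a telescoping sum combined with Lemma~\ref{lem:xos-sum-of-marginals}, which bounds $\sum_t \delta_t\,f(i_t\mid T_0\setminus\{i_t\})$ by the total drop $f(T_{j^\star-1})-f(T_{k^\star})\ge \delta\,f(T_{j^\star-1})$. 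The value bounds~\eqref{eq:prop_val} then fall out from the definition of $j^\star$ and $k^\star$. The XOS property is used only once, inside Lemma~\ref{lem:xos-sum-of-marginals}, and never to produce a supporting function at runtime.
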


\begin{algorithm}[t]
\caption{Scaling sets for XOS}\label{alg:scaling}
\KwData{An XOS function $f:2^\actions \rightarrow \reals_{\geq 0}$, a set $T$, parameters $0 \leq \downscaling < f(T)$ and $\delta\in (0,1]$
}

\KwResult{A set $U \subseteq T$ satisfying Equations \eqref{eq:prop_val} and \eqref{eq:prop_marg}}

    Let $T_0$ be an inclusion-wise minimal subset of $T$ with $f(T_0) = f(T)$  \quad
    \Comment{\mbox{I.e.,  $f(S)< f(T_0), \forall S \subsetneq T_0$}}

    \For{ $t=1,\ldots,|T_0|$}{ 
    Let $i_t = \arg\min_{i \in T_{t-1}} \frac{f(i\mid T_{t-1} \setminus \{i\})}{f(i\mid T_0 \setminus \{i\})}$ 
    
    Let $T_{t} =T_{t-1} \setminus \{i_t\}$
    
    Let $\delta_t= \frac{f(i_t\mid T_{t})}{f(i_t\mid T_0 \setminus \{i_t\})}$ 
    }

    Let $j^\star = \max \{j \mid f(T_{j}) > \downscaling \} $

    Let $k^\star = \min \{k \mid f(T_k) \leq (1-\delta) f(T_{j^\star}) \} $

    Let $t^\star = \arg\max_{t\in \{j^\star+1,\ldots,k^\star\}} \delta_t $ 
    
    \Return $U=T_{t^\star-1}$
    
\end{algorithm}

Before delving into the proof, let us first gain a clearer understanding of the lemma statement and Algorithm~\ref{alg:scaling}.
As already mentioned, for submodular functions $f$, the property claimed in Lemma~\ref{lem:xosscaling} is straightforward. The set $U$ can be obtained by dropping elements from $T$ in any order up to the point where the function value would drop below $\Psi$. 
The set obtained this way would fulfill $\downscaling \leq f(U) \leq \downscaling + \max_{i\in T} f(\{i\})$ and $f(i \mid U \setminus \{i\}) \geq f(i \mid T \setminus \{i\})$ for all $i \in U$. For general XOS functions, such a set $U$ might not exist. Therefore, these two inequalities are relaxed by introducing a parameter $\delta \in (0, 1]$. Note that as $\delta$ decreases \eqref{eq:prop_val} becomes stronger while \eqref{eq:prop_marg} becomes weaker.

To find a set fulfilling \eqref{eq:prop_val} and \eqref{eq:prop_marg}, Algorithm~\ref{alg:scaling} proceeds as follows: The algorithm starts from a set $T_0 \subseteq T$ that has the same value as $T$, but in which all marginals are positive. Then it obtains sets $T_1, T_2, \ldots$ by repeatedly removing one element. In each step, it removes the element whose marginal to the current set is smallest relative to the marginal the element had in the original set. Let's denote the element that is removed in step $t$ by $i_t$, so that $T_t = T_{t-1} \setminus \{i_t\}$ for all $t = 1, \ldots, |T_0|$. Then the element $i_t$ that is removed in step $t$ minimizes $f(i \mid T_{t-1}\setminus\{i\})/f(i\;|\;T_0\setminus\{i\})$ among all $i \in T_{t-1}$. The claim is that following this process there is a step in which $f(T_t)$ is between $(1-\delta)\downscaling$ and $\downscaling + \max_{i\in T} f(\{i\})$ and no marginal decreased by more than a $\delta$-factor.

The intuitive reason is as follows: The amount by which $f(T_t)$ decreases compared to $f(T_{t-1})$ is exactly the marginal $f(i_t \mid T_{t-1}\setminus\{i_t\})$. 
If this marginal is tiny, then the decrease is tiny as well. But given that we are considering a sequence of steps in which the overall value decreases by at least $\delta \Psi$, not every decrease can be tiny. In particular, there must be a step in which an item $i_t$ is removed whose current marginal $f(i_t \mid T_{t-1} \setminus \{i_t\})$ relative to the original marginal $f(i_t \mid T_0 \setminus\{i_t\})$ was sizable. Moreover, since we always remove the element with the smallest ratio between current marginal and original marginal, at that point all elements must have had large marginals relative to the original ones.

We illustrate Algorithm~\ref{alg:scaling} and the key quantities that appear in it in Figure~\ref{fig:scaling}. The blue line in that figure corresponds to the value $f(T_t)$ for $t = 0, \ldots, |T_0|$. Index $j^\star$ is the largest index for which $f(T_{j^\star}) > \downscaling$, while index $k^\star$ is the smallest index such that $f(T_{k^\star}) \leq (1-\delta) f(T_{j^\star})$. Index $t^\star \in \{j^\star+1, \ldots,k^\star\}$ is an index that maximizes the ratio $\delta_t = \frac{f(i_t | T_t)}{f(i_t \mid T_0 \setminus \{i_t\})}$ among all indices $t$ in that range. The claim is that the set $T_{t^\star-1}$ has the desired properties.

\begin{figure}[t]
\begin{center}
    
\begin{tikzpicture}
\begin{axis}[
    xmin=0, xmax=12,
    ymin=0, ymax=110,
    width=10cm, height=6cm,
    axis lines=left,
    axis line style={->},
    xtick=\empty,
    ytick=\empty,
    grid=major,
    clip=false,  % allow nodes outside the plotting area
]
\definecolor{darkgreen}{rgb}{0,0.3,0}

% Plot the data
\addplot[mark=o,blue,thick] coordinates {
    (0, 100)
    (1, 90)
    (2, 85)
    (3, 82)
    (4, 65)
    (5, 63)
    (6, 42)
    (7, 33)
    (8, 27)
    (9, 25)
    (10, 12)
    (11, 5)
    (12, 0)
};

% Add the horizontal dashed line at y=64
\addplot [black, dashed, domain=0:12] {70};

% Add the phi label at (0,64) shifted to the left
\node at (axis cs: 0,70) [anchor=east, xshift=-5pt, black] {\(\Psi\)};

% Add the horizontal dashed line at y=64
\addplot [red, dashed, domain=0:12] {82};

% Add the phi label at (0,64) shifted to the left
\node at (axis cs: 0,82) [anchor=east, xshift=-5pt, black] {\(f(T_{j^\star})\)};

% Add the horizontal dashed line at y=64
\addplot [red, dashed, domain=0:12] {40};

% Add the phi label at (0,64) shifted to the left
\node at (axis cs: 0,40) [anchor=east, xshift=-5pt, black] {\( (1-\delta) f(T_{j^\star})\)};

% Vertical dashed line at x=5 with label x on the bottom
\addplot [darkgreen, dashed] coordinates {(3,0) (3,82)};
\node at (axis cs: 3,0) [anchor=north, yshift=-5pt, black] {\(j^\star \)};
\addplot [darkgreen, dashed] coordinates {(7,0) (7,33)};
\node at (axis cs: 7,0) [anchor=north, yshift=-5pt, black] {\(k^\star\)};

\addplot [darkgreen, dashed] coordinates {(7,0) (7,33)};
\node at (axis cs: 7,0) [anchor=north, yshift=-5pt, black] {\(k^\star\)};

\end{axis}
\end{tikzpicture}
    \caption{Illustration of Algorithm~\ref{alg:scaling}. The blue line corresponds to the value $f(T_t)$ for $t = 0, \ldots, |T_0|$.
    The algorithm first sets a threshold of $\Psi$, and sets $j^\star$ as the last time step with a value larger than $\Psi$. Then it sets a threshold of $(1-\delta)f(T_{j^\star})$, and sets $k^\star$ as the first time step with a value at most this threshold. Between these time steps (excluding $k^\star$), the $f$ value of the set must satisfy Inequality~\eqref{eq:prop_val}, while the proof shows that there must exists a time step within this range with high marginals (in comparison to the marginals of $T_0$) satisfying Inequality~\eqref{eq:prop_marg}.}
    \label{fig:scaling}
    \end{center}
\end{figure}

\begin{proof}[Proof of Lemma~\ref{lem:xosscaling}]
It is easy to see that Algorithm~\ref{alg:scaling} can be implemented in polynomial time with value oracle access to $f$; in particular, $T_0$ can be obtained by iteratively dropping any element from $T$ whose removal does not decrease the value until reaching a point where no such element exists. So we focus on showing that it  finds a set $U$ that satisfies (\ref{eq:prop_val}) and (\ref{eq:prop_marg}).

We begin by arguing that the algorithm is well defined. First note that $f(i \mid T_0 \setminus \{i\}) > 0$ for all $i \in T_0$ because otherwise $T_0$ would not be inclusion-wise minimal. 
Furthermore, we have to argue that the indices $j^\star$ and $k^\star$ as defined in the algorithm exist and satisfy $0 \leq j^\star < k^\star \leq \lvert T_0 \rvert$. 
To this end observe that $f(T) = f(T_0) \geq f(T_1) \geq \ldots \geq f(T_{\lvert T_0 \rvert})$ and $T_{\lvert T_0 \rvert} = \emptyset$, so $f(T_{\lvert T_0 \rvert}) = 0$. Therefore there must be a largest $j$ with $0 \leq j < \lvert T_0 \rvert$ such that $f(T_j)>  \downscaling$. This is $j^\star$.  There must also be a smallest $k$ with $j^* < k \leq \lvert T_0 \rvert$ such that $f(T_k) \leq (1-\delta) f(T_{j^\star})$. This is $k^\star$. 
Finally, since $\delta > 0$, and $f(T_{j^\star})>\downscaling\geq 0$, this $k^\star$ must satisfy $k^\star > j^\star$.

It remains to show that $U = T_{t^\star-1}$ fulfills the respective properties.

By definition, it holds that
\[
f(T_{j^\star})  =  f(T_{k^\star}) + \sum_{t \in \{j^\star+1,\ldots,k^\star\}} f( i_t \mid T_{t} ) .
\]
By rearranging and replacing $f( i_t \mid T_{t} )$ by $ \delta_{t} \cdot f( i_t \mid T_{0} \setminus\{i_t\} )$ we get that
\begin{align*}
f(T_{j^\star})  -  f(T_{k^\star}) &=  \sum_{t \in \{j^\star+1,\ldots,k^\star\}}\delta_{t} \cdot f( i_t \mid T_{0}\setminus\{i_t\} ) \\
& \leq \sum_{t \in \{j^\star+1,\ldots,k^\star\}}\delta_{t^\star} \cdot f( i_t \mid T_{0}\setminus\{i_t\} ) \\
&\leq \delta_{t^\star} \cdot f(\{i_{j^\star+1},\ldots,i_{k^\star}\}) \\ 
&\leq \delta_{t^\star} \cdot f(\{i_{j^\star+1},\ldots,i_{\lvert T_0 \rvert}\}) =\delta_{t^\star} \cdot  f(T_{j^\star}),
\end{align*}
where the first inequality is by the definition of $t^\star$, the second inequality is by Lemma~\ref{lem:xos-sum-of-marginals}, and the third inequality is by monotonicity.
Thus,  for all $i \in U =  T_{t^\star-1}$, 
\begin{eqnarray*}
\frac{f(i\mid T_{t^\star-1} \setminus\{i\} )}{f(i\mid T_0 \setminus \{i\})} & \geq & \delta_{t^\star} \geq  \frac{f(T_{j^\star})  -  f(T_{k^\star}) }{ f(T_{j^\star})} 
\\ & = & 1- \frac{ f(T_{k^\star})}{ f(T_{j^\star})} \geq 1- \frac{(1-\delta) f(T_{j^\star})}{ f(T_{j^\star})} = \delta,
 \end{eqnarray*}
where the first inequality holds by definition of $i_{t^\star}$ and $\delta_{t^\star}$. Therefore, in order to establish Equation~\eqref{eq:prop_marg}, it remains to show that $f(i \mid T_0 \setminus \{i\}) \geq f(i \mid T \setminus \{i\})$. Indeed, since $f(T_0) = f(T)$ and $T_0 \subseteq T$, monotonicity of $f$ implies that
\[
f(i \mid T_0 \setminus \{i\}) = f(T_0) - f(T_0 \setminus \{i\}) \geq f(T) - f(T \setminus \{i\}) = f(i \mid T \setminus \{i\}).
\]

To prove Equation~\eqref{eq:prop_val} observe that  $f(T_{t^\star-1}) \geq f(T_{k^\star-1}) \geq (1-\delta) f(T_{j^\star}) \geq (1-\delta) \downscaling$, and that $f(T_{t^\star-1}) \leq f(T_{j^\star}) \leq f(T_{j^\star+1}) +\max_{i\in T} f(\{i\}) \leq \downscaling+\max_{i\in T} f(\{i\})$. 
\end{proof}

\subsection{Computing a Good Set via a Demand Oracle}

We next present a subroutine that formalizes the argument outlined in the introduction: 
if we knew the value of $f(\sstar \cap \actions')$ and every individual agent is negligible, then by running a demand query at appropriately chosen prices and using the scaling property of XOS functions from the previous section 
we can find a contract such that the principal's utility is in $\Omega(f(\sstar \cap \actions'))$.

Algorithm~\ref{alg:xos-contract-estimate} proceeds as follows. It uses an estimate $\optestimate \approx f(\sstar \cap \actions')$ of the target value, and assumes that $\max_{i \in \actions'} f(\{i\}) \ll \optestimate$. It then sets a price $p_i$ for every agent $i \in \actions'$ and computes a demand set $T$ with respect to these prices. The goal is then to invoke Lemma~\ref{lem:stronger-condition}. To this end, the algorithm finds a subset $U \subseteq T$ in which all marginals are large enough compared to $f(U)$. This is possible using Algorithm~\ref{alg:scaling}.

\begin{algorithm}[t]
\caption{Approximately optimal contract given an estimate of $f(\sstar {\cap A'})$}\label{alg:xos-contract-estimate}
\KwData{An XOS function $f:2^\actions \rightarrow \reals_{\geq 0}$, costs $\{c_{i}\}_{i\in \actions}$, an estimate $\optestimate \geq 0$ of $f(S^\star \cap \actions')$}

\KwResult{A set $U \subseteq \actions'$}

            Let $\downscaling=\frac{\optestimate}{32} - \max_{i \in \actions'} f(\{i\})$.

        For every $i\in \actions'$, let $p_{i} = \frac{1}{2} \cdot \sqrt{ c_i \cdot \optestimate }$ ($p_{i} =\infty$ for $i\in \actions\setminus\actions'$)

    Let $T $ be a demand set with prices $p_{i}$ over the set $\actions'$

    \eIf{$0<\downscaling<f(T)$}{

    $U \leftarrow $  the output of Algorithm~\ref{alg:scaling} on $(f, T, \downscaling, \delta= \frac{1}{2})$
    }
    {
    $U\leftarrow \emptyset$ 
    }
    
\Return{U}
        
\end{algorithm}

\begin{lemma}
Algorithm~\ref{alg:xos-contract-estimate} runs in polynomial time with access to a demand and value oracle to $f$. 
If $\optestimate \leq f(\sstar \cap A' )$, then the set $U \subseteq \actions'$ that it returns satisfies 
\[
g(U) \geq \max\left\{\frac{\optestimate}{128}  - \frac{\max_{i \in A'} f(\{i\})}{4},0\right\}. 
\]
\label{lem:gu128}
\end{lemma}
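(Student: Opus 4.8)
The plan is to follow the intuition sketched in the introduction, but carefully tracking the additive ``one big agent'' slack terms. First I would dispense with the running-time claim: Algorithm~\ref{alg:xos-contract-estimate} makes one demand query at prices $p_i$, and if it enters the first branch it invokes Algorithm~\ref{alg:scaling}, which runs in polynomial time with value oracle access by Lemma~\ref{lem:xosscaling}; all other steps are a constant number of arithmetic operations and value queries. So the content is entirely in the guarantee on $g(U)$.

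For the main bound, assume $\optestimate \le f(\sstar \cap A')$. The first step is to lower bound $f(T)$, where $T$ is the demand set at prices $p_i$ over $A'$. By the demand-set optimality, $f(T) - \sum_{i\in T} p_i \ge f(\sstar\cap A') - \sum_{i\in \sstar\cap A'} p_i$. Now $\sum_{i\in \sstar\cap A'} p_i = \tfrac12\sqrt{\optestimate}\sum_{i\in \sstar\cap A'}\sqrt{c_i} \le \tfrac12\sqrt{\optestimate}\sqrt{f(\sstar\cap A')} \le \tfrac12\sqrt{\optestimate}\sqrt{f(\sstar\cap A')}$, and since $\optestimate \le f(\sstar\cap A')$ this is at most $\tfrac12 f(\sstar\cap A')$; here I use Lemma~\ref{lem:stronger-sumci} applied to the set $\sstar\cap A' \subseteq \sstar$. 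Combining, $f(T) \ge f(T) - \sum_{i\in T} p_i \ge f(\sstar\cap A') - \tfrac12 f(\sstar\cap A') = \tfrac12 f(\sstar\cap A') \ge \tfrac12 \optestimate$. Also, by the demand property, every $i\in T$ has $f(i\mid T\setminus\{i\}) \ge p_i = \tfrac12\sqrt{c_i\optestimate}$ (this uses that dropping $i$ from $T$ can only decrease $f(T)-\sum p$, so the marginal must cover the price).

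Next I handle the two branches. Set $\downscaling = \tfrac{\optestimate}{32} - \max_{i\in A'} f(\{i\})$. If $\downscaling \le 0$ or $\downscaling \ge f(T)$, the algorithm returns $\emptyset$, and $g(\emptyset)=0$, so I need the claimed lower bound to be $0$ in that case; when $\downscaling \le 0$ we have $\tfrac{\optestimate}{32} \le \max_{i\in A'} f(\{i\})$, hence $\tfrac{\optestimate}{128} - \tfrac{\max_{i\in A'}f(\{i\})}{4} \le \tfrac{\optestimate}{128} - \tfrac{\optestimate}{128}\cdot\tfrac{32}{32}\cdot\ldots$ — more simply, $\tfrac{\optestimate}{128} \le \tfrac{\max_{i\in A'}f(\{i\})}{4}$ because $\tfrac{1}{128}\optestimate \le \tfrac14\cdot\tfrac{1}{32}\optestimate \le \tfrac14\max_{i\in A'} f(\{i\})$, so the max in the statement is $0$; and $\downscaling < f(T)$ is automatic since $\downscaling \le \tfrac{\optestimate}{32} \le \tfrac{f(\sstar\cap A')}{32}$, wait — I should check $\downscaling < f(T)$ holds whenever $\downscaling>0$, which follows from $f(T)\ge \tfrac12\optestimate > \tfrac{\optestimate}{32} \ge \downscaling$. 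So in fact whenever $\downscaling>0$ we enter the first branch. Now in the first branch, apply Lemma~\ref{lem:xosscaling} with $\delta=\tfrac12$: we get $U\subseteq T$ with $f(U) \ge \tfrac12\downscaling$, $f(U) \le \downscaling + \max_{i\in T} f(\{i\})$, and $f(i\mid U\setminus\{i\}) \ge \tfrac12 f(i\mid T\setminus\{i\}) \ge \tfrac14\sqrt{c_i\optestimate}$ for all $i\in U$. The upper bound gives $f(U) \le \tfrac{\optestimate}{32} - \max_{i\in A'}f(\{i\}) + \max_{i\in T}f(\{i\}) \le \tfrac{\optestimate}{32}$ since $T\subseteq A'$. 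The hard (but short) part is verifying the hypothesis of Lemma~\ref{lem:stronger-condition}: we need $f(i\mid U\setminus\{i\}) \ge \sqrt{2 c_i f(U)}$. Indeed $\sqrt{2 c_i f(U)} \le \sqrt{2 c_i \cdot \tfrac{\optestimate}{32}} = \tfrac14\sqrt{c_i\optestimate} \le f(i\mid U\setminus\{i\})$. Also $f(U) \ge \tfrac12\downscaling > 0$. So Lemma~\ref{lem:stronger-condition} yields $g(U) \ge \tfrac12 f(U) \ge \tfrac14\downscaling = \tfrac14\bigl(\tfrac{\optestimate}{32} - \max_{i\in A'}f(\{i\})\bigr) = \tfrac{\optestimate}{128} - \tfrac{\max_{i\in A'} f(\{i\})}{4}$, which together with $g(U)\ge \tfrac12 f(U) \ge 0$ gives the claimed bound.

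I expect the only genuinely delicate point to be bookkeeping the $\max_{i\in A'} f(\{i\})$ additive slack consistently — in particular making sure the ``$\downscaling \le 0$'' degenerate branch still meets the (then vacuous, because the max is $0$) bound, and that $T\subseteq A'$ is used to absorb $\max_{i\in T} f(\{i\}) \le \max_{i\in A'} f(\{i\})$ in the upper bound on $f(U)$. Everything else is a direct chain of the three preceding lemmas (Lemma~\ref{lem:stronger-sumci}, Lemma~\ref{lem:xosscaling}, Lemma~\ref{lem:stronger-condition}) plus the defining property of a demand set.
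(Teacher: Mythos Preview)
Your proposal is correct and follows essentially the same approach as the paper: dispose of the $\downscaling\le 0$ branch by showing the bound reduces to $0$, then in the main branch chain the demand-set inequality with Lemma~\ref{lem:stronger-sumci} to get $f(T)\ge \tfrac{\optestimate}{2}$ (hence $\downscaling<f(T)$), invoke Lemma~\ref{lem:xosscaling} with $\delta=\tfrac12$, absorb $\max_{i\in T}f(\{i\})\le\max_{i\in A'}f(\{i\})$ to cap $f(U)\le \tfrac{\optestimate}{32}$, verify the hypothesis of Lemma~\ref{lem:stronger-condition}, and read off $g(U)\ge\tfrac12 f(U)\ge\tfrac14\downscaling$. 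Your bookkeeping of the additive slack and the degenerate branch matches the paper's; the only cosmetic difference is that you also make explicit that $g(U)\ge\tfrac12 f(U)\ge 0$ covers the $\max\{\cdot,0\}$.
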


\begin{proof}
Algorithm~\ref{alg:xos-contract-estimate} 
runs in polynomial time with access to a demand and value oracle to $f$, because it issues a single demand query to $f$ and makes at most one call to Algorithm~\ref{alg:scaling} which runs in polynomial-time with value oracle access to $f$.

We next prove the lower bound on $g(U)$, under the assumption that $\optestimate \leq f(\sstar \cap \actions')$.
If $\downscaling\leq 0$ then the algorithm returns the empty set, which satisfies the claim.
So consider the case $\downscaling > 0$. In this case we have
\begin{eqnarray}
f(T) & \geq & f(T) - \sum_{i \in T} p_i  \nonumber \\ 
& \geq & f(\sstar \cap A') - \sum_{i \in \sstar \cap A'} p_i \nonumber \\
& = & f(\sstar \cap A') - \frac{\sqrt{\optestimate}}{2} \cdot \sum_{i \in \sstar \cap A'} \sqrt{c_i} \nonumber \\
&\geq &  f(\sstar \cap \actions' ) - \frac{1}{2} \sqrt{f(\sstar \cap \actions' ) \cdot  \optestimate}\nonumber \\ 
& \geq &  \frac{1}{2} \cdot f(\sstar \cap \actions') , \label{eq:ft}
\end{eqnarray}
where the first inequality holds since the prices are positive, the second inequality uses that $T$ is a demand set over the set $A'$, the third inequality is by Lemma~\ref{lem:stronger-sumci}, and the last inequality holds because $\optestimate \leq f(\sstar \cap \actions')$.

Furthermore, note that, as $T$ is a demand set, we have to have $f(i \mid T \setminus\{i\}) \geq p_i$ for all $i \in T$ (because otherwise it would be beneficial to remove $i$).

Now, since $\downscaling = \optestimate/32 - \max_{i \in \actions'} f(\{i\}) > 0$, we must have $\optestimate > 0$ and hence $\optestimate/2 > \optestimate/32 \geq \downscaling$. This shows
$$f(T)\stackrel{\eqref{eq:ft}}{\geq} \frac{1}{2}\cdot f(\sstar \cap \actions') \geq  \frac{\optestimate}{2} 
> \downscaling.$$
By Lemma~\ref{lem:xosscaling}, 
the set $U \subseteq T$ therefore fulfills $$f(U) \geq \left(1-\delta\right) \downscaling = 
\frac{1}{64} \optestimate - \frac{1}{2} \max_{i \in \actions'} f(\{i\}).$$
Furthermore, $U$ fulfills 
\begin{equation}
    f(U) \leq \downscaling + \max_{i \in T} f(\{i\}) \leq  \frac{\optestimate}{32}\label{eq:fu-up}
\end{equation} 
and  for all $i \in U$
\begin{equation}
f(i \mid U \setminus\{i\}) \geq \delta f(i \mid T \setminus\{i\}) = \frac{1}{2} \cdot  f(i \mid T \setminus\{i\}) \label{eq:fu-marg}.    
\end{equation}
We conclude that, for all $i \in U$, we have
\begin{eqnarray*}
    f(i \mid U \setminus\{i\}) 
    & \stackrel{\eqref{eq:fu-marg}}{\geq} & \frac{1}{2} \cdot f(i \mid T \setminus\{i\}) 
    \geq \frac{1}{2}\cdot  p_i  \\ & =  &\frac{1}{4}\cdot \sqrt{c_i \optestimate} = \sqrt{2 c_i \frac{\optestimate}{32}} \stackrel{\eqref{eq:fu-up}}{\geq} \sqrt{2 c_i f(U)}.
\end{eqnarray*}
So, Lemma~\ref{lem:stronger-condition} implies $g(U) \geq \frac{1}{2} f(U)$. In combination, we obtain
\[
g(U) \geq \frac{1}{2}f(U) \geq \frac{1}{2} \left( \frac{1}{64} \optestimate - \frac{1}{2} \max_{i \in \actions'} f(\{i\}) \right),
\]
as claimed.
\end{proof}

\subsection{Putting It All Together}

We are finally in a position to wrap up our argument. Below we show that Algorithm~\ref{alg:xos-contract} achieves a constant-factor approximation to $g(\sstar)$ for XOS reward function $f$, and runs in polynomial time with access to a demand and a value oracle to $f$. 

The idea is as follows: Try out all contracts that incentivize a single agent; and use the subroutine from the previous section with polynomially many guesses for the value of $f(\sstar \cap \actions')$ to obtain additional candidate sets. 
Argue that for the right guess of $f(\sstar \cap \actions')$ the better of the best contract that incentivizes a single agent, and the candidate set $U$ from the subroutine for this guess  yields the desired approximation.

In more detail, Algorithm~\ref{alg:xos-contract} defines a family $\cand$ of candidate sets of agents to be incentivized. First, it adds the empty set and all singleton sets. Then it adds additional candidate sets using Algorithm~\ref{alg:xos-contract-estimate}.
According to Lemma~\ref{lem:gu128}, one would ideally set $\optestimate = f(\sstar \cap A' )$ in 
Algorithm~\ref{alg:xos-contract-estimate}. However, as we do not know $f(\sstar \cap A' )$, Algorithm~\ref{alg:xos-contract} instead tries different values of $\optestimate$, namely all powers of $\xi$.

\begin{algorithm}[t]
\caption{Approximate optimal contract for XOS}\label{alg:xos-contract}
\KwData{An XOS function $f:2^\actions \rightarrow \reals_{\geq 0}$, costs $\{c_{i}\}_{i\in \actions}$, parameter $\xi>1$}
\KwResult{Set $S$ to incentivize}

    Let $\cand=\{\{i\} \mid i\in \actions \} \cup \{\emptyset\}$

    Let $\actions' =  \{i \in \actions \mid \frac{c_i}{f(\{i\})} \leq \frac{1}{2}\} $

\If{$\actions'\neq \emptyset \wedge \max_{i\in \actions'}f(\{i\}) >0$}{
\For{$ j =0,.. ,\lceil \log_{\xi} 2n \rceil $}
{
    Let $\optestimateround{j} = \xi ^ j \cdot \max_{i\in \actions'} f(\{i\})/2$

    $U^{(j)} \leftarrow $ the output of Algorithm~\ref{alg:xos-contract-estimate} on ($f,\{c_i\}_{i\in \actions},\actions',\optestimate = \optestimateround{j}$)  
    
    $\cand \leftarrow \cand \cup \{U^{(j)}\}$
}
}
\Return{$\arg\max_{S \in \cand} g(S)$}
    
\end{algorithm}

\begin{proof}[Proof of Theorem~\ref{thm:xos}]
We show that Algorithm~\ref{alg:xos-contract} achieves the claims in the theorem for XOS reward functions $f$, assuming that we have access to a demand and a value oracle to $f$. We present details on how to obtain the result for submodular reward functions $f$ when we only have value oracle access to $f$ in Appendix~\ref{app:extension-to-submodular}.

We begin by observing that, for any $\xi > 1$, Algorithm~\ref{alg:xos-contract} runs in polynomial time with access to a demand and a value oracle to $f$, because we make only polynomially many calls to Algorithm~\ref{alg:xos-contract-estimate}, which itself runs in polynomial time with demand/value oracle access to $f$.

We next show that Algorithm~\ref{alg:xos-contract} obtains an approximation guarantee of $1/(256\xi+2)$ which tends to $1/258$ as $\xi \rightarrow 1$.
To this end, we distinguish between two cases. The first case is that  $f(\sstar\cap \actions') \leq 128 \xi \cdot \max\{0,\max_{i\in \actions} g(\{i\})\}$. Then by Lemma~\ref{lem:onebigagent}
\begin{align*}
g(\sstar) & \leq f(\sstar \cap A') + \max\{0,\max_{i \in A} g(\{i\})\} \\ & \leq (128 \xi + 1) \cdot \max\{0,\max_{i \in A} g(\{i\})\}.
\end{align*}
So, the best single action (or the empty set) gives at least an approximation of $\frac{1}{128\xi+1}$, which completes the proof of the theorem for this case.

In the other case, we have $$f(\sstar\cap \actions') > 128 \xi \cdot \max\{0,\max_{i\in \actions} g(\{i\})\}.$$ So, in particular, we need to have $A' \neq \emptyset$ and $\max_{i \in A'} f(\{i\}) > 0$. Furthermore, $g(\{i\}) \geq \frac{1}{2} f(\{i\})$ for all $i \in A'$, resulting in $\max_{i\in \actions} g(\{i\}) \geq \max_{i\in \actions'} f(\{i\})/2$. So, in combination in this case $f(\sstar\cap \actions') \geq \max_{i\in \actions'} f(\{i\})/2$. On the other hand, by subadditivity of $f$,
$f(\sstar \cap \actions') \leq \sum_{i \in \sstar \cap \actions'} f(\{i\}) \leq 2n \cdot \max_{i\in \actions'} f(\{i\})/2$. Therefore, there is a unique $j^\star \in \{0, 1, \ldots, \lceil\log_{\xi} 2n\rceil\}$ for which $\optestimateround{j^\star} \leq f(\sstar \cap \actions') < \xi \cdot \optestimateround{j^\star}$.

Now, since $\frac{1}{\xi} \cdot f(\sstar \cap \actions') \leq \optestimateround{j^\star} \leq f(\sstar \cap \actions')$,
\begin{eqnarray*}
g(U^{(j^\star)})    & \geq &  \frac{\optestimateround{j^\star}}{128} - \frac{\max_{i\in \actions'} f(\{i\})}{4}  \\   
&\geq & \frac{f(\sstar\cap \actions')}{128\cdot \xi} - \frac{\max_{i\in \actions} g(\{i\})}{2} \\ & \geq &  f(\sstar\cap \actions')\cdot \left(\frac{1}{128\cdot \xi} - \frac{1}{256\cdot \xi} \right) \\  & \geq &
\frac{128\xi \cdot g(\sstar)}{128\xi+1}\frac{1}{256\cdot \xi} = \frac{g(\sstar)}{256\xi+2},
\end{eqnarray*}
where the first inequality is by Lemma~\ref{lem:gu128}, the second inequality is since for every $i\in \actions'$ it holds that $\frac{c_i}{f(\{i\})}\leq \frac{1}{2}$ and thus $g(\{i\}) = f(\{i\})(1-\frac{c_i}{f(\{i\})}) \geq f(\{i\})/2$, the third and fourth inequality both use that $f(\sstar\cap \actions') > 128 \xi \cdot \max\{0,\max_{i\in \actions} g(\{i\})\}$, and the the fourth inequality additionally uses Lemma~\ref{lem:onebigagent}.
This  gives an approximation of $\frac{1}{256\xi+2}$.
\end{proof}

%% INAPPROXIMABILITY FOR SUBMODULAR AND SUBADDITIVE
\section{Inapproximability Results for Submodular and Subadditive}
\label{sec:beyond-submodular}

In this section, we present our 
inapproximability results. We establish a constant lower bound for submodular reward functions (Section~\ref{sec:submodular-upper}), and a $\Omega(\sqrt{n})$ lower bound for subadditive reward functions (Section~\ref{sec:subadd-lower}). 
Both our impossibility results are for algorithms that have access to value and demand queries. 
A striking feature of the impossibility result for subadditive reward functions is that the subadditive reward functions used in the proof are constant-factor close to submodular  
(see Observation~\ref{obs:subadditive-close}).

\subsection{Inapproximability Result for Submodular}\label{sec:submodular-upper}

We start with the inapproximability result for submodular reward functions.
We show that there is some constant $\eta > 1$ 
for which no algorithm (deterministic or randomized) can guarantee an approximation of $\eta$ with a polynomial number of value and demand queries. 
This result strengthens a result by \citet{EzraFS24}, who show  that no polynomial-time algorithm can get a %$0.7$ 
$1.42$-approximation with access to a value oracle alone (unless P $=$ NP).

\begin{theorem}\label{thm:upper}
    There exists a constant %$\eta<1$ 
    $\eta > 1$
    for which any (possibly randomized)
    algorithm that uses a sub-exponential number of value and demand queries to the submodular 
    reward function returns a $\eta$-approximate optimal contract with an exponentially small probability in $n$.  
\end{theorem}

To prove the theorem, we define a probability distribution over instances and consider an arbitrary deterministic algorithm on a randomly drawn instance. By Yao's principle, the worst-case approximation ratio of any randomized algorithm is lower-bounded by the approximation ratio of any deterministic algorithm on this randomized instance.

Our proof relies on the construction of a family of functions $\{f_T:2^\actions\rightarrow [0,1]\}_T$, defined using the following function over two variables $x,y$ and a parameter $k$:
\[
f(x,y; k) = \begin{cases}
    \sqrt{k} & \text{ if $x+y \geq k$} \\
    \sqrt{x+y} & \text{ if $x+y < k$ and $y \leq \frac{k}{2}$} \\
    \sqrt{x + \frac{k}{2}}  +  \frac{y- \frac{k}{2}}{\sqrt{k} + \sqrt{x + \frac{k}{2}}} & \text{ otherwise.}
\end{cases}
\]
The function $f_T$ is given by 
$$ f_T(S)= f(|S\setminus T|, |S\cap T|; k)/\sqrt{k} \quad\text{ for $k=|T|$.}$$

Let $g_{T}$ be the corresponding principal's utility function with respect to $f_{T}$, when the cost of each agent $i$ is
$c_i = \frac{1}{8k^{2}}$.

The idea behind this construction is that all sets of size $k$ have the same $f$-value; however for such a set to achieve high principal utility it should also have high marginals. This is achieved by the different behavior of $f_T$ on smaller sets, depending on the size of the intersection with $T$. For sets admitting high intersection with $T$, 
the function $f$ is growing linearly on subsets as opposed to the (concave) square-root function everywhere else.

The key ingredients of our proof are:

\begin{enumerate}
\item In Claim~\ref{cl:submodularity} we show that for every choice of $T\subset A$ of even size, the reward function $f_T$ is monotone and submodular. 
\item In Lemma~\ref{lem:sub-opt} we show that any set $S$ that gives a good approximation to $\max_{X'} g_{T}(X')$ must have size at most $k$ and a large intersection with $T$ (specifically, $|S \cap T| > k/2$).
\item In Lemma~\ref{lem:demand} we show that for any $T \subseteq A$ of even size, a demand query for $f_T$ can be answered with polynomially many value queries.
\end{enumerate}

To complete the proof, we choose a uniformly random set $T$ of size $k = 2 \cdot \lfloor \frac{n}{10} \rfloor$. Since $k$ is of even size, $f_T$ is monotone submodular by (1). 
To show hardness under value queries, we use (2) to argue that, 
for any set of size $j \in [k/2+1,k]$, among the $\binom{n}{j}$ sets of size $j$, only $\binom{k}{k/2} \cdot \binom{n}{j-k/2}$ can satisfy the required approximation, and this number is exponentially smaller than $\binom{n}{j}$ for every such $j$.
Finally, by (3), hardness under demand queries reduces to hardness under value queries. 

We proceed with the formal proof.

Claim~\ref{cl:submodularity} is deferred to the appendix. 
We next present the lemma establishing (2), namely that in order to achieve a good approximation, we need to incentivize a set of size at most $k$ with a large intersection with $T$.

\begin{lemma}\label{lem:sub-opt}
     We have $\max_{X'}g_T(X') > 0.78 $. For any set $X$ with $\lvert X \rvert \leq k$ and $\lvert X \cap T \rvert \leq \frac{k}{2}$, we have $g_T(X) < (0.97 + \frac{1}{3 \sqrt{k}}) \cdot \max_{X'} g_T(X')$.
     For any set $X$ with $|X| > k$, we have $g_T(X) = - \infty$.
\end{lemma}

\begin{proof}
We first show that $\max_{X'} g_T(X') > 0.78 $. 
In particular, we show that $g_T(T) > 0.78$.
First observe that $f_T(T) = 1$. Furthermore, for every $i \in T$, we have
\[
f_T(T \setminus \{i\}) = \sqrt{\frac{1}{2}} + \left(1 - \sqrt{\frac{1}{2}}\right) \frac{k - 1 - \frac{k}{2}}{\frac{k}{2}} = \sqrt{\frac{1}{2}} + \left(1 - \frac{1}{\sqrt{2}}\right) \left(1 - \frac{2}{k}\right).
\]
So the marginal of every element is given by
\[
f_T(i \mid T \setminus \{i\}) =  f_T(T) - f_T(T \setminus \{i\}) = \left(1 - \frac{1}{\sqrt{2}}\right) \frac{2}{k}  = \frac{2-\sqrt{2}}{k},
\]
implying 
\[
\frac{c_i}{f_T(i \mid T \setminus \{i\})} = \frac{1}{8 k^2} \cdot \frac{k}{2 - \sqrt{2}} = \frac{1}{k} \cdot \frac{1}{8(2 - \sqrt{2})}
\]
and so
\[
1 - \sum_{i \in T} \frac{c_i}{f_T(i \mid T \setminus \{i\})} = 1 - \frac{1}{8(2 - \sqrt{2})} > 0.78.
\]
This implies that
\[
g_T(T) = \left( 1 - \sum_{i \in T} \frac{c_i}{f_T(i \mid T \setminus \{i\})}\right) f_T(T) >0.78 .
\]

Given that $g_T(T) > 0.78$, in order to prove the claimed upper bound on $g_T(X)$ for $X$ such that $\lvert X \rvert \leq k$ and $\lvert X \cap T \rvert \leq \frac{k}{2}$, it suffices to show that for any such $X$ we have $g_T(X) \leq \frac{3}{4}  + \frac{1}{4\sqrt{k}}$. 

To this end, observe that, for any such $X$ and all $i \in X$, we have

\begin{align*}
\frac{c_i}{f_T(i \mid X \setminus \{i\})} = \frac{c_i}{(\sqrt{\lvert X \rvert} - \sqrt{\lvert X \rvert - 1})/\sqrt{k}} &= \sqrt{k}c_i (\sqrt{\lvert X \rvert} + \sqrt{\lvert X \rvert - 1}) \\& \geq 2\sqrt{k} c_i \sqrt{\lvert X \rvert - 1} \geq 2 \sqrt{k}c_i (\sqrt{\lvert X \rvert} - 1).
\end{align*}
Therefore,
\begin{align*}
g_T(X) 
&= \left( 1 - \sum_{i \in X} \frac{c_i}{f_T(i \mid X \setminus \{i\})}\right) f_T(X) \leq \left(1 - \lvert X \rvert \cdot 2\sqrt{k} c_i (\sqrt{\lvert X \rvert} - 1)\right) \sqrt{\lvert X \rvert/k} \\
&= \sqrt{\lvert X \rvert/k} - \frac{\lvert X \rvert^2}{4 k^{2}} + \frac{\lvert X \rvert^{3/2}}{4 k^{2}} \leq \frac{3}{4}  + \frac{1}{4\sqrt{k}},
\end{align*}
as desired, where we use that $x \mapsto \sqrt{x/k} - \frac{x^2}{4 k^{2}}$ is non-decreasing for $x \in [0, k]$.

The proof is completed by noting that for any $X$ such that $\lvert X \rvert > k$, we have $f_T(i \mid X \setminus \{i\}) = 0$ and thus $g_T(X) = -\infty$.
\end{proof}

We next present the lemma establishing (3), namely, how to implement a demand query for $f_T$ using poly-many value queries to $f_T$.
\begin{lemma}
\label{lem:demand}
    For every set $T \subseteq \actions $ of even size $4 \leq k<|\actions|$,  
    answering a demand query for $f_T$ can be done with polynomially many value queries to $f_T$ (without knowing the set $T$).
\end{lemma}

\begin{proof}
We present an algorithm that answers a demand set using poly many value queries. 

Let $p_1,\ldots,p_n$ be the 
price vector whose demand is desired; that is, 
we want to find a set $S$ that maximizes  $f_T(S) -\sum_{j\in S} p_j$. Without loss of generality assume that $p_1 \leq p_2\leq \ldots \leq p_n$.

We first observe that the algorithm can learn $k$ (i.e., the size of $T$), by a single value query because $f_T(\{i\}) = 1 / \sqrt{k}$ for every $i \in \actions$. 

Next, the algorithm issues value queries of the form $f_T([i])$ for all $i\in [k]$.
If $f_T([i]) = \sqrt{i/k}$ for all $i\in [k]$, then the algorithm returns the set $S$ that maximizes $f_T(S) -\sum_{j\in S} p_j$ among all these sets. This must be a demand set since $[i]$ is the cheapest set of size $i$, no set of size $i$ has value greater than $\sqrt{i}$, and there is always a demand set of size at most $k$.

Otherwise, there exists $i \in [k]$ such that $f_T([i]) \neq \sqrt{i/k}$. 
We show that in this case the algorithm can find the set $T$ with poly-many additional value queries. We then show that if the algorithm knows $T$ then it can find a demand set with poly-many value queries.

Let $i^\star$ be the minimal $i$ such that $f_T([i]) \neq \sqrt{i/k}$. By the definition of $f_T$, we get that $i^\star<k$ and $|[i^\star] \cap T |> \frac{k}{2}$.
Since $f_T([i^\star-1]) =\sqrt{(i^\star-1)/k}$ %\pdc{was: $[i^\star-1] =\sqrt{i^\star-1}$}
and $i^\star-1 <k$, it must be that $|[i^\star-1] \cap T| \leq \frac{k}{2} $, which implies that $i^\star \in T$.

In order to determine which of the actions $i^\star+1, \ldots, n$ belong to $T$, the algorithm issues for each $j \notin [i^\star]$ the query $f_T([i^\star-1]\cup\{j\})$. 
If for $j$ it holds that $ f_T([i^\star-1]\cup\{j\}) = f_T([i^\star]) $, it means that 
%\pdc{I dropped the word ``either'' here}
$j\in T$, otherwise, $f_T([i^\star-1]\cup\{j\})  = \sqrt{i^\star/k}$ and $j \notin T$.
Finally, in order to determine whether $j \in T$ for any $j \in [i^\star-1]$,  
let $j^\star$ be an arbitrary action in $[n] \setminus [i^\star]$ 
(such exists since $i^\star <k$).  The algorithm issues the query  $ f_T([i^\star] \cup \{j^\star\} \setminus \{j\})  $. If $ f_T([i^\star] \cup \{j^\star\} \setminus \{j\})  = f_T([i^\star])$ then $j \in T $ iff $j^\star \in T$. Otherwise, $j \in T $ iff $j^\star \not\in T$. Since whether $j^\star$ in $T$ is already known, it means that we know for each $i\in \actions$ whether it is in $T$ or not.

Once the identity of $T$ is known, one can answer a demand query as follows.
Go over all options for $x,y$ satisfying $0 \leq x+y \leq k$, and consider the set $S$ composed of the $x$ cheapest elements in $\actions \setminus T$, and the $y$ cheapest elements in $T$. The set $S$ among these sets that maximizes $f_T(S) -\sum_{j\in S} p_j$ is a demand set.

This algorithm returns a demand set in polynomial time using $O(|\actions|)$ value queries, completing the proof of the lemma.
\end{proof}

We are now ready to present the proof of Theorem~\ref{thm:upper}.

\begin{proof}[Proof of Theorem~\ref{thm:upper}] 
To prove the theorem, it suffices to show that the statement holds with respect to a deterministic algorithm against a randomized input, where the set $T$ that defines the function $f_T$ is chosen uniformly at random from all sets of size $k = 2 \cdot  \lfloor \frac{n}{10}\rfloor$. 
(Note that restricting the sets $T$ that the adversary chooses from only strengthens our statement.)

By Lemma~\ref{lem:demand}, it suffices to consider a deterministic algorithm that issues only value queries, and we may assume without loss of generality that the returned set is queried (otherwise, just add this query at the end of the algorithm).

By Lemma~\ref{lem:sub-opt}, for large enough $k = 2 \cdot \lfloor \frac{n}{10} \rfloor$, 
in order to get better than $\eta = 1.03$ approximation, 
the algorithm must issue a value query for a set $S$ for which  $|S| \leq k $ and $|S \cap T| > \frac{k}{2}$. We call such a query a {\em good} query.

Since the probability of a query $S$ to be good is increasing in the size of the set $S$ (up to size $k$), the probability that a single query is successful is upper bounded by the probability that a hypergeometric random variable with parameters $(n,k,k)$ is greater than $k/2$. For our choice of $k$ (namely, $k = 2 \cdot  \lfloor \frac{n}{10}\rfloor$), this probability is exponentially small in $n$.

We note that, since we bound the probability that the algorithm issues a good query, it is without loss of generality to assume that the queries are non-adaptive. Indeed, at any point in time, it can be assumed that none of the previous queries were good, and for queries that are not good, no information beyond this fact is learned.

In summary, every query is good with exponentially small probability, thus, by the union bound, the probability of issuing a good query is exponentially small for a sub-exponential number of queries.
This concludes the proof.
\end{proof}

We remark that our proof applies to submodular success probability functions, but does not rule out the existence of a PTAS (or even FPTAS) for the special case of gross-substitutes functions---an important special case of submodular functions. The definition of gross-substitutes functions, along with the observation and proof that our construction does not satisfy the gross substitutes condition appear in Appendix~\ref{app:upper-submod}.

\subsection{Inapproximability Result for Subadditive}
\label{sec:subadd-lower}

In the remainder of this section, we state and prove our inapproximability result for subadditive reward functions. We show that for this broader class of reward functions, no algorithm can achieve a better than $O(\sqrt{n})$ approximation.

\begin{theorem}\label{thm:subadditive-demand}
For every randomized algorithm that performs only polynomially many value and demand queries, there is a subadditive function $f$ such that the approximation ratio is $\Omega(\sqrt{n})$ with high probability.
\end{theorem}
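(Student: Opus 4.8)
The plan is to follow the proof sketch in the introduction and realize it carefully. First I would build a base symmetric subadditive reward function $f_0$ on $n$ agents that depends only on the cardinality $|S|$, designed so that the induced principal utility $g_0$ is $O(1)$ for every set with $|S| < \sqrt{n}$ and is $\le 0$ as soon as $|S| \ge \sqrt{n}$. Concretely I would set all costs equal to some common $c$, and choose $f_0(S) = \phi(|S|)$ for a concave-ish function $\phi$ (e.g.\ something like $\phi(k) = \sqrt{k}$ up to $k=\sqrt n$ and then flattening / growing slowly) so that the marginals $f_0(i \mid S \setminus \{i\}) = \phi(|S|) - \phi(|S|-1)$ shrink fast enough that $\sum_{i \in S} c / f_0(i \mid S\setminus\{i\})$ exceeds $1$ once $|S| \ge \sqrt n$; subadditivity (indeed monotonicity and subadditivity) of $\phi(|S|)$ is easy to check for concave $\phi$ with $\phi(0)=0$. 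This pins down $\OPT$ for the base instance at $\Theta(1)$.

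Next I would perform the random perturbation: pick a uniformly random set $T^\star \subseteq A$ with $|T^\star| = n/2 + 1$ and define $f(S) = \max\{ f_0(S),\, h(S) \}$ where $h$ is a function that is large precisely on supersets-close-to $T^\star$ — e.g.\ $h(S) = \lambda \cdot \mathbbm{1}[S \supseteq T^\star]$ (or a slightly smoothed version $h(S)$ depending on $|S \cap T^\star|$ and $|S|$) with $\lambda = \Theta(\sqrt n)$ chosen so that $g(T^\star) = (1 - (n/2+1) c / (\text{marginal at } T^\star)) f(T^\star) = \Omega(\sqrt n)$. The two things to verify here are (i) $f$ is still monotone and subadditive — taking a max of a subadditive function with a monotone function supported on an up-set, with the right value, preserves subadditivity, and I'd check this by a short case analysis on whether each of $S, S', S\cup S'$ contains $T^\star$; and (ii) the perturbation does not help any other set: for $S \ne T^\star$ (and $S$ not essentially equal to $T^\star$) we still have $g(S) = O(1)$, because either $f(S) = f_0(S)$, or $S$ contains $T^\star$ and then $|S| \ge n/2+1 \gg \sqrt n$ so the payments needed to incentivize it blow past $1$. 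So $T^\star$ is the unique (up to negligible perturbation) near-optimal set, and the principal must essentially identify $T^\star$.

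The main obstacle, and the step I'd spend the most care on, is the information-theoretic lower bound: showing no algorithm making polynomially many value and demand queries can find $T^\star$ with non-negligible probability. The key point is that $f$ differs from the symmetric function $f_0$ only on sets $S$ with $S \supseteq T^\star$ (or $|S \cap T^\star|$ very close to $|S|$), and since $|T^\star| = n/2+1$, such sets are exponentially rare. For a value query on $S$: the answer differs from $f_0(S)$ only if $S \supseteq T^\star$, which requires $S$ to already ``contain'' the random set — the probability of this over the choice of $T^\star$, for any fixed $S$ with $|S| \le n/2$, is $0$, and for $|S| > n/2$ it is at most $\binom{|S|}{n/2+1} / \binom{n}{n/2+1}$, exponentially small; so with high probability every value query just returns the base value and reveals nothing about $T^\star$. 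For a demand query with prices $p$: the returned set is a maximizer of $f(S) - \sum_{i\in S} p_i$; I would argue that unless the prices are tuned in a way that already encodes almost-complete knowledge of $T^\star$, the maximizer is the same as for $f_0$ (the ``bonus'' $h$ is only available on a set of size $\ge n/2+1$, which is expensive under generic prices), so again the answer is independent of $T^\star$; and I'd handle the remaining prices by noting that the demand oracle can only ``point at'' $T^\star$ if the algorithm essentially fed in $T^\star$, which it cannot do without knowing it. Making the demand-query case fully rigorous — an adversary/adaptive argument showing that the posterior on $T^\star$ stays close to uniform after polynomially many adaptive queries — is the technical heart; I would phrase it as: condition on the event $\mathcal{E}$ that all queries so far have been answered identically to the base instance $f_0$, show $\Pr[\mathcal{E}]$ stays $\ge 1 - \mathrm{poly}(n) \cdot 2^{-\Omega(n)}$, and conclude that conditioned on $\mathcal{E}$ the algorithm's output is independent of $T^\star$, hence equals $T^\star$ with probability at most $\mathrm{poly}(n)\cdot\binom{n}{n/2+1}^{-1} + \mathrm{poly}(n) 2^{-\Omega(n)}$. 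Combining, the expected (or with-high-probability) approximation ratio is $\Omega(\sqrt n / 1) = \Omega(\sqrt n)$.

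Finally I would double-check the quantitative choices: that $\lambda$ and $\phi$ can be chosen simultaneously so that (a) $g_0 \le O(1)$ everywhere and $\le 0$ beyond $\sqrt n$, (b) $g(T^\star) = \Omega(\sqrt n)$, (c) monotonicity and subadditivity hold globally, and (d) $f$ is within a factor $2$ of a submodular function (this last is recorded separately as Observation~\ref{obs:subadditive-close}, and would follow because $f_0(S) = \phi(|S|)$ with $\phi$ concave is already submodular and the $\max$ with a bounded bonus only distorts it by a constant factor). If the exact arithmetic of the cardinality profile is delicate, I would allow a smoothed bonus $h(S)$ depending on $|S \cap T^\star|$ rather than the all-or-nothing indicator, which makes subadditivity and the ``no other good set'' claim more robust at the cost of a slightly longer case analysis.
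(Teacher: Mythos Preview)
Your high-level plan (symmetric base function, random hidden set $T^\star$, Yao's principle) is the paper's, but the concrete perturbation you propose breaks both the hiding argument and subadditivity. With $h(S)=\lambda\cdot\mathbbm{1}[S\supseteq T^\star]$ and $\lambda=\Theta(\sqrt n)$, a single value query on $A\setminus\{j\}$ returns $\lambda$ iff $j\notin T^\star$ (since $A\setminus\{j\}\supseteq T^\star$ exactly when $j\notin T^\star$), so $n$ value queries recover $T^\star$; your claim that $\Pr[S\supseteq T^\star]$ is ``exponentially small'' for $|S|>n/2$ is false when $|S|$ is close to $n$. Likewise, a demand query with all prices equal to some small $p$ makes $\lambda-p(n/2+1)$ beat $\max_k(\phi(k)-pk)$ and returns $T^\star$ outright --- ``expensive under generic prices'' does not hold for uniform prices, which encode no knowledge of $T^\star$. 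A smoothed $h$ depending on $|S\cap T^\star|$ only exposes membership of each $j$ through more queries. Separately, subadditivity fails: take disjoint $S,S'$ with $S\cup S'=T^\star$; then $f(S\cup S')=\lambda=\Theta(\sqrt n)$ but $f(S)+f(S')=f_0(S)+f_0(S')=O(\phi(\sqrt n))=o(\sqrt n)$ under your ``$\phi$ flattens after $\sqrt n$'' profile.

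The paper sidesteps all of this by pushing the $\Theta(\sqrt n)$ value into the \emph{symmetric} base function (so $f(S)\approx\sqrt n$ already for $|S|>n/2$, but with marginals $\le 1$ there, hence $g\le 0$) and then perturbing at a \emph{single point}: $f_T(S)=f(S)+\mathbbm{1}[S=T]$. The $+1$ bump lifts each marginal at $T$ to $2$, giving $g_T(T)=\Theta(\sqrt n)$, while subadditivity survives because the bump is tiny relative to the base values. Since $f_T$ and $f$ now differ only at one set, a value query is informative only if it queries $T$ exactly; for demand queries the paper proves, by a case analysis on how many prices lie below $1/4$ versus above $1/2$, that any fixed price vector can change the demand answer for at most $n^{29}$ choices of $T$ (Lemma~\ref{lem:size-dem}). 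That polynomial-per-query bound against $\binom{n}{n/2+1}$ candidates is the technical heart, and it is precisely what an up-set perturbation makes impossible.
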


For simplicity of the proof, we use a non-normalized reward function $f:2^\actions \rightarrow \mathbb{R}_{\geq 0}$ instead of a normalized reward function $f:2^\actions \rightarrow [0,1]$. The proof can be adjusted to use a normalized function by dividing both the function and all the costs by $f(A)$.

As in our impossibility result for submodular rewards, we use that, in order to show an impossibility for randomized algorithms, it suffices to define a probability distribution over instances and consider an arbitrary deterministic algorithm on a randomly drawn instance.

To this end, we construct a subadditive function $f$ such that $g(S) = O(1)$ for $|S| < \sqrt{n}$ and $g(S) \leq 0$ if $|S| \geq \sqrt{n}$. We then modify $f$ by increasing the value of a random set $T$ of size $\frac{n}{2}+1$. 
We argue that monotonicity and subadditivity are preserved by this change. We then show that this change significantly increases the principal's utility from $T$ to $\Omega(\sqrt{n})$, while keeping the principal's utility from any other set $S \neq T$ at $O(1)$. Hence, in order to achieve a better than $O(\sqrt{n})$-approximation to the optimal principal's utility, the algorithm has to identify the randomly chosen set $T$. Finally, we show that any value or demand query reveals information only about a polynomial number of candidate sets. Since there are exponentially many choices of the target set $T$, this leads to the claimed inapproximability result.

We describe the random instances in Section~\ref{sec:subadditive-instances}, and establish properties of the reward functions and the principal's utility function used in the construction in Section~\ref{sec:subadditive-reward} and Section~\ref{sec:subadditive-principal-utility}. 
We prove a lemma that limits the power of demand queries in these random instances in Section~\ref{sec:subadditive-demand-queries}, and give the proof of Theorem~\ref{thm:subadditive-demand} in Section~\ref{sec:subadditive-proof}.

\subsubsection{Distribution Over Subadditive Reward Functions \texorpdfstring{$f_T$}{f\_T}}\label{sec:subadditive-instances}

To define the probability distribution over instances, let $n \in \mathbb{N}$ be an even square number. In every instance, the agents' costs are the same, namely $c_i=\frac{2}{n}$ for every agent $i$.

The instances differ in terms of the reward functions. To define them, consider the following 
(monotone) subadditive function $f$ over $n$ agents:

\[
f(S)=
\begin{cases}
3+\frac{2 \cdot |S|}{\sqrt{n}} & |S|\leq \frac{n}{2}\\
4  + \sqrt{n} & |S|= \frac{n}{2}+1 \\
5 + \sqrt{n} & |S| = \frac{n}{2}+2 \\
6 + \sqrt{n} & |S| \geq \frac{n}{2}+3 \\
\end{cases}
\]

Given a set $T$ of size $n/2+1$, let $f_T$ be the function where $f_T(S) = f(S) + \indicator{S=T}$.  Accordingly, let $g_T$ be the associated principal's utility function. 
The random instance consists of $f_{T^\star}$, where $T^\star$ is a uniformly drawn set of size $n/2 + 1$.

\subsubsection{Properties of the Reward Function \texorpdfstring{$f_T$}{f\_T}}\label{sec:subadditive-reward}

We state and prove properties of the reward functions $f_T$.
We first show that, for every $T$, the function $f_T$ is monotone and subadditive.

\begin{claim} \label{cl:f-sub}
For every $T$, the function $f_T$ is monotone and subadditive.
\end{claim}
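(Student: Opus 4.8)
The plan is to verify the subadditivity inequality $f_T(S) + f_T(S') \geq f_T(S \cup S')$ for all sets $S, S' \subseteq \agents$. First I would reduce to checking $f$ itself is subadditive, and then argue that the single ``bump'' of $+1$ at the set $T$ preserves subadditivity. For the latter: the bump only helps the left-hand side (when $S = T$ or $S' = T$) and only hurts the right-hand side (when $S \cup S' = T$); the one potentially dangerous case is $S \cup S' = T$ with neither $S$ nor $S'$ equal to $T$, i.e. a proper partition-like cover of $T$. In that case $f_T(S\cup S') = f(T)+1$ while $f_T(S)+f_T(S') = f(S)+f(S')$, so I need $f(S) + f(S') \geq f(T) + 1$ whenever $S \cup S' = T$, $S,S' \subsetneq T$, $|T| = n/2+1$. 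Since $S, S' \subsetneq T$ and $S \cup S' = T$, both $S$ and $S'$ are nonempty, so $f(S), f(S') \geq f(\{pt\}) = 3 + 2/\sqrt n$; hence $f(S)+f(S') \geq 6 + 4/\sqrt n$, which for $n \geq 16$ comfortably exceeds $f(T)+1 = 5 + \sqrt n$... wait, that fails for large $n$. So I must use the sizes more carefully: $|S| + |S'| \geq |S \cup S'| = n/2+1$, so $\max(|S|,|S'|) \geq (n/2+1)/2 > n/4$, and more to the point at least one of them, say $S$, has $|S| \geq \lceil (n/2+1)/2\rceil$. Actually both could have size up to $n/2$ (since proper subsets of a set of size $n/2+1$). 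The key realization is that since $|S|,|S'| \leq n/2$ we are in the top branch $f(X) = 3 + 2|X|/\sqrt n$ of the first case for both, so $f(S) + f(S') = 6 + 2(|S|+|S'|)/\sqrt n \geq 6 + 2(n/2+1)/\sqrt n = 6 + \sqrt n + 2/\sqrt n \geq 5 + \sqrt n + 1 = f(T)+1$, as needed. Good — so this case works using $|S|+|S'|\geq n/2+1$ and that both $S,S'$ fall in the linear regime.

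Next I would check that $f$ itself is subadditive. The cleanest route is to write $f$ as a maximum (or, better, show it dominates the natural linear lower bounds and use monotone concavity in cardinality). Since $f$ depends only on $|S|$, write $f(S) = \phi(|S|)$ where $\phi(0) = 0$, $\phi(k) = 3 + 2k/\sqrt n$ for $1 \le k \le n/2$, $\phi(n/2+1) = 4+\sqrt n$, $\phi(n/2+2) = 5+\sqrt n$, and $\phi(k) = 6+\sqrt n$ for $k \ge n/2+3$. A symmetric function $\phi(|S|)$ is subadditive provided $\phi$ is subadditive as a function on integers \emph{and} $\phi(k) \ge 0$ with the right monotonicity; more precisely it suffices that $\phi(0)=0$, $\phi$ is nondecreasing, and $\phi(a) + \phi(b) \ge \phi(a+b)$ for all $a,b \ge 0$ (when $a + b \le n$; for $a+b > n$ we only need $\phi(a)+\phi(b) \ge \phi(n)$, which follows since $\phi$ is nondecreasing and bounded, and one of $a,b$ exceeds $n/2$, hmm — I should check this). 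I would verify $\phi(a)+\phi(b) \ge \phi(a+b)$ by cases on which pieces $a$, $b$, $a+b$ land in: the ``$+3$'' baseline for any nonempty set gives a big head start, since $\phi(a)+\phi(b) \ge 6$ whenever $a,b \ge 1$, and the largest jump $\phi$ ever makes from a nonempty argument is bounded; the only subtle sub-case is when $a$ or $b$ equals $0$ (trivial, equality) and when $a+b$ crosses into the $\sqrt n$ regime from two arguments each of size $\le n/2$, which is exactly the computation done above: $\phi(a)+\phi(b) = 6 + 2(a+b)/\sqrt n$ and for $a+b \in \{n/2+1, n/2+2\}$ or larger we compare against $\le 6 + \sqrt n$.

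The main obstacle I anticipate is just organizing the case analysis cleanly — there are several cardinality regimes for each of $|S|$, $|S'|$, $|S \cup S'|$, and one has to avoid a combinatorial explosion of subcases. I expect the author handles this by first isolating the bump (reducing $f_T$ to $f$), then exploiting the $+3$ baseline to kill all ``small union'' cases at once, leaving only the handful of cases where the union is large enough to reach the $\sqrt n$ plateau; in those, both constituent sets being capped at size $n/2$ forces them into the linear regime and the inequality $6 + 2(|S|+|S'|)/\sqrt n \ge 6 + \sqrt n$ closes the argument. I would present it in that order: (1) $f_T$ subadditive given $f$ subadditive; (2) reduce $f$ subadditive to $\phi$ subadditive on integers; (3) dispatch cases using the baseline and the size bound $|S|+|S'| \ge |S\cup S'|$.
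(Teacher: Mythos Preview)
Your approach is correct and essentially the same as the paper's: both arguments case-split on the sizes of the two sets, use the $+3$ baseline to dispatch the small-union cases, and close the large-union case via $6 + 2(|S|+|S'|)/\sqrt{n} \geq 6 + \sqrt{n}$ when both sets sit in the linear regime. The only organizational difference is that the paper works directly with $f_T$ (restricting to disjoint $S_1,S_2$, which is fine by monotonicity) and absorbs the bump into the uniform bound $f_T(S) \leq 6+\sqrt{n}$ rather than peeling it off as a separate case.
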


\begin{proof}
We first argue that $f_T$ is monotone. Since $f$ is monotone and $f_T$ only increases the value of the set $T$ by $1$, it suffices to observe that for any $S \supset T$ it holds that $f_T(S) = f(S) \geq f_T(T)$. To establish subadditivity, consider any disjoint sets $S_1,S_2 \subseteq \actions$. We show that $f_T(S_1)+f_T(S_2) \geq f_T(S_1 \cup S_2)$. Distinguish two cases. 

\noindent {\bf Case 1: $|S_1|,|S_2|\leq \frac{n}{2}$.}
In this case $f_T(S_1)+f_T(S_2) = 6+\frac{2\cdot (|S_1|+ |S_2|)}{\sqrt{n}}$. 
If $|S_1|+|S_2|\leq \frac{n}{2}$, then $f_T(S_1\cup S_2) \leq 3+\frac{2\cdot(|S_1|+|S_2|)}{\sqrt{n}} \leq 6+\frac{2\cdot(|S_1|+|S_2|)}{\sqrt{n}} =f_T(S_1)+f_T(S_2)$.
If $|S_1|+|S_2| > \frac{n}{2}$, then $f_T(S_1\cup S_2) \leq 6+\sqrt{n} \leq 6+\frac{2\cdot(|S_1|+|S_2|)}{\sqrt{n}}=f_T(S_1)+f_T(S_2)$.

\noindent {\bf Case 2: $|S_1|>\frac{n}{2}, |S_2|\geq 1$.}
In this case $f_T(S_1)+f_T(S_2) \geq 4+\sqrt{n}+3 \geq f_T(S_1 \cup S_2)$.
\end{proof}

We also observe that for every $T$, the function $f_T$ is constant-factor close to a submodular function $f'$. In fact, it is even constant-factor close to a \emph{symmetric} submodular function $f'$, i.e., a submodular function that depends only on the cardinality of the set.\footnote{Since symmetric submodular functions are gross-substitutes \citep{gul1999walrasian}, these subadditive functions are even close to gross-substitutes functions.}

\begin{observation}\label{obs:subadditive-close}
There is a (monotone) symmetric submodular function $f'$ that fulfills
\[
f'(S) \leq f_T(S) \leq \left( 1 + \frac{3}{3 + \sqrt{n}} \right) f'(S)
\]
for all $S$ and $T$.
\end{observation}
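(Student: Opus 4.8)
The plan is to exhibit an explicit symmetric submodular function $f'$ that approximates every $f_T$ from below to within the stated factor. Since $f_T(S) = f(S) + \indicator{S=T}$ and the single extra unit added at $T$ only increases the value, it suffices to find $f'$ with $f'(S) \le f(S)$ for all $S$ and $f_T(S) \le (1 + \tfrac{3}{3+\sqrt{n}}) f'(S)$ for all $S$; the worst case for the upper inequality will be at $S = T$, where $f_T(T) = 4 + \sqrt{n}$ (one more than $f(T) = 3 + \sqrt{n}$).

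First I would define $f'$ to be a symmetric function depending only on $|S|$, designed to sit just below $f$ on every cardinality while remaining submodular (i.e., having nonincreasing marginals in $|S|$). The natural choice is a concave, piecewise-linear cardinality function: set $f'(\emptyset) = 0$, let it rise with slope $\tfrac{2}{\sqrt{n}}$ (matching $f$'s behaviour for $|S| \le n/2$ but without the additive constant $3$), so that $f'(S) = \tfrac{2|S|}{\sqrt{n}}$ for $|S| \le n/2$, hence $f'(S) = \sqrt{n}$ at $|S| = n/2$; then give it marginal $1$ for the step to $|S| = n/2+1$ and marginal $0$ thereafter, so $f'(S) = \sqrt{n}+1$ for all $|S| \ge n/2+1$. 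Concretely,
\[
f'(S) = \begin{cases} \frac{2|S|}{\sqrt{n}} & |S| \le \frac{n}{2}, \\ \sqrt{n}+1 & |S| \ge \frac{n}{2}+1. \end{cases}
\]
The marginals are $\tfrac{2}{\sqrt{n}}$ up to $|S| = n/2$, then $1$, then $0$ — nonincreasing provided $\tfrac{2}{\sqrt{n}} \le 1$, i.e., $n \ge 4$, which holds since $n$ is an even square number with $n \ge 4$. So $f'$ is symmetric submodular and monotone.

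Then I would verify the two inequalities cardinality-class by cardinality-class. The lower bound $f'(S) \le f_T(S)$: for $|S| \le n/2$ we have $f'(S) = \tfrac{2|S|}{\sqrt{n}} \le 3 + \tfrac{2|S|}{\sqrt{n}} \le f(S) \le f_T(S)$; for $|S| \ge n/2+1$ we have $f'(S) = \sqrt{n}+1 \le 4+\sqrt{n} \le f(S) \le f_T(S)$. For the upper bound $f_T(S) \le (1 + \tfrac{3}{3+\sqrt{n}}) f'(S) = \tfrac{6+\sqrt{n}}{3+\sqrt{n}} f'(S)$: the delicate cases are the smallest cardinalities where $f'$ is smallest relative to $f_T$. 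For $|S| = 0$ both sides are $0$, fine; for $1 \le |S| \le n/2$ I need $3 + \tfrac{2|S|}{\sqrt{n}} \le \tfrac{6+\sqrt{n}}{3+\sqrt{n}}\cdot \tfrac{2|S|}{\sqrt{n}}$, which is tightest at $|S|=1$ and reduces to checking $3 + \tfrac{2}{\sqrt{n}} \le \tfrac{2(6+\sqrt{n})}{\sqrt{n}(3+\sqrt{n})}$; for $|S| = n/2+1$ (including $S = T$, the binding case) I need $4 + \sqrt{n} \le \tfrac{6+\sqrt{n}}{3+\sqrt{n}}(\sqrt{n}+1)$, i.e., $(4+\sqrt{n})(3+\sqrt{n}) \le (6+\sqrt{n})(\sqrt{n}+1)$, which expands to $12 + 7\sqrt{n} + n \le 6 + 7\sqrt{n} + n$ — wait, this fails, so the constant must actually be verified more carefully and the right comparison is against $f(S)$ not a generic bound. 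The main obstacle, and the step I expect to require the most care, is pinning down exactly which cardinality and which function ($f$ versus $f_T$, and the ``$+1$'' bump) make the ratio $f_T(S)/f'(S)$ largest, and then confirming the clean constant $1 + \tfrac{3}{3+\sqrt{n}}$ genuinely dominates it for all even square $n$; this likely forces a slightly different choice of $f'$ (e.g., shifting $f'$ up by a small constant on large sets, or choosing $f'(S) = 3\lvert S\rvert \cdot \tfrac{2}{\sqrt n}/(\dots)$ type scaling) so that the extremal ratio is attained exactly at $S$ of size $n/2+1$ with the bump, matching $\tfrac{4+\sqrt n}{?}$. I would resolve this by first computing $\sup_S f_T(S)/f(S)$ and $\sup_S f(S)/f'(S)$ separately and choosing $f'$ to equalize the per-class ratios, then reading off the constant.
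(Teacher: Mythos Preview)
Your candidate $f'$ fails on two fronts, both stemming from the decision to drop the additive constant $3$. First, the submodularity check is inverted: with your $f'$, the marginal at the step $|S|=n/2 \to n/2+1$ equals $(\sqrt{n}+1)-\sqrt{n}=1$, while all earlier marginals equal $2/\sqrt{n}$; for $n\ge 4$ this is an \emph{increase}, so your $f'$ is not submodular. Second, as you yourself notice, the upper bound cannot hold: already at $|S|=1$ the ratio $f_T(S)/f'(S)=(3+2/\sqrt{n})/(2/\sqrt{n})\approx \tfrac{3}{2}\sqrt{n}$ is unbounded, because you let $f'$ vanish on small sets while $f_T$ stays near $3$. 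Your proposed remedy of adjusting $f'$ on large sets cannot repair this, since the blow-up happens on small sets.

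The fix is to restore the constant: take $f'(S)=\min\{\,3+2|S|/\sqrt{n},\ 3+\sqrt{n}\,\}$, i.e., match $f$ exactly up to $|S|=n/2$ and then cap. Then $f'(S)=f(S)=f_T(S)$ for all $|S|\le n/2$ (ratio exactly $1$ there), while for $|S|>n/2$ one has $f'(S)=3+\sqrt{n}\le f_T(S)\le 6+\sqrt{n}$, giving the factor $(6+\sqrt{n})/(3+\sqrt{n})=1+3/(3+\sqrt{n})$ directly. Submodularity is immediate since the marginals are $2/\sqrt{n}$ and then $0$. Note that in this construction $f(\emptyset)=3$, so there is no reason to force $f'(\emptyset)=0$.
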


\begin{proof}
Let $f'(S) = \min\{ 3 + \frac{2 \cdot \lvert S \rvert}{\sqrt{n}}, 3 + \sqrt{n} \}$. Note that $f'$, as the minimum between two submodular functions, is submodular. Observe that $f'(S) \leq f_T(S)$ for all $S \subseteq \agents$ with equality for $\lvert S \rvert \leq \frac{n}{2}$. For $\lvert S \rvert > \frac{n}{2}$, we have $f'(S) = 3 + \sqrt{n}$ and $f_T(S) \leq 6 + \sqrt{n}$.
\end{proof}

\subsubsection{Properties of the Principal's Utility Function \texorpdfstring{$g_T$}{g\_T}}\label{sec:subadditive-principal-utility}

We next show that $T$ is the optimal contract for reward function $f_T$, and no other set gives a better than $O(\sqrt{n})$-approximation to 
$g_T(T)$.
This implies that one needs to identify $T$ in order to get a better than $O(\sqrt{n})$-approximation.

\begin{lemma}\label{lem:onlyT}
We have that $g_T(T) \geq \frac{\sqrt{n}}{4}$ and $g_T(S) \leq 5$ for all $S \neq T$.
\end{lemma}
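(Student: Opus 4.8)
The goal is to compute $g_T(T)$ exactly, and to upper-bound $g_T(S)$ for all $S \neq T$. Recall that
\[
g_T(S) = \left(1 - \sum_{i \in S} \frac{c_i}{f_T(i \mid S \setminus \{i\})}\right) f_T(S),
\]
with $c_i = \frac{2}{n}$ for every $i$. So the whole computation reduces to understanding the marginals $f_T(i \mid S \setminus \{i\}) = f_T(S) - f_T(S \setminus \{i\})$ for sets near the threshold $|S| = \frac{n}{2}+1$, plus the effect of the $\indicator{S = T}$ bump.

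\textbf{Step 1: compute $g_T(T)$.} Here $|T| = \frac{n}{2}+1$, so $f_T(T) = f(T) + 1 = 5 + \sqrt{n}$. For $i \in T$, the set $T \setminus \{i\}$ has size $\frac{n}{2}$ and is not equal to $T$, so $f_T(T \setminus \{i\}) = f(T \setminus \{i\}) = 3 + \frac{2 \cdot (n/2)}{\sqrt{n}} = 3 + \sqrt{n}$. Thus every marginal is $f_T(i \mid T \setminus \{i\}) = (5 + \sqrt{n}) - (3 + \sqrt{n}) = 2$. Plugging in,
\[
g_T(T) = \left(1 - \sum_{i \in T} \frac{2/n}{2}\right)(5 + \sqrt{n}) = \left(1 - \frac{|T|}{n}\right)(5 + \sqrt{n}) = \left(1 - \frac{n/2 + 1}{n}\right)(5 + \sqrt{n}) = \left(\frac{1}{2} - \frac{1}{n}\right)(5 + \sqrt{n}),
\]
which is at least $\frac{\sqrt{n}}{4}$ for $n$ large enough (and one should check the small even-square cases, or just note $\left(\frac12 - \frac1n\right)(5+\sqrt n) \ge \frac{\sqrt n}{4}$ for all even square $n \ge 4$ since $\frac12 - \frac1n \ge \frac14$ there and $5 + \sqrt n \ge \sqrt n$).

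\textbf{Step 2: bound $g_T(S)$ for $S \neq T$.} I would split into cases by $|S|$, since $f$ is symmetric away from $T$ and $S \neq T$ means the bump never affects $f_T(S)$ itself (though it could affect a marginal if $S \setminus \{i\} = T$, i.e.\ $|S| = \frac{n}{2}+2$ and $T \subseteq S$ — this case needs care). \emph{(a)} If $|S| \le \frac{n}{2}$: then $f_T(S) = 3 + \frac{2|S|}{\sqrt n} \le 3 + \sqrt n$; since $g_T(S) \le f_T(S)$ always (the payment factor is $\le 1$), this only gives $g_T(S) \le 3 + \sqrt n$, which is too weak, so I need to use the costs. For $i \in S$ with $|S| \le \frac n2$, the marginal is $f_T(S) - f_T(S \setminus \{i\}) = \frac{2}{\sqrt n}$ (both sizes are $\le \frac n2$ and neither set is $T$). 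Hence $\sum_{i \in S} \frac{c_i}{f_T(i \mid S \setminus \{i\})} = |S| \cdot \frac{2/n}{2/\sqrt n} = \frac{|S|}{\sqrt n}$, so $g_T(S) = \left(1 - \frac{|S|}{\sqrt n}\right)\left(3 + \frac{2|S|}{\sqrt n}\right)$. Maximizing this concave-ish expression over $|S| \in \{0, 1, \dots, \frac n2\}$: writing $t = |S|/\sqrt n \in [0, \frac{\sqrt n}{2}]$, it is $(1-t)(3+2t) = 3 - t - 2t^2$, decreasing in $t$ for $t \ge 0$, so maximized at $t = 0$, giving $g_T(S) \le 3$. (Also $g_T(\emptyset) = 0$.) \emph{(b)} If $|S| = \frac n2 + 1$ and $S \neq T$: then $f_T(S) = f(S) = 4 + \sqrt n$, and for $i \in S$, $S \setminus \{i\}$ has size $\frac n2$ so $f_T(i \mid S \setminus \{i\}) = (4 + \sqrt n) - (3 + \sqrt n) = 1$. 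Then $\sum_{i \in S} \frac{c_i}{\cdot} = (\frac n2 + 1) \cdot \frac{2/n}{1} = 1 + \frac 2n > 1$, so $g_T(S) = \left(1 - (1 + \frac 2n)\right)(4 + \sqrt n) = -\frac 2n (4 + \sqrt n) < 0 \le 5$. \emph{(c)} If $|S| = \frac n2 + 2$: $f_T(S) = 5 + \sqrt n$; for $i \in S$, $S \setminus \{i\}$ has size $\frac n2 + 1$, with value $4 + \sqrt n$ unless $S \setminus \{i\} = T$ (possible for exactly one $i$, when $T \subsetneq S$), in which case the value is $5 + \sqrt n$ and the marginal is $0$ — making $g_T(S) = -\infty$. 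Otherwise all marginals are $1$, and $\sum_i \frac{c_i}{1} = (\frac n2 + 2)\frac 2n = 1 + \frac 4n > 1$, so $g_T(S) < 0$. \emph{(d)} If $|S| \ge \frac n2 + 3$: $f_T(S) = 6 + \sqrt n$; for $i \in S$, $S \setminus \{i\}$ has size $\ge \frac n2 + 2$, so $f_T(S \setminus \{i\}) \in \{5 + \sqrt n, 6 + \sqrt n\}$, hence the marginal is $\le 1$, giving $\sum_i \frac{c_i}{f_T(i \mid \cdot)} \ge |S| \cdot \frac{2/n}{1} \ge (\frac n2 + 3)\frac 2n > 1$, so again $g_T(S) < 0$. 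Collecting all cases, $g_T(S) \le 5$ for every $S \neq T$ — in fact $\le 3$, but $5$ suffices and gives slack for the $|S| = \frac n2 + 1$, $S \ne T$ case if one prefers a cruder bound.

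\textbf{Main obstacle.} The only genuinely delicate points are the bookkeeping around $|S| = \frac n2 + 2$ when $T \subseteq S$ (where one marginal vanishes, sending $g_T(S)$ to $-\infty$, which is harmless but must be noted), and making sure the case $|S| = \frac n2 + 1$ with $S \ne T$ is handled by the \emph{unmodified} $f$ values so the bump doesn't sneak in. Everything else is routine arithmetic with the four-piece definition of $f$; the bound $g_T(S) \le 5$ is quite loose and is chosen to absorb these edge cases cleanly.
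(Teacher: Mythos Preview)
Your proof is correct and follows essentially the same case analysis as the paper's; the paper is slightly terser, collapsing your cases (b)--(d) into a single observation that $f_T(i \mid S\setminus\{i\}) \le 1$ whenever $|S| \ge \frac{n}{2}+1$ and $S \ne T$ (which already absorbs the $T \subsetneq S$ corner you singled out, since there the marginal is $0 \le 1$). One minor slip: in this construction $f(\emptyset) = 3$, not $0$, so $g_T(\emptyset) = 3$ rather than $0$ --- but your formula $(1-t)(3+2t)$ already evaluates to $3$ at $t=0$, so the bound $g_T(S) \le 3$ in case (a) is unaffected.
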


\begin{proof}
First note that for every $i \in T$, $f_T(i \mid T \setminus \{i\})=2$.
Thus, $$g_T(T)=f_T(T)\left(1-\sum_{i\in T}\frac{c_i}{f_T(i \mid T \setminus \{i\})}\right) \geq \frac{\sqrt{n}}{4}.$$ 
Next, consider any $S \neq T$. We can distinguish the following cases. 
\begin{itemize}
    \item If $|S|\leq 1$, then $g_T(S)\leq f_T(S) \leq  3+\frac{2}{\sqrt{n}}$.
    \item If $1 < |S| \leq \frac{n}{2}$, then $f_T(i \mid S \setminus \{i\}) = \frac{2}{\sqrt{n}}$, thus 
    \begin{eqnarray*}
    g_T(S) & = & f_T(S)\left(1-\sum_{i\in S}\frac{c_i}{f_T(i \mid S \setminus \{i\})}\right) = f_T(S)\left(1-\frac{1}{\sqrt{n}}|S|\right) \\ & \leq & \begin{cases}
    5\cdot \left(1-\frac{|S|}{\sqrt{n}}\right) & |S|<\sqrt{n}\\
    0 & |S|\geq \sqrt{n}\\
    \end{cases}
    \end{eqnarray*}
    \item If $|S| \geq \frac{n}{2}+1$ and $S\neq T$, then 
    $f_T(i \mid S \setminus \{i\}) \leq 1$ for all $i\in S$, thus 
    $$g_T(S)=f_T(S)\left(1-\sum_{i\in S}\frac{c_i}{f_T(i \mid S \setminus \{i\})}\right) \leq f_T(S)\left(1-|S|\cdot \frac{2}{n}\right) \leq 0.$$
\end{itemize}
Hence for any $S \neq T$ it holds that $g_T(S) \leq 5$.
\end{proof}

\subsubsection{Limiting the Power of Demand Queries}\label{sec:subadditive-demand-queries}

We now show that with each demand query one can distinguish at most $n^{29}$ sets $T^\star$. 
Together with Lemma~\ref{lem:onlyT}, this will yield our impossibility result.

\begin{lemma} \label{lem:size-dem}
For $n>4096$, for every vector of prices $p=(p_1,\ldots,p_n)$, 
the set of $\{T^\star \mid D(f,p) \neq D(f_{T^\star},p)\}$ is at most of size $n^{29}$, where $D(f,p)$  (resp. $D(f_{T^\star},p)$) is the demand set\footnote{For simplicity of the proof, we assume that when there are multiple sets  in demand, the tie breaking is consistent across all $f_T$.} of function $f$ (resp. $f_{T^\star}$) with respect to prices $p$. 
\end{lemma}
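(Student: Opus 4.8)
The plan is to analyze a demand query at a fixed price vector $p$ and argue that the answer can only change between $f$ and $f_{T^\star}$ for very few sets $T^\star$. The key point is that $f_{T^\star}$ differs from $f$ only on the single set $T^\star$ (of size $n/2+1$), where its value goes up by $1$. So $D(f,p)\neq D(f_{T^\star},p)$ can happen only in one of two ways: either $T^\star$ becomes the new demand set for $f_{T^\star}$ (i.e. $T^\star$ was ``just barely'' not demanded under $f$, and the $+1$ bonus pushes it to the top), or $T^\star = D(f,p)$ itself and the tie-breaking / optimality is unaffected — actually the latter cannot change the demanded set, so the only relevant case is the former: $f(T^\star) - \sum_{i\in T^\star} p_i + 1 \ge f(S) - \sum_{i\in S} p_i$ for the old optimum $S = D(f,p)$, while $f(T^\star)-\sum_{i\in T^\star}p_i < f(S)-\sum_{i\in S}p_i$ (else $T^\star$ would already be tied/demanded, and by the consistent tie-breaking assumption $D(f,p)=D(f_{T^\star},p)$ unless $T^\star$ already was the answer). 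Hence the set of ``dangerous'' $T^\star$ is contained in
\[
\left\{ T : |T| = \tfrac n2+1,\ f(T) - \sum_{i\in T} p_i \ge \text{(value of the demand set of }f) - 1 \right\}.
\]

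First I would let $v^\star = f(D(f,p)) - \sum_{i\in D(f,p)} p_i$ be the optimal demand value for $f$, and observe that every dangerous $T^\star$ satisfies $\sum_{i\in T^\star} p_i \le f(T^\star) - v^\star + 1$. Since all dangerous sets have the fixed cardinality $k := n/2+1$, and $f$ is symmetric on sets of size $k$ (namely $f(T^\star) = 4+\sqrt n$ for every such set), the bound becomes $\sum_{i\in T^\star} p_i \le (4+\sqrt n) - v^\star + 1 =: \Theta$, a fixed threshold. So I need to count sets $T$ of size exactly $k$ with $\sum_{i\in T} p_i \le \Theta$. Sort the prices $p_{(1)} \le p_{(2)} \le \cdots \le p_{(n)}$; any such $T$ must in particular contain... hmm, that's not automatic. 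Instead, the right move is: the minimum possible value of $\sum_{i\in T}p_i$ over $|T|=k$ is $\sigma := p_{(1)} + \cdots + p_{(k)}$ (the $k$ smallest prices). If $\sigma > \Theta$ there are no dangerous sets. Otherwise, I want to bound how many $k$-subsets have price-sum within an additive gap $\Theta - \sigma$ of the minimum. The cleanest way: compare with $D(f,p)$ and use that $v^\star \ge f(T') - \sum_{i\in T'} p_i$ for the particular choice $T'$ = the $k$ smallest-price elements, giving $v^\star \ge (4+\sqrt n) - \sigma$, hence $\Theta = (5+\sqrt n) - v^\star \le \sigma + 1$. Therefore every dangerous $T^\star$ has $\sum_{i\in T^\star} p_i \le \sigma + 1$, i.e. its price-sum exceeds the minimum possible by at most $1$.

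Now the counting: I want to show that at most $n^{29}$ (or any fixed polynomial — the exponent is generous, $29$ just needs $n>4096$ to absorb constants) sets of size $k$ have price-sum at most $\sigma+1$. The subtlety is that many prices could be nearly equal, so naive bounds fail; but here the costs force a cruder observation. Actually the intended argument probably exploits that the reward function values $f(S)$ for $|S|>n/2+2$ are all within $6+\sqrt n$, so prices making a demand query nontrivial are highly constrained; alternatively, one partitions the coordinates by price level. I would argue: let $m$ be the number of indices $i$ with $p_i \le p_{(k)}$ (ties at the boundary). A dangerous $T^\star$ of size $k$ can omit at most "a bounded number" of the $k$ cheapest coordinates, because each swap of a cheap-for-expensive coordinate increases the sum by the price gap; if there were many coordinates with price $> p_{(k)} + 1$ then $T^\star$ could use at most... this needs the price gaps to be large, which they need not be. The honest resolution, and what I expect the paper does, is: among the coordinates, let $B$ be the set of the $k$ cheapest; any dangerous $T$ satisfies $\sum_{i \in T\setminus B} p_i - \sum_{i\in B\setminus T} p_i \le 1$. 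Partitioning prices into ``buckets'' of width, say, $1/n$ over the relevant range and noting only $O(n^2)$-ish buckets are relevant (since prices above some $O(\sqrt n)$ are never in any near-optimal set, as including such an element costs more than the entire reward range $6+\sqrt n$), one gets that $T\setminus B$ and $B\setminus T$ must lie in a bounded number of buckets and have bounded size, yielding a count of the form $\binom{n}{O(1)}^{O(1)} \le n^{O(1)} \le n^{29}$.

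The main obstacle I anticipate is exactly this last counting step: turning ``price-sum exceeds the minimum by at most $1$'' into ``at most polynomially many $k$-subsets,'' since in principle an adversary could set all prices to be extremely close together, making every $k$-subset dangerous. This cannot actually happen here — but ruling it out requires using the structure of $f$ (the reward is bounded by $6+\sqrt n$, so in any demand set no agent's price exceeds that, and the gap between $v^\star$ and suboptimal sets is controlled by how $f$ varies with cardinality, which is $\Theta(1/\sqrt n)$ per element for small sets and then jumps). I would therefore carefully case on whether $D(f,p)$ is small ($|D| \le n/2$), medium, or large, extract in each case a quantitative lower bound on how much a $k$-subset's price-sum must exceed $\sigma$ to have the demand flip, and only then run the bucketing count. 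Getting the constants so that everything fits under $n^{29}$ for $n > 4096$ will be the fiddly part, but it is routine once the structural case analysis is in place.
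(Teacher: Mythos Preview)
Your reduction is correct up to the point where you show every ``dangerous'' $T^\star$ has $\sum_{i\in T^\star}p_i \le \sigma+1$, but---as you yourself flag---this necessary condition is too weak to finish. If all prices equal some common value $q$, every $(n/2+1)$-subset satisfies your condition, which is exponentially many; your bucketing sketch does not rule this out, and no refinement of the comparison to the global minimum $\sigma$ or to $D(f,p)$ will, because that comparison loses the information that $T^\star$ must beat \emph{every} set under $f_{T^\star}$, not just the old optimum.

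The paper's argument avoids this by comparing $T^\star$ not to the global demand set but to \emph{local perturbations} of $T^\star$ itself. Define $S_p=\{i:p_i\le 1/4\}$ and $B_p=\{i:p_i\ge 1/2\}$. If $T^\star=D(f_{T^\star},p)$, then (i) adding two elements of price $<1/2$ to $T^\star$ must not help, but $f$ jumps by $1$ from size $n/2+1$ to $n/2+3$, so there can be at most one such element outside $T^\star$, forcing $|B_p|>n/2-3$; (ii) removing nine elements of price $>1/4$ must not help, but $f$ drops by at most $2+16/\sqrt n<9/4$ when $n>4096$, forcing $|S_p|>n/2-8$; (iii) swapping five elements of $T^\star\cap B_p$ for five of $S_p\setminus T^\star$ must not help, forcing $|T^\star\setminus S_p|\le 14$. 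Since $|S_p|+|B_p|$ is already essentially $n$, the set $S_p$ is pinned down to within $O(1)$ elements, and $T^\star$ differs from it by at most $14+15$ elements, giving the $n^{29}$ count. The idea you are missing is precisely this switch from global to local comparisons: the specific step sizes in $f$ at $|S|=n/2+1,n/2+2,n/2+3$ and the slope $2/\sqrt n$ below $n/2$ are engineered so that these local swaps yield $O(1)$ bounds on $|T^\star\setminus S_p|$ and $|S_p\setminus T^\star|$.
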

Let $S_p=\{i \mid p_i \leq 1/4\}$, and $B_p=\{i \mid p_i \geq 1/2\}$.

\begin{claim}
If $|B_p| \leq n/2 -3$, then for every $T$ {it holds that} $D(f_T,p)=D(f,p)$. \label{cl:big}
\end{claim}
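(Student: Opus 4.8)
The goal is to show that if no price is large---specifically, if fewer than $n/2-3$ agents have price $\geq 1/2$---then the special set $T$ never changes the demand. The key observation is that $f_T$ differs from $f$ only on the single set $T$, where $f_T(T) = f(T)+1$. So $D(f_T,p) \neq D(f,p)$ can only happen if $T$ itself becomes (weakly) more attractive than $D(f,p)$ under the perturbed function, i.e., if $f(T) + 1 - \sum_{i \in T} p_i \geq f(S) - \sum_{i \in S} p_i$ for $S = D(f,p)$. The plan is to rule this out by showing $T$ is never even close to optimal: its net value $f(T) - \sum_{i\in T}p_i$ falls short of the best achievable net value by strictly more than $1$.

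\textbf{Key steps.} First I would record that $|T| = n/2+1$, so $f(T) = 4+\sqrt{n}$ (using the second case of the definition of $f$), hence $f_T(T) - \sum_{i \in T} p_i = 4 + \sqrt{n} + 1 - \sum_{i\in T} p_i$. Second, I would exhibit an explicit competitor set that $T$ must beat: take a set $S$ of size exactly $n/2+1$ that avoids all of $B_p$ --- this is possible precisely because $|B_p| \leq n/2 - 3$ means the complement of $B_p$ has at least $n/2+3 > n/2+1$ elements. For such an $S$, every agent in $S$ has price $< 1/2$, so $\sum_{i \in S} p_i < \frac{1}{2}(n/2+1)$, while $f(S) = 4 + \sqrt{n}$ as well. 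Hmm --- but $\sum_{i\in T} p_i$ could be as small as $0$, so comparing against this crude $S$ is not enough; the naive bound gives $f(S) - \sum_{i\in S} p_i > 4+\sqrt{n} - \frac{n}{4} - \frac12$, which is negative for large $n$ and useless. So instead I would compare $T$ against a \emph{subset} of $T$. The right move: for any $T$, consider $T' \subseteq T$ obtained by removing from $T$ exactly those agents $i \in T$ with $p_i > 2/\sqrt{n}$ (the marginal value of an agent when the set has size $\leq n/2$).

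\textbf{The main argument.} For a set of size between $\sqrt{n}$-ish and $n/2$, adding an agent $i$ increases $f$ by exactly $2/\sqrt{n}$, so in the demand-maximization it is beneficial to include $i$ iff $p_i \leq 2/\sqrt{n}$. Thus the demand set $D(f,p)$, restricted to the regime $|S|\leq n/2$, is essentially ``all agents with $p_i \leq 2/\sqrt{n}$, capped at size $n/2$'' (modulo the base value $3$ and boundary effects at small cardinality, which only help). Now $T$ has size $n/2+1$; when we pass from the size-$(n/2+1)$ value $4+\sqrt n$ down to size $n/2$ we lose $\sqrt{n} - 2/\sqrt{n} + \ldots$, wait --- I need to be careful: $f$ jumps from $3 + \frac{2}{\sqrt n}\cdot \frac n2 = 3+\sqrt n$ at size $n/2$ to $4+\sqrt n$ at size $n/2+1$, a jump of $1$. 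So removing \emph{one} well-chosen agent from $T$ costs only its price in the $p$-term but loses $1$ in $f$; that is a wash against the $+1$ bonus, not a strict win. The genuinely strict gap must come from elsewhere: I would argue that $D(f,p)$ has net value \emph{strictly} exceeding $f_T(T) - \sum_{i\in T} p_i$ by exploiting that $D(f,p)$ can freely include \emph{all} cheap agents (those with $p_i \leq 2/\sqrt n$) up to size $n/2$, whereas $T$ is a \emph{fixed} set that may contain expensive agents; and when $T$ contains no expensive agent, then $T$'s own size-$n/2$ subset $T\setminus\{i^\star\}$ together with the freedom to instead jump to size $n/2+2$ or beyond (where $f = 5+\sqrt n$, $6+\sqrt n$) dominates. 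Concretely I'd split on $|B_p \cap T|$ and on how many agents have $p_i \in (2/\sqrt n, 1/2)$, and in each case produce a competitor beating $f_T(T)-\sum p_i$ by more than $1$ using the cardinality jumps of $f$ at $n/2+1$, $n/2+2$, and the availability of $\geq n/2+3$ cheap-ish agents outside $B_p$.

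\textbf{Main obstacle.} The delicate point is precisely the size-$n/2 \to n/2+1$ jump of exactly $1$ in $f$, which exactly cancels the $+1$ perturbation bonus; the proof must squeeze a \emph{strict} inequality out of the second-order slack --- either the $3$ vs.\ higher base constants at cardinalities $n/2+2$ and beyond, or the fact that $|B_p|\leq n/2-3$ gives \emph{three} spare cheap slots, not zero. I expect the bulk of the work to be a careful case analysis verifying that in every configuration of the price vector consistent with $|B_p| \leq n/2-3$, some set $S$ achieves $f(S) - \sum_{i\in S} p_i \geq f(T) + 1 - \sum_{i \in T} p_i$, so that $D(f_T,p) = D(f,p)$ by the consistent tie-breaking assumption.
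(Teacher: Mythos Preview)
Your reduction is correct: since $f_T$ and $f$ agree everywhere except at $T$, it suffices to show that $T$ is not the demand set for $f_T$. But from there you vastly overcomplicate the argument and never actually execute it. The paper's proof is a single line and uses none of the case analysis you propose.

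The key move you miss is to go \emph{up} in cardinality rather than down. Since $|B_p|\le n/2-3$, the complement $A\setminus B_p$ has size at least $n/2+3$, while $|T|=n/2+1$; hence there exist two distinct elements $x,y\in A\setminus B_p\setminus T$, each with $p_x,p_y<1/2$. Compare $T$ to $T\cup\{x,y\}$ under $f_T$: the set $T\cup\{x,y\}$ has size $n/2+3$, so $f_T(T\cup\{x,y\})=f(T\cup\{x,y\})=6+\sqrt{n}$, whereas $f_T(T)=f(T)+1=5+\sqrt{n}$. Thus
\[
\bigl(f_T(T\cup\{x,y\})-p(T\cup\{x,y\})\bigr)-\bigl(f_T(T)-p(T)\bigr)=1-p_x-p_y>0,
\]
contradicting $D(f_T,p)=T$. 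That is the entire proof.

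Your ``main obstacle'' --- the exact $+1$ jump at the $n/2\to n/2+1$ boundary cancelling the $+1$ bonus --- is a red herring created by your decision to look at \emph{subsets} of $T$. Going up by two elements gains $+2$ in $f$, loses the $+1$ bonus, and costs strictly less than $1$ in prices, so there is genuine slack with no case analysis needed. The hypothesis $|B_p|\le n/2-3$ is exactly what guarantees the two cheap elements outside $T$ exist.
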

\begin{proof}
Since $f_T$ and $f$ disagree only on the value of $T$, it is sufficient to show that $D(f_T,p) \neq T$.
Assume towards contradiction that $D(f_T,p) = T$, then, {because $|A \setminus B_p| \geq n/2+3$ and $|T| = n/2+1$}, the set $\Delta:= \actions \setminus B_p \setminus T $ is of size at least $2$. Let $x,y$ be two arbitrary different elements in $\Delta$, {and note that for these $p(x) < 1/2$ and $p(y) < 1/2$}.
Now $f_T(T\cup \{x,y\}) - p(T\cup \{x,y\}) \geq 1+ f_T(T) -p(T) -p(x)-p(y) > f(T)-p(T)$ which contradicts that $T$ is the demand set for $f_T$.
\end{proof}

\begin{claim}
For $n>4096$, if $|S_p| \leq n/2 - 8$, then for every $T$ {it holds that} $D(f_T,p)=D(f,p)$. \label{cl:small}
\end{claim}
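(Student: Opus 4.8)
The plan is to show that when the set of ``cheap'' items $S_p = \{i \mid p_i \le 1/4\}$ is small, every function $f_T$ has its demand set equal to that of $f$, by arguing that the demand set $D(f,p)$ is already contained in a region where adding a single item of value $1$ (the only way $f_T$ differs from $f$) cannot help, and moreover $T$ itself is never the demand set. First I would record the structural fact about $D(f,p)$ for the symmetric function $f$: since $f$ has marginals $\tfrac{2}{\sqrt{n}}$ on sets of size up to $n/2$, and marginals at most $1$ afterwards, an item $i$ is included in the (tie-broken) demand set of $f$ only if its price is small enough to be ``paid for'' by the relevant marginal; in particular any item with $p_i > 1$ is never in $D(f,p)$, and when $|D(f,p)| \le n/2$ only items with $p_i \le \tfrac{2}{\sqrt{n}} \le 1/4$ (for $n$ large) can be in it, so $D(f,p) \subseteq S_p$ in that regime. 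The case $|D(f,p)| > n/2$ would force $|S_p|$ to be large (since at least $n/2+1$ items must have been worth including, each at price at most $1$, but more carefully at most the jump sizes $1$ allow only $O(1)$ items beyond those with tiny price), contradicting $|S_p| \le n/2 - 8$.

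Next, I would show $D(f_T, p) \ne T$ for every $T$ of size $n/2+1$, which suffices since $f_T$ and $f$ agree off $T$. The idea mirrors Claim~\ref{cl:big}: because $|S_p| \le n/2 - 8$ and $|T| = n/2+1$, the set $T$ contains at least $9$ items with price $> 1/4$. Removing such a high-priced item $x$ from $T$ changes the value by $f_T(T) - f_T(T \setminus \{x\}) = f(T) + 1 - f(T\setminus\{x\})$; since $|T\setminus\{x\}| = n/2$ we have $f(T) - f(T\setminus\{x\}) = (4+\sqrt n) - (3 + \tfrac{2\cdot n/2}{\sqrt n}) = (4 + \sqrt n) - (3 + \sqrt n) = 1$, so the marginal of $x$ in $f_T$ to $T \setminus \{x\}$ is exactly $2$. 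That is too large — wait, that means removing $x$ loses value $2$ but saves price $> 1/4$, which does not immediately contradict optimality. So instead I would compare $T$ against $T \setminus \{x\} \cup \{x'\}$ for a cheap item $x'$, or more robustly against a set of size $n/2$ or $n/2+2$: moving from $T$ to a size-$n/2+2$ superset $T \cup \{z\}$ changes the reward by $f(T\cup\{z\}) - f_T(T) = (5+\sqrt n) - (4+\sqrt n) - 1 = 0$ and removes no price, so $T$ can only be (weakly) optimal if all items outside $T$ have price $0$; but $|S_p| \le n/2-8$ forces some item outside $T$ to have price $> 1/4 > 0$, and by consistent tie-breaking the larger-value region is preferred, so $D(f_T,p)\ne T$.

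Finally I would assemble these: $D(f_T,p) \ne T$ together with $f_T \equiv f$ off $T$ gives $D(f_T,p) = D(f,p)$ for all $T$, which is exactly the claim. The main obstacle I anticipate is pinning down the precise threshold ($n/2 - 8$, and the requirement $n > 4096$) — this comes from carefully bounding how many items beyond the ``$p_i \le 1/4$'' ones can appear in $D(f,p)$ given the small additive jumps (of size $1$) in $f$ past cardinality $n/2$, and from ensuring the numbers $3, 4, 5, 6$ and $\sqrt n$ interact so that the cheap-item count genuinely controls membership; the tie-breaking convention (assumed consistent across all $f_T$) also needs to be invoked cleanly so that whenever $f_T$ has a demand set other than $T$ with the same value, it picks the same one as $f$.
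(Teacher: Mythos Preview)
Your central step has a sign error that breaks the argument. When you pass from $T$ to the superset $T\cup\{z\}$, the $f_T$-value indeed changes by $0$, but adding $z$ \emph{increases} the total price by $p_z$, so
\[
\bigl(f_T(T\cup\{z\}) - p(T\cup\{z\})\bigr) - \bigl(f_T(T) - p(T)\bigr) = -p_z \le 0.
\]
Thus $T$ is (weakly) \emph{better} than $T\cup\{z\}$ for every $z$, and your conclusion ``$T$ can only be (weakly) optimal if all items outside $T$ have price $0$'' is exactly backwards. Invoking tie-breaking does not help: when $p_z>0$ the preference for $T$ is strict, and the paper's tie-breaking assumption is merely ``consistent across all $f_T$,'' not ``prefers larger sets.'' Your single-item removal attempt was on the right track but, as you noticed, the marginal of $x$ at $T$ is $2$, far exceeding the saved price $>1/4$.

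The fix is the comparison you skipped: remove a \emph{batch} of expensive items. Since $|S_p|\le n/2-8$ and $|T|=n/2+1$, the set $T\setminus S_p$ has size at least $9$; take any $X\subseteq T\setminus S_p$ with $|X|=9$. Then $|T\setminus X|=n/2-8\le n/2$, so $f_T(T\setminus X)=3+\sqrt{n}-16/\sqrt{n}$ while $f_T(T)=5+\sqrt{n}$, a value drop of $2+16/\sqrt{n}$. On the price side $p(X)>9\cdot\tfrac14=\tfrac94$. For $n>4096$ one has $16/\sqrt{n}<\tfrac14$, hence $p(X)>2+16/\sqrt{n}$, and $T\setminus X$ strictly beats $T$ under $f_T$; this is precisely where the thresholds $n/2-8$ and $n>4096$ come from. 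Your entire first paragraph analyzing $D(f,p)$ is unnecessary: once $D(f_T,p)\neq T$, agreement of $f_T$ and $f$ off $T$ together with consistent tie-breaking immediately gives $D(f_T,p)=D(f,p)$.
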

\begin{proof}
Since $f_T$ and $f$ disagree only on the values of $T$, it is sufficient to show that $D(f_T,p) \neq T$.
Assume towards contradiction that $D(f_T,p) = T$, then the set $\Delta:= T \setminus S_p  $ is of size at least $9$. Let $X$ be  an arbitrary subset of  $\Delta$ of size $9$. It holds that 
\begin{eqnarray*}
f_T(T\setminus X)-p(T \setminus X) & = & f_T(T)-2-\frac{{16}}{\sqrt{n}} - p(T) + p(X) \\ &\geq & f_T(T)-p(T) +\frac{|X|}{{4}} -2-\frac{16}{\sqrt{n}}  \\ &> & f_T(T)-p(T),
\end{eqnarray*}
which contradicts that $T$ is the demand set.
\end{proof}

\begin{claim}
If  $|S_p| > n/2 - 8$ and  $|B_p| > n/2 -3$, then for all $T \notin \{S \mid (|S| =n/2+1) \wedge (|S\setminus S_p| \leq 14) \}$ {it holds that} $D(f_T,p)=D(f,p)$. \label{cl:mixed}
\end{claim}
\begin{proof}
Since $f_T$ and $f$ disagree only on the values of $T$, it is sufficient to show that $D(f_T,p) \neq T$.
Assume towards contradiction that $D(f_T,p) = T$,
it holds that the set $\Delta_1:= B_p \cap T $ is of size at least $5$ since $|B_p \cap T| \geq |T\setminus S_p| -|\actions \setminus B_p \setminus S_p| \geq 14-9 =5 $, and the set $\Delta_{{2}}:= S_p \setminus T $ is of size at least $6$ since $|S_p \setminus T| = |T \setminus S_p | -  |T| +|S_p| \geq 14 - 8  =6.$
Let $X$ be  an arbitrary subset of  $\Delta_1$ of size $5$ and let $Y$ be an arbitrary subset of $\Delta_2$ of size $5$. It holds that
\begin{eqnarray*}
& & f_T((T\setminus X) \cup Y)-p((T \setminus X) \cup Y) \\ & = & f_T(T)-1 - p(T) + p(X) -p(Y) \\ & \geq & f_T(T)-p(T) +\frac{|X|}{4} -1  >  f_T(T)-p(T),
\end{eqnarray*}
which contradicts that $T$ is the demand set.
\end{proof}

\begin{proof}[Proof of Lemma~\ref{lem:size-dem}]

By combining Claim~\ref{cl:big} and Claim~\ref{cl:small}, 
a demand query can only reveal information about $T$ if
$|S_p| > n/2 - 8$ and  $|B_p| > n/2 -3$, and even then, by Claim~\ref{cl:mixed}, %you
{a demand query}
cannot distinguish between $T$'s not in $\{S \mid (|S| =n/2+1) \wedge (|S\setminus S_p| \leq 14) \}$. %Since 
{Now,} the lemma follows since for every choice of $S_p$ of {size} greater than $n/2-8$, the set $\{S \mid (|S| =n/2+1) \wedge (|S\setminus S_p| \leq 14) \}$ is at most of size $n^{29}$. 
(One can bound it by counting the options to select  a set $S\setminus S_p$ of size at most $14$ and then select a set  $S_p\setminus S$ which is of size at most $15$ since $|S_p\setminus S| = |S\setminus S_p| +|S_p|-|S| \leq 14 + n - |B_p| - (n/2+1) \leq 14 +n - (n/2-2)-(n/2+1)  = 15$.)
\end{proof}

\subsubsection{Putting it All Together}\label{sec:subadditive-proof}

We are now ready to prove the theorem.

\begin{proof}[Proof of Theorem~\ref{thm:subadditive-demand}]
Consider any fixed deterministic algorithm $ALG$, and the random instances described in Section~\ref{sec:subadditive-instances}. 
Let $S_{ALG}$ be the set that the algorithm computes when all demand and value queries are answered according to the function $f$. We claim that on the vast majority of functions $f_T$ the algorithm will also compute $S_{ALG}$.

To this end, let $\mathcal{T}_i$ be the family of all sets $T$ of size $\frac{n}{2}+1$ such that the first $i$ demand or value queries performed by the algorithm would have led to the same answers on $f_T$ as on $f$. Clearly, 
$\vert \mathcal{T}_0 \vert= \binom{n}{n/2+1}$. 
By Lemma~\ref{lem:size-dem} 
any demand query allows us to distinguish at most $n^{29}$ many sets from $f$ (and of course, any value query can distinguish at most one set). That is, $\lvert \mathcal{T}_{i+1} \rvert \geq \lvert \mathcal{T}_i \rvert - n^{29}$.

This implies that $\lvert \mathcal{T}_i \rvert \geq \binom{n}{n/2+1} - i \cdot n^{29}$. 
Furthermore, by Lemma~\ref{lem:onlyT},  the approximation ratio of the algorithm that uses at most $i$ queries is no better than $\frac{20}{\sqrt{n}}$ whenever $T^\star \in \mathcal{T}_i$ and $T^\star \neq S_{ALG}$. This happens with probability of at least
$1- \frac{\lvert \mathcal{T}_i \rvert - 1}{\binom{n}{n/2+1}} $ since $T^\star$ is uniformly distributed in $\mathcal{T}_i$ and the algorithm can choose just one of them.
Overall, we get that the approximation of  an algorithm $ALG$ that uses at most $i$ demand and value queries is at most 
\begin{eqnarray*} 
E\left[\frac{g(S_{ALG})}{g(T^\star)}\right] &\leq & \Pr[T^\star \notin \mathcal{T}_i  \vee T^\star =S_{ALG}] \cdot 1  \\ & + & (1-\Pr[T^\star \notin \mathcal{T}_i  \vee T^\star =S_{ALG}]) \cdot \frac{20}{\sqrt{n}} \\ 
& \leq  &
\frac{\binom{n}{n/2+1}-\lvert \mathcal{T}_i \rvert +1}{\binom{n}{n/2+1}} + \frac{\lvert \mathcal{T}_i \rvert -1}{\binom{n}{n/2+1}} \cdot \frac{20}{\sqrt{n}} \\ 
& \leq & \frac{20}{\sqrt{n}} + \frac{i\cdot n^{29}+1}{\binom{n}{n/2+1}} \leq O\left(\frac{1}{\sqrt{n}}\right),
\end{eqnarray*}
where the last inequality holds for every polynomial amount of demand and value queries.
\end{proof}

We conclude with a simple observation showing that, for subadditive rewards, an $n$-approximation can be achieved using only a value oracle.

\begin{remark}[$n$-approximation for subadditive]\label{rem:subaddtive}
We observe that for subadditive $f$, it is possible to obtain a factor-$n$ approximation with value oracle access by computing the best single-agent contract $g(\{i\})$ for $i \in \agents$.
This follows by observing that, for any set of agents $S \subseteq \agents$ such that $g(S)\ge 0$, it holds that $g(\{i\}) \geq 0$ for all $i \in S$ and $g(S) \leq \sum_{i \in S} g(\{i\}) \leq n \cdot \max_{i \in S}g(\{i\})$. 
To see this, note that for any $i \in S$, we have $f(S) \leq f(S\setminus \{i\}) + f(\{i\})$, so $\frac{c_i}{f(S)-f(S\setminus\{i\})} \geq \frac{c_i}{f(\{i\})}$. Moreover, $f(S) \leq \sum_{i\in S}f(\{i\})$. So, $g(S) = (1-\sum_{i \in S} \frac{c_i}{f(S)-f(S\setminus\{i\}}) f(S) \leq \sum_{i\in S} (1-\sum_{i \in S} \frac{c_i}{f(\{i\})})f(\{i\}) \leq \sum_{i\in S} (1- \frac{c_i}{f(\{i\})})f(\{i\}) = \sum_{i \in S}g(\{i\})$.
\end{remark}

\section*{Acknowledgements}
This project has been partially funded by the European Research Council (ERC) under the European Union's Horizon Europe Program (grant agreement No.~101170373), by an Amazon Research Award, by the NSF-BSF (grant number 2020788), and by the Israel Science Foundation Breakthrough Program (grant No.~2600/24).
Part of this work was done while T. Kesselheim was visiting the Simons Institute for the Theory of Computing.

%% Bibliography
\bibliographystyle{apalike}
\bibliography{references}

\begin{thebibliography}{}

\bibitem[Aharoni et~al., 2025]{Aharoni0T25}
Aharoni, G., Hoefer, M., and Talgam{-}Cohen, I. (2025).
\newblock Welfare and beyond in multi-agent contracts.
\newblock In {\em Proc.~ACM EC 2025}, page 895.

\bibitem[Alon et~al., 2025]{alon2025multi}
Alon, T., Castiglioni, M., Chen, J., Ezra, T., Li, Y., and Talgam-Cohen, I.
  (2025).
\newblock Multi-project contracts.
\newblock In {\em Proc.~ACM EC 2025}, pages 580--598.

\bibitem[Assadi et~al., 2021]{AssadiKS21}
Assadi, S., Kesselheim, T., and Singla, S. (2021).
\newblock Improved truthful mechanisms for subadditive combinatorial auctions:
  Breaking the logarithmic barrier.
\newblock In {\em Proc.~ACM-SIAM SODA 2021}, pages 653--661.

\bibitem[Babaioff et~al., 2006]{BabaioffFNW12}
Babaioff, M., Feldman, M., and Nisan, N. (2006).
\newblock Combinatorial agency.
\newblock In {\em Proc.~ACM EC 2006}, pages 18--28.

\bibitem[Babaioff et~al., 2009]{BabaioffFN09}
Babaioff, M., Feldman, M., and Nisan, N. (2009).
\newblock Free-riding and free-labor in combinatorial agency.
\newblock In {\em Proc.~SAGT 2009}, pages 109--121.

\bibitem[Babaioff et~al., 2010]{BabaioffFN10}
Babaioff, M., Feldman, M., and Nisan, N. (2010).
\newblock Mixed strategies in combinatorial agency.
\newblock {\em J. Artif. Intell.}, 38:339--369.

\bibitem[Bechavod et~al., 2022]{BechavodPWZ22}
Bechavod, Y., Podimata, C., Wu, Z.~S., and Ziani, J. (2022).
\newblock Information discrepancy in strategic learning.
\newblock In {\em Proc.~ICML 2022}, pages 1691--1715.

\bibitem[Bechtel et~al., 2022]{BechtelDP22}
Bechtel, C., Dughmi, S., and Patel, N. (2022).
\newblock Delegated pandora's box.
\newblock In {\em Proc.~ACM EC 2022}, pages 666--693.

\bibitem[Bhawalkar and Roughgarden, 2011]{BhawalkarR11}
Bhawalkar, K. and Roughgarden, T. (2011).
\newblock Welfare guarantees for combinatorial auctions with item bidding.
\newblock In {\em Proc.~{ACM-SIAM} SODA 2011}, pages 700--709.

\bibitem[Cacciamani et~al., 2024]{CacciamaniBC024}
Cacciamani, F., Bernasconi, M., Castiglioni, M., and Gatti, N. (2024).
\newblock Multi-agent contract design beyond binary actions.
\newblock In {\em Proc.~ACM EC 2024}, page 1293.

\bibitem[Cai et~al., 2022]{CaiOZ22}
Cai, Y., Oikonomou, A., and Zhao, M. (2022).
\newblock Computing simple mechanisms: Lift-and-round over marginal reduced
  forms.
\newblock In {\em Proc.~ACM STOC 2022}, pages 704--717.

\bibitem[Castiglioni et~al., 2023]{CastiglioniM023}
Castiglioni, M., Marchesi, A., and Gatti, N. (2023).
\newblock Multi-agent contract design: How to commission multiple agents with
  individual outcomes.
\newblock In {\em Proc.~ACM EC 2023}, pages 412--448.

\bibitem[Chawla et~al., 2010]{ChawlaHMS10}
Chawla, S., Hartline, J.~D., Malec, D.~L., and Sivan, B. (2010).
\newblock Multi-parameter mechanism design and sequential posted pricing.
\newblock In {\em Proc.~ACM {STOC} 2010}, pages 311--320.

\bibitem[Chen and Yu, 2024]{ChenY21}
Chen, Y. and Yu, F. (2024).
\newblock Optimal scoring rule design under partial knowledge.
\newblock In {\em Proc.~WINE 2024}, pages 383--400.

\bibitem[Christodoulou et~al., 2016]{ChristodoulouKS16}
Christodoulou, G., Kov{\'{a}}cs, A., and Schapira, M. (2016).
\newblock Bayesian combinatorial auctions.
\newblock {\em J. {ACM}}, 63(2):11:1--11:19.

\bibitem[Correa and Cristi, 2023]{CorreaC23}
Correa, J. and Cristi, A. (2023).
\newblock A constant factor prophet inequality for online combinatorial
  auctions.
\newblock In {\em Proc. ACM STOC 2023}, pages 686--697.

\bibitem[Dasaratha et~al., 2025]{DasarathaGS24}
Dasaratha, K., Golub, B., and Shah, A. (2025).
\newblock Incentive design with spillovers.
\newblock In {\em Proc.~ACM EC 2025}, page 505.

\bibitem[Deo{-}Campo~Vuong et~al., 2024]{VuongDPP23}
Deo{-}Campo~Vuong, R., Dughmi, S., Patel, N., and Prasad, A. (2024).
\newblock On supermodular contracts and dense subgraphs.
\newblock In {\em Proc. ACM-SIAM SODA 2024}, pages 109--132.

\bibitem[Dobzinski, 2021]{Dobzinski21}
Dobzinski, S. (2021).
\newblock Breaking the logarithmic barrier for truthful combinatorial auctions
  with submodular bidders.
\newblock {\em {SIAM} J. Comput.}, 50(3):STOC16--1--STOC16--17.

\bibitem[D{\"{u}}tting et~al., 2023]{DuettingEFK23}
D{\"{u}}tting, P., Ezra, T., Feldman, M., and Kesselheim, T. (2023).
\newblock Multi-agent contracts.
\newblock In {\em Proc.~ACM STOC 2023}, pages 1311--1324.

\bibitem[D{\"u}tting et~al., 2025a]{DuettingEFK25Lifting}
D{\"u}tting, P., Ezra, T., Feldman, M., and Kesselheim, T. (2025a).
\newblock Black-box lifting and robustness theorems for multi-agent contracts.
\newblock Preprint arXiv:2511.19358. Available at:
  \url{https://arxiv.org/abs/2511.19358}.

\bibitem[D{\"u}tting et~al., 2025b]{DuttingEFK21}
D{\"u}tting, P., Ezra, T., Feldman, M., and Kesselheim, T. (2025b).
\newblock Combinatorial contracts.
\newblock {\em SIAM J. Comput.}, 54(6):1427--1455.

\bibitem[D{\"u}tting et~al., 2025c]{DuettingEFK25}
D{\"u}tting, P., Ezra, T., Feldman, M., and Kesselheim, T. (2025c).
\newblock Multi-agent combinatorial contracts.
\newblock In {\em Proc.~ACM-SIAM SODA 2025}, pages 1857--1891.

\bibitem[D{\"{u}}tting et~al., 2020]{DuttingFKL20}
D{\"{u}}tting, P., Feldman, M., Kesselheim, T., and Lucier, B. (2020).
\newblock Prophet inequalities made easy: Stochastic optimization by pricing
  nonstochastic inputs.
\newblock {\em {SIAM} J. Comput.}, 49(3):540--582.

\bibitem[D{\"{u}}tting et~al., 2024]{DuettingFT24}
D{\"{u}}tting, P., Feldman, M., and Talgam{-}Cohen, I. (2024).
\newblock Algorithmic contract theory: {A} survey.
\newblock {\em Found. Trends Theor. Comput. Sci.}, 16(3-4):211--412.

\bibitem[D{\"{u}}tting et~al., 2019]{DuttingRT19}
D{\"{u}}tting, P., Roughgarden, T., and Talgam{-}Cohen, I. (2019).
\newblock Simple versus optimal contracts.
\newblock In {\em Proc.~ACM {EC} 2019}, pages 369--387.

\bibitem[D{\"{u}}tting et~al., 2021]{DuttingRT21}
D{\"{u}}tting, P., Roughgarden, T., and Talgam{-}Cohen, I. (2021).
\newblock The complexity of contracts.
\newblock {\em {SIAM} J. Comput.}, 50(1):211--254.

\bibitem[Emek and Feldman, 2012]{EmekF12}
Emek, Y. and Feldman, M. (2012).
\newblock Computing optimal contracts in combinatorial agencies.
\newblock {\em Theor. Comput. Sci.}, 452:56--74.

\bibitem[Ezra et~al., 2024]{EzraFS24}
Ezra, T., Feldman, M., and Schlesinger, M. (2024).
\newblock On the (in)approximability of combinatorial contracts.
\newblock In {\em Proc.~ITCS 2024}, pages 44:1--44:22.

\bibitem[Feige, 2009]{Feige09}
Feige, U. (2009).
\newblock On maximizing welfare when utility functions are subadditive.
\newblock {\em SIAM J. Comput.}, 39(1):122--142.

\bibitem[Feldman, 2025]{Feldman25}
Feldman, M. (2025).
\newblock Combinatorial contract design: Recent progress and emerging
  frontiers.
\newblock Preprint arxiv:2510.15065. Available at:
  \url{https://arxiv.org/abs/2510.15065}.

\bibitem[Feldman et~al., 2013]{FeldmanFGL13}
Feldman, M., Fu, H., Gravin, N., and Lucier, B. (2013).
\newblock Simultaneous auctions are (almost) efficient.
\newblock In {\em Proc.~ACM STOC 2013}, pages 201--210.

\bibitem[Feldman et~al., 2026]{FeldmanGPS26budgets}
Feldman, M., Gal-Tzur, Y., Ponitka, T., and Schlesinger, M. (2026).
\newblock One action too many: Inapproximability of budgeted combinatorial
  contracts.
\newblock In {\em Proc.~ITCS 2026}, pages 58:1--58:24.

\bibitem[Feldman et~al., 2015]{FeldmanGL15}
Feldman, M., Gravin, N., and Lucier, B. (2015).
\newblock Combinatorial auctions via posted prices.
\newblock In {\em Proc.~{ACM-SIAM} {SODA} 2015}, pages 123--135.

\bibitem[Feldman et~al., 2025]{FeldmanTPS25}
Feldman, M., Tzur, Y.~G., Ponitka, T., and Schlesinger, M. (2025).
\newblock Budget-feasible contracts.
\newblock In {\em Proc.~ACM EC 2025}, page 353.

\bibitem[Fu et~al., 2012]{fu2012conditional}
Fu, H., Kleinberg, R., and Lavi, R. (2012).
\newblock Conditional equilibrium outcomes via ascending price processes with
  applications to combinatorial auctions with item bidding.
\newblock In {\em Proc.~ACM EC 2012}, page 586.

\bibitem[Goel and Caruthers-Wade, 2024]{GoelCaruthersWade24}
Goel, S. and Caruthers-Wade, H. (2024).
\newblock Multi-agent contract design with a budget.
\newblock Preprint arxiv:2402.15890. Available at:
  \url{https://arxiv.org/abs/2402.15890}.

\bibitem[Gong et~al., 2024]{QongEtAl24}
Gong, Q., Gai, L., Lv, Y., and Yang, R. (2024).
\newblock Principal-agent meets a cardinality.
\newblock Preprint SSRN:4581020. Available at:
  \url{https://ssrn.com/abstract=4581020}.

\bibitem[Grossman and Hart, 1983]{GrossmanHart83}
Grossman, S.~J. and Hart, O.~D. (1983).
\newblock An analysis of the principal-agent problem.
\newblock {\em Econometrica}, 51(1):7--45.

\bibitem[Gul and Stacchetti, 1999]{gul1999walrasian}
Gul, F. and Stacchetti, E. (1999).
\newblock Walrasian equilibrium with gross substitutes.
\newblock {\em J. Econ. Theory}, 87(1):95--124.

\bibitem[Harshaw et~al., 2019]{Harshaw19a}
Harshaw, C., Feldman, M., Ward, J., and Karbasi, A. (2019).
\newblock Submodular maximization beyond non-negativity: {G}uarantees, fast
  algorithms, and applications.
\newblock In {\em Proc.~ICML 2019}, pages 2634--2643.

\bibitem[Holmstr\"om, 1979]{Holmstrom79}
Holmstr\"om, B. (1979).
\newblock Moral hazard and observability.
\newblock {\em Bell J. Econ.}, 10(1):74--91.

\bibitem[Holmstr\"om, 1982]{Holmstrom82}
Holmstr\"om, B. (1982).
\newblock Moral hazard in teams.
\newblock {\em Bell J. Econ.}, 13(2):324--340.

\bibitem[Kleinberg and Kleinberg, 2018]{KleinbergK18}
Kleinberg, J.~M. and Kleinberg, R. (2018).
\newblock Delegated search approximates efficient search.
\newblock In {\em Proc.~ACM EC 2018}, pages 287--302.

\bibitem[Kleinberg and Raghavan, 2019]{KleinbergR19}
Kleinberg, J.~M. and Raghavan, M. (2019).
\newblock How do classifiers induce agents to invest effort strategically?
\newblock In {\em Proc.~ACM EC 2019}, pages 825--844.

\bibitem[Lehmann et~al., 2006]{LehmannLN06}
Lehmann, B., Lehmann, D., and Nisan, N. (2006).
\newblock Combinatorial auctions with decreasing marginal utilities.
\newblock {\em Games Econ. Behav.}, 55(2):270--296.

\bibitem[Li et~al., 2023]{AshlagiLL22}
Li, W.~D., Ashlagi, I., and Lo, I. (2023).
\newblock Simple and approximately optimal contracts for payment for ecosystem
  services.
\newblock {\em Manag. Sci.}, 69(12):7821--7837.

\bibitem[Li et~al., 2021]{LiIL21}
Li, W.~D., Immorlica, N., and Lucier, B. (2021).
\newblock Contract design for afforestation programs.
\newblock In {\em Proc.~WINE 2021}, pages 113--130.

\bibitem[Li et~al., 2022]{LiHSW22}
Li, Y., Hartline, J.~D., Shan, L., and Wu, Y. (2022).
\newblock Optimization of scoring rules.
\newblock In {\em Proc.~ACM EC 2022}, pages 988--989.

\bibitem[Reijnierse et~al., 2002]{ReijniersePG02}
Reijnierse, H., Potters, J., and Gellekom, A. (2002).
\newblock Verifying gross substitutability.
\newblock {\em Econ. Theory}, 20(4):767–776.

\bibitem[{Statista}, 2021]{Statista}
{Statista} (2021).
\newblock Instagram influencer marketing spending worldwide from 2013 to 2020.
\newblock Available at:
  \url{https://www.statista.com/statistics/950920/global-instagram-influencer-marketing-spending/}.

\bibitem[Sviridenko et~al., 2017]{SviridenkoVW17}
Sviridenko, M., Vondr{\'{a}}k, J., and Ward, J. (2017).
\newblock Optimal approximation for submodular and supermodular optimization
  with bounded curvature.
\newblock {\em Math. Oper. Res.}, 42(4):1197--1218.

\bibitem[Syrgkanis and Tardos, 2013]{SyrgkanisT13}
Syrgkanis, V. and Tardos, {\'{E}}. (2013).
\newblock Composable and efficient mechanisms.
\newblock In {\em Proc.~ACM STOC 2013}, pages 211--220.

\bibitem[Yao, 2015]{Yao15}
Yao, A. C.-C. (2015).
\newblock An n-to-1 bidder reduction for multi-item auctions and its
  applications.
\newblock In {\em Proc.~ACM-SIAM SODA 2015}, pages 92--109.

\end{thebibliography}

%% Appendix
\appendix

%% EXTENSION OF THE MODEL TO GENERAL OUTCOME SPACES
\section{Extension of the Model to General Outcome Spaces}\label{app:general-outcomes}

In this appendix, we discuss how to extend our model to general outcome spaces. In this more general setting, the outcome space is $\Omega = \{0,\ldots,m-1\}$ with rewards $r(\omega) \geq 0$ for each $\outcome \in \outcomespace$. Without loss of generality, we may assume that outcomes are indexed so that $r(0) \leq r(1) \leq \ldots \leq r(m-1)$. We assume that $r(0) = 0$ and that $r(m-1) \leq 1$.

As before each agent $i \in \agents$ can either exert effort or not, and the cost of exerting effort is $c_i$ for agent $i$. Each set of agents $S \subseteq \agents$ that exerts effort induces a distribution $q(S)$ over outcomes $\omega \in \Omega$. We use $q_\outcome(S)$ to denote the probability of outcome $\outcome$. 
The distribution $q$ in turn induces a reward function $f: 2^\agents \rightarrow [0,1]$, which  maps each set of agents $S \subseteq \agents$ that exert effort to an expected reward $f(S)$ for the principal. 
Namely, $f(S) = \sum_{\omega \in \Omega} q_\omega(S) \cdot r(\omega)$.
As before, we assume that $f$ is monotonically non-decreasing so that for any $S, S' \subseteq \agents$ with $S \subseteq S'$ it holds that $f(S) \leq f(S')$. We also assume that $f$ is normalized so that $f(\emptyset) = 0$.

For a general contract $t: \Omega \rightarrow \mathbb{R}^n_{\ge 0}$, we define the principal's utility for a set of agents $S$ as $u_P(S,q,t) = f(S) -  \sum_{\outcome \in \outcomespace} q_\outcome(S) \sum_{i \in \agents} t_i(\outcome)$ and agent $i$'s utility as $u_i(S,q,t) = \sum_{\outcome \in \outcomespace} q_\outcome(S) \cdot t_i(\omega) - \indicator{i \in S} c_i$. We define (pure) Nash equilibria as before, i.e., we say that $t$ incentivizes $S$ if
\begin{align*}
&u_i(S,q,t) \geq u_i(S\setminus\{i\},q,t) &&\text{for all $i \in S$, and}\\
&u_i(S,q,t) \geq u_i(S\cup\{i\},q,t) &&\text{for all $i \not\in S$.}
\end{align*}

Generalizing the definition of a linear contract in the binary-outcome case, a linear contract is defined by a vector $\alpha = (\alpha_1, \ldots, \alpha_n) \in \mathbb{R}^n_{\geq 0}$. As before, the interpretation is that $t_i(\omega) = \alpha_i r(\omega)$ for all $i \in \agents$ and $\omega \in \outcomespace$. 

Exploiting that the expected payment of a linear contract $\alpha$ to agent $i$ is $\alpha_i f(S)$, which is independent of the number of outcomes, we obtain the following characterization of sets of agents that can be incentivized and the optimal linear contract for incentivizing a given set of agents. (The proof is analogous to that of Proposition~\ref{prop:binary-outcome}.)

\begin{proposition}
(a) A set of agents $S \subseteq \agents$ can be incentivized by a linear contract if and only if there is no agent $i \in S$ such that $f(i \mid S \setminus \{i\}) = 0$ and $c_i > 0$. (b) Among all linear contracts that incentivize set $S \subseteq A$, the one that maximizes the principal's utility is the following linear contract:
\begin{align*}
    &\alpha_i = \frac{c_i}{f(S) - f(S \setminus \{i\})} = \frac{c_i}{f(i \mid S \setminus \{i\})} &&\text{for all $i \in S$ s.t.~$f(i \mid S \setminus\{i\}) > 0$, and}\\
    &\alpha_i = 0 &&\text{otherwise.}
\end{align*}
\end{proposition}

With this proposition at hand, the principal's problem of designing an optimal linear contract is equivalent to that in the binary-outcome case.

%% MAX-MIN OPTIMALITY OF LINEAR CONTRACTS WITH MOMENT CONSTRAINTS
\section{Max-Min Optimality of Linear Contracts With Moment Constraints}\label{app:max-min}

In this appendix, we establish that linear contracts are max-min optimal when only the expected reward of each action is known (extending results of \citet{DuttingRT19,DuttingEFK21}). 

The setting is as follows: We are given the set of agents $\agents$ and their costs $c_i$ for $i \in \agents$. We consider a general outcome space $\Omega = \{0,\ldots,m-1\}$ with fixed rewards. We assume that outcomes are indexed so that $r(0) \leq r(1) \leq \ldots \leq r(m-1) \leq 1$, and that $r(0) = 0$. Importantly, the underlying distribution $q$ over outcomes is \emph{unknown}. We are only given the induced expected reward $f(S)$ for each set of agents $S \subseteq \agents$. As before we assume that $f$ is monotonically non-decreasing and normalized.

For a general contract $t: \Omega \rightarrow \mathbb{R}^n_{\geq 0}$ and a distribution over outcomes $q$, we define the principal's utility for a set of agents $S$ as $u_P(S,q,t) = f(S) - \sum_{\outcome \in \outcomespace} q_\outcome(S) \sum_{i \in \agents}  t_i(\outcome)$ and agent $i$'s utility as $u_i(S,q,t) = \sum_{\outcome \in \outcomespace} q_\outcome(S) t_i(\omega) - \indicator{i \in S} c_i$. We define (pure) Nash equilibria as before, i.e., we say that $t$ incentivizes $S$ if 
\begin{align*}
&u_i(S,q,t) \geq u_i(S\setminus\{i\},q,t) &&\text{for all $i \in S$, and}\\
&u_i(S,q,t) \geq u_i(S\cup\{i\},q,t) &&\text{for all $i \not\in S$.} 
\end{align*}
Let $\mathcal{Q}(\{f(S)\}_{S \subseteq \agents})$ denote all distributions over outcomes that are compatible with the known expected rewards $\{f(S)\}_{S \subseteq \agents}$. For a distribution $q$ over outcomes, let $\textsf{NE}(q,t)$ denote all (pure) Nash equilibria of contract $t$.
We define the principal's worst-case utility for contract $t$ as 
\[
\bar{u}_P(t) = \min_{q \in \mathcal{Q}(\{f(S)\}_{S \subseteq \agents})} \max_{S \in \textsf{NE}(q,t)} u_p(S,q,t).
\]

\begin{proposition}\label{prop:max-min}
In multi-agent contract settings, for any set of known expected rewards $\{f(S)\}_{S \subseteq \agents}$, a linear contract maximizes the principal's worst-case utility.   
\end{proposition}

\begin{proof}
Fix expected rewards $\{f(S)\}_{S \subseteq \agents}$. Let $\alpha^\star$ be the linear contract that maximizes $\bar{u}_P(\alpha)$ over all linear contracts $\alpha$. Note that since linear contracts do not depend on the distribution over outcomes only the expected rewards, the same $S^\star$ will be the optimal Nash equilibrium of $\alpha^\star$ for any compatible distribution, and the principal's utility $u_P(S^\star,q,\alpha^\star)$ for that set of agents is the same across all compatible distributions $q \in \mathcal{Q}(\{f(S)\}_{S \subseteq \agents})$.

Now fix an arbitrary contract $t$. We want to show that $\bar{u}_P(t) \leq \bar{u}_P(\alpha^\star)$. 
To this end, for each $S \subseteq \agents$, let's construct a binary-outcome distribution $\bar{q}(S)$ over $\{0,m-1\}$ that has expected reward $f(S)$ (and is therefore compatible). Namely, let
\begin{align*}
\bar{q}_{\outcome}(S) = 
\begin{cases}
1- \frac{f(S)}{r(m-1))} &\text{if $\outcome = 0$,}\\
0 &\text{if $\outcome \neq 0,m-1$, and}\\
\frac{f(S)}{r(m-1)} &\text{if $\outcome = m-1$.}
\end{cases}
\end{align*}

We then have 
\[
\bar{u}_P(\alpha^\star) = u_P(S^\star,\bar{q},\alpha^\star) \geq \max_{S \in \mathsf{NE}(\bar{q},t)} u_P(S,\bar{q},t) \geq \bar{u}_P(t),
\]
where for the first equality we used that under $\alpha^\star$ the same $S^\star$ is the optimal NE for all compatible distributions and that the principal's utility for that set is the same across all compatible distributions, the first inequality holds by Proposition~\ref{prop:binary-outcome} which shows that for binary-outcome settings linear contracts are optimal, and the final inequality holds by the definition of $\bar{u}_P(t)$ as the minimum of $\max_{S \in \textsf{NE}(q,t)} u_p(S,q,t)$ over all compatible distributions. 
\end{proof}

%% ADDITIVE REWARD FUNCTIONS
\section{Additive Reward Functions}
\label{app:additive}

%% ADDITIVE REWARD FUNCTIONS

In this appendix, we present our results for additive reward functions. As in the body of the paper, we consider the binary-action and binary-outcome setting, where linear contracts are optimal. However, as observed in Appendix~\ref{app:general-outcomes}, all results in this section generalize to the problem of computing optimal linear contracts in settings with general outcome spaces.

\subsection{NP-Hardness}
\begin{proposition}\label{prop:hardness-for-additive}
The optimal contract problem for additive reward functions over $n$ agents is NP-hard.
\end{proposition}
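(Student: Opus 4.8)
The plan is to prove NP-hardness by a reduction from \textsc{Subset-Sum} (equivalently, \textsc{Partition}). The starting point is that additivity collapses the objective: if $f(S) = \sum_{i \in S} v_i$, then $f(i \mid S \setminus \{i\}) = v_i$ for every $S \ni i$, so
\[
g(S) = \Bigl(1 - \sum_{i \in S} \frac{c_i}{v_i}\Bigr) \sum_{i \in S} v_i .
\]
If the costs are chosen so that $c_i / v_i$ is proportional to $v_i$, then $g(S)$ depends on $S$ only through the scalar $\sum_{i\in S} v_i$, and does so via a strictly concave quadratic; a subset-sum target can then be planted as the location of the unconstrained maximum of this quadratic.

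Concretely, given a \textsc{Subset-Sum} instance with positive integers $w_1, \dots, w_n$ and target $W$, first preprocess so that $0 < w_i < W$ for all $i$: an item with $w_i = W$ already certifies a ``yes'' answer, and an item with $w_i > W$ can never be used and is deleted. Build an instance with $n$ agents, values $v_i := 2W\, w_i$ and costs $c_i := w_i^2$, so that $c_i/v_i = w_i/(2W) \in (0, \tfrac{1}{2})$. Writing $y := \sum_{i\in S} w_i$, every set $S$ has
\[
g(S) = \Bigl(1 - \frac{y}{2W}\Bigr) 2W y = 2Wy - y^2 .
\]
The map $y \mapsto 2Wy - y^2$ is a downward parabola with apex at $y = W$ (value $W^2$), and it takes the value $W^2 - 1$ at $y = W \pm 1$; hence for every \emph{integer} $y \ne W$ it is at most $W^2 - 1$. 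Since every realizable $\sum_{i\in S} w_i$ is a nonnegative integer, $\max_S g(S) = W^2$ when the instance is a ``yes''-instance, and $\max_S g(S) \le W^2 - 1$ otherwise. Thus any algorithm that outputs the value of the optimal contract decides \textsc{Subset-Sum}, giving NP-hardness. (The reduction stays inside the stated model: for any $S$, the payments $\alpha_i = c_i/v_i < 1$ for $i \in S$ and $\alpha_i = 0$ otherwise form a valid linear contract incentivizing $S$ with principal utility exactly $g(S)$; and any $S$ that would need some $\alpha_i > 1$ has $g(S) < 0 = g(\emptyset)$, so it is irrelevant to the optimum.)

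The only conceptual step is choosing the cost scaling that makes $g$ a concave function of the total value; after that the argument is a standard reduction from a weakly NP-hard number problem. The remaining pieces — the integrality gap of $1$ separating ``yes'' from ``no'' instances, the disposal of the degenerate weights $w_i \ge W$, and the check that the induced contracts are legal — are routine and form the bulk of the write-up but present no real obstacle.
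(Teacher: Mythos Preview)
Your proof is correct and uses essentially the same idea as the paper: choose the costs so that $c_i/v_i$ is proportional to $w_i$, making $g(S)$ a concave quadratic in $\sum_{i\in S} w_i$, and plant the target as its apex. The paper reduces from \textsc{Partition} (with $v_i=w_i$, $c_i=w_i^2/W$, apex at $W/2$) while you reduce from \textsc{Subset-Sum} with a different scaling, and you are more explicit about the integrality gap and the legality of the induced contracts; these are cosmetic differences, not a genuinely different route.
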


\begin{proof}
The proof is by reduction from \textsf{PARTITION}. We are given a multi-set  $\{w_1, \ldots, w_n\}$ of positive integers $w_j > 0$ with $\sum_{j = 1}^{n} w_j = W$, and have to decide whether $[n]$ can be partitioned into $I_1$ and $I_2 = [n] \setminus I_1$ such that $\sum_{j \in I_1} w_j = \sum_{j \in I_2} w_j = W/2.$

The corresponding contract problem is: $f$ is additive over $n$ agents, where agent $i$ has a value of $v_i = w_i$ and a cost $c_i = w_i^2/W $. 
Since agent $i$'s marginal is $v_i$, 
the indifference point for agent $i$ is $\alpha_i = c_i/v_i = w_i/W$. The principal's utility for incentivizing a set of 
agents 
$S \subseteq [n]$ is
\[
g(S) = \left(1- \sum_{i \in S} \frac{w_i}{W}\right) \cdot \sum_{i \in S} w_i.
\]
$g$ is maximized when $\sum_{i \in S} w_i$ is closest to $W/2$.
Thus, instance $\{w_1,\ldots,w_n\}$ is a yes-instance, if and only if $g(\sstar) = \frac{W}{4}$.
\end{proof}

\subsection{FPTAS}

\begin{proposition}\label{prop:fptas-for-additive}
There is an FPTAS for the optimal contract problem for an additive reward functions over $n$ agents.
\end{proposition}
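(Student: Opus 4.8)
The plan is to reduce the problem to a knapsack-type optimization and then apply the standard scaling-and-dynamic-programming recipe, being careful that the objective is a \emph{product} of two linear functions of the chosen set rather than a single linear function. First I would record the explicit form of the objective: for additive $f(S)=\sum_{i\in S}v_i$ we have $f(i\mid S\setminus\{i\})=v_i$, so writing $a_i:=c_i/v_i$ (the indifference $\alpha_i$ of agent $i$), the principal's utility is
\[
g(S)=\Bigl(1-\sum_{i\in S}a_i\Bigr)\Bigl(\sum_{i\in S}v_i\Bigr).
\]
A short preprocessing step removes degenerate agents: if $v_i=0$ and $c_i>0$ then agent $i$ can never be profitably incentivized (and $g(\emptyset)=0$), while if $v_i=c_i=0$ then agent $i$ is irrelevant; hence we may assume $v_i>0$ for all $i$, so $a_i\ge 0$ is well defined. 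Note also that in the additive case \emph{every} set is incentivizable (take $\alpha_i=a_i$ for $i\in S$), so there is no feasibility constraint beyond choosing the subset.

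Next comes the guessing step. Let $\sstar$ be an optimal set; if $g(\sstar)=0$ then $\emptyset$ is already optimal, so assume $g(\sstar)>0$, which forces $\sum_{i\in\sstar}a_i<1$ and $V^\star:=\sum_{i\in\sstar}v_i>0$. I would guess the agent $j\in\sstar$ of maximum value, i.e.\ $v_j=\max_{i\in\sstar}v_i$ (trying all $n$ candidates for $j$), and discard every agent with $v_i>v_j$; this keeps $\sstar$ feasible and guarantees $v_j\le V^\star$. Then scale values by $K:=\epsilon v_j/n$ and set $\tilde v_i:=\lfloor v_i/K\rfloor\le n/\epsilon$, so $\sum_{i\in S}\tilde v_i\le n^2/\epsilon$ for any $S$. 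Crucially, I keep the $a_i$ exact and round only the $v_i$.

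The dynamic program then computes, for each integer $0\le t\le n^2/\epsilon$, the quantity $A[t]:=\min\{\sum_{i\in S}a_i : \sum_{i\in S}\tilde v_i=t\}$ together with a witnessing set, via the usual recursion over agents ($A_k[t]=\min\{A_{k-1}[t],\,A_{k-1}[t-\tilde v_k]+a_k\}$, with $A_{k-1}[\cdot]=\infty$ for negative indices). Among all witness sets (over all guesses $j$) together with $\emptyset$, output the one maximizing the \emph{exact} value $g(\cdot)$. For the analysis, take $j=\arg\max_{i\in\sstar}v_i$ and $t^\star=\sum_{i\in\sstar}\tilde v_i$; the DP considers $\sstar$, so its witness $S$ for $t^\star$ satisfies $\sum_{i\in S}a_i\le\sum_{i\in\sstar}a_i=:A^\star<1$ and $\sum_{i\in S}\tilde v_i=t^\star$. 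Since $v_i\ge K\tilde v_i$ and $K\tilde v_i\ge v_i-K$, we get $\sum_{i\in S}v_i\ge K t^\star=\sum_{i\in\sstar}K\tilde v_i\ge V^\star-nK=V^\star-\epsilon v_j\ge(1-\epsilon)V^\star$, hence
\[
g(S)\ge(1-A^\star)\sum_{i\in S}v_i\ge(1-A^\star)(1-\epsilon)V^\star=(1-\epsilon)\,g(\sstar).
\]
The running time is $O(n^3/\epsilon)$ per guess, hence $O(n^4/\epsilon)$ overall (times the cost of arithmetic on the input numbers), which is polynomial in the input size and $1/\epsilon$, giving an FPTAS. The only real subtlety — and the step I expect to be the main obstacle — is that $a_i=c_i/v_i$ appears in the objective, so one cannot independently round both coordinates; the resolution is to round $v$ multiplicatively \emph{relative to the guessed in-solution maximum value} (not the global maximum, since $\sstar$ need not contain the globally most valuable agent) while carrying $\sum_{i\in S}a_i$ exactly, so that no multiplicative loss is incurred through the sensitive factor $1-\sum_{i\in S}a_i$, which can be arbitrarily close to $0$.
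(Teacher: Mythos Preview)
Your proposal is correct and takes essentially the same approach as the paper: guess the largest $v_j$ in $\sstar$, round only the values $v_i$ (keeping the ratios $a_i=c_i/v_i$ exact), and run a knapsack-style DP that, for each rounded total value, minimizes $\sum_{i\in S}a_i$. The paper's write-up differs only cosmetically (it uses $\tilde f(S)\ge x$ instead of $\sum\tilde v_i=t$ and evaluates candidates via $(1-\sum a_i)\cdot x$ rather than the exact $g(\cdot)$), but the algorithm, the analysis, and the key point you highlight---that one must not round the factor $1-\sum a_i$---are identical.
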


Suppose we know $b = \max_{i \in \sstar} f(\{i\})$. We can assume this because there are only $n$ choices and can run the following algorithm with each choice.

Let $\delta = \frac{\epsilon}{n}$ and define an additive function $\tilde{f}$ by rounding down each $f(\{i\})$ to the next multiple of $\delta b$. That is, $\tilde{f}(\{i\}) = \left\lfloor \frac{f(\{i\})}{\delta b} \right\rfloor \delta b$ and $\tilde{f}(S) = \sum_{i \in S} \tilde{f}(\{i\})$. This way, each $\tilde{f}(S)$ is a
multiple of $\delta b$.

Let $T_x$ be the set $S$ that minimizes $\sum_{i \in S} \frac{c_i}{f(\{i\})}$ subject to $\tilde{f}(T) \geq x$. Our algorithm returns the set $T_x$ that maximizes $\left( 1 - \sum_{i \in T_x} \frac{c_i}{f(\{i\})}\right) x$ among all $x = k \delta b$ for $k \in \{0, 1, \ldots, \lceil \frac{n}{\delta} \rceil\}$.

\begin{claim}
The algorithm can be implemented in polynomial time in $n$ and $\frac{1}{\epsilon}$.
\end{claim}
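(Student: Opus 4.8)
The plan is to verify, piece by piece, that every ingredient of the algorithm runs in time polynomial in $n$ and $1/\epsilon$: the rounding that produces $\tilde f$, the computation of the sets $T_x$, the final maximization, and the outer loop over the $n$ guesses for $b$. Since the $n$ guesses only multiply the running time by $n$, it suffices to bound the work for a single guess of $b$.

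First I would note that for a fixed guess $b$ we may restrict the ground set to the agents $i$ with $f(\{i\}) \le b$; this is without loss because $b = \max_{i \in \sstar} f(\{i\})$, so $\sstar$ is contained in the restricted ground set. (If $b = 0$ then $f(\sstar)=0$ and the empty set is optimal, so assume $b>0$.) On this restricted ground set each rounded value $\tilde f(\{i\}) = \lfloor f(\{i\})/(\delta b)\rfloor\,\delta b$ equals $w_i\cdot\delta b$ for an integer $w_i \in \{0,1,\ldots,\lfloor 1/\delta\rfloor\}$, so $\tilde f(S) = \bigl(\sum_{i\in S} w_i\bigr)\delta b$ is $\delta b$ times an integer in $\{0,1,\ldots,n\lfloor 1/\delta\rfloor\}$. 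In particular there are only $O(n/\delta) = O(n^2/\epsilon)$ relevant thresholds $x = k\delta b$, which is exactly the range $k \in \{0,\ldots,\lceil n/\delta\rceil\}$ used in the algorithm, and all the $\tilde f(\{i\})$ are obtained with $n$ value queries.

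Next I would compute all of the sets $T_x$ with a single dynamic program. For $j \in \{0,\ldots,n\}$ and $k$ in the range above, let $C[j,k]$ be the minimum of $\sum_{i\in S} c_i/f(\{i\})$ over $S\subseteq\{1,\ldots,j\}$ with $\sum_{i\in S} w_i = k$ (and $\infty$ if no such $S$ exists). This obeys the standard knapsack recurrence $C[j,k] = \min\{\,C[j-1,k],\; C[j-1,k-w_j] + c_j/f(\{j\})\,\}$, so the table, of size $O(n\cdot n/\delta) = O(n^3/\epsilon)$, is filled in $O(n^3/\epsilon)$ arithmetic operations, with a back-pointer in each cell to recover the minimizing set. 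Because the costs $c_i/f(\{i\})$ are non-negative, the minimum cost over sets with $\tilde f(S)\ge k_0\delta b$ is $\min_{k\ge k_0} C[n,k]$, computed by one suffix-minimum pass; the associated set is $T_{k_0\delta b}$. Finally, evaluating $\bigl(1-\sum_{i\in T_x} c_i/f(\{i\})\bigr)x$ over the $O(n^2/\epsilon)$ choices of $x$ and taking the maximum costs $O(n^3/\epsilon)$ further operations. Multiplying by the $n$ guesses for $b$ gives a total of $O(n^4/\epsilon)$, which is polynomial in $n$ and $1/\epsilon$.

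I expect the only subtlety — more a point to state carefully than a real obstacle — to be the argument that the dynamic program is genuinely polynomial-sized. This rests entirely on the rounding step: it forces every $\tilde f(S)$ to be $\delta b$ times an integer that is at most $n/\delta = n^2/\epsilon$, so that reward value can legitimately be used as a DP coordinate, and the computation of $T_x$ becomes a min-cost knapsack with polynomially bounded integer sizes, the textbook tractable case. Without the rounding the achievable values of $f(S)$ could form an exponentially large (or incommensurable) set and the approach would collapse, so this is the point I would take care to make explicit.
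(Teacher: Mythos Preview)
Your proposal is correct and follows essentially the same approach as the paper: both reduce the computation of the sets $T_x$ to a knapsack-style dynamic program whose table size is polynomial because rounding forces every $\tilde f(S)$ to be one of $O(n/\delta)=O(n^2/\epsilon)$ multiples of $\delta b$. The only cosmetic differences are that you use an exact-weight DP $C[j,k]$ followed by a suffix minimum (and first restrict to agents with $f(\{i\})\le b$), whereas the paper's DP $A(j,x)$ encodes the $\ge$ constraint directly and handles large $\tilde f(\{i\})$ via the base case $A(0,x)=0$ for $x\le 0$; neither change affects the argument.
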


\begin{proof}
Observe that we only need to compute polynomially many sets $T_x$. Furthermore, these sets can be computed by dynamic programming in polynomial time.

To this end, let $$A(j, x) = \min\left\{ \sum_{i \in S} \frac{c_i}{f(\{i\})} \mid S \subseteq \{1, \ldots, j\}, \tilde{f}(S) \geq x \right\},$$  $A(0, x) = 0$ if $x \leq 0$ and $A(0, x) = \infty$ if $x > 0$. Observe that
\[
A(j, x) = \min\left\{A(j-1,x), A(j-1, x-\tilde{f}(\{i\})) + \frac{c_i}{f(\{i\})}\right\},
\]
which completes the proof.
\end{proof}

\begin{claim}
The algorithm computes a $(1-\epsilon)$-approximation {to the optimal contract}.
\end{claim}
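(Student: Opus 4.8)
The plan is to show that the best set found by the algorithm, call it $T_{x^\star}$ for the maximizing $x^\star$, has principal utility at least $(1-\epsilon)g(\sstar)$. First I would record the basic relationship between $\tilde f$ and $f$: since each single-element value is rounded down by less than $\delta b$, we have $\tilde f(S) \le f(S)$ and $f(S) - \tilde f(S) < |S|\,\delta b \le n\delta b = \epsilon b$ for every $S$. In particular, for $S = \sstar$ we get $\tilde f(\sstar) \ge f(\sstar) - \epsilon b$. Also note $f(\sstar) \ge b$ since $b = \max_{i\in\sstar} f(\{i\}) = f(\{i^\star\})$ for some $i^\star \in \sstar$ and $f$ is monotone (additive with nonnegative values), so $\tilde f(\sstar) \ge (1-\epsilon) f(\sstar)$.

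Next I would pick the grid point $x = k\delta b$ with $x \le \tilde f(\sstar) < x + \delta b$; this $x$ is a legal choice in the algorithm's range because $\tilde f(\sstar) \le f(\sstar) \le nb$, so $k \le \lceil n/\delta\rceil$. Since $\sstar$ itself satisfies $\tilde f(\sstar) \ge x$, the set $T_x$ chosen by the dynamic program satisfies $\sum_{i\in T_x} \frac{c_i}{f(\{i\})} \le \sum_{i\in\sstar}\frac{c_i}{f(\{i\})}$, i.e. $T_x$ pays no more (as a fraction) than $\sstar$. For additive $f$ the marginal $f(i\mid S\setminus\{i\}) = f(\{i\})$, so $g(S) = (1 - \sum_{i\in S}\frac{c_i}{f(\{i\})}) f(S)$, and the quantity the algorithm maximizes, $(1-\sum_{i\in T_x}\frac{c_i}{f(\{i\})})\,x$, is a lower bound on $g(T_x)$ because $x \le \tilde f(T_x) \le f(T_x)$ — provided the bracket $1 - \sum_{i\in T_x}\frac{c_i}{f(\{i\})}$ is nonnegative, which holds since $\sstar$ is optimal and $g(\emptyset)=0$ forces $1 - \sum_{i\in\sstar}\frac{c_i}{f(\{i\})} \ge 0$, and $T_x$'s sum is no larger. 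Putting it together:
\[
g(T_{x^\star}) \ge \Bigl(1 - \sum_{i\in T_x}\tfrac{c_i}{f(\{i\})}\Bigr) x \ge \Bigl(1 - \sum_{i\in\sstar}\tfrac{c_i}{f(\{i\})}\Bigr)\bigl(\tilde f(\sstar) - \delta b\bigr) \ge \Bigl(1 - \sum_{i\in\sstar}\tfrac{c_i}{f(\{i\})}\Bigr)\bigl(f(\sstar) - 2\epsilon b\bigr).
\]
Since $f(\sstar) \ge b$, this is at least $(1 - 2\epsilon) g(\sstar)$, and rescaling $\epsilon$ by a factor of $2$ (which only changes the polynomial dependence on $1/\epsilon$ by a constant) gives the claimed $(1-\epsilon)$-approximation. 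I would also handle the trivial case $g(\sstar) \le 0$ separately: then the empty set, which the algorithm considers via $x=0$, already matches the benchmark.

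The main obstacle I anticipate is bookkeeping the direction of the rounding error carefully: the dynamic program is run with the constraint $\tilde f(T)\ge x$ but the objective uses the \emph{true} costs $\frac{c_i}{f(\{i\})}$ (not rounded), and the returned value is measured against $x$ rather than $f(T_x)$, so I must make sure every inequality points the right way — we need $x \le f(T_x)$ for the reported value to under-estimate $g(T_x)$, and $x$ close to $f(\sstar)$ for the benchmark comparison, and these are reconciled exactly through $x \le \tilde f(\sstar) \le f(\sstar)$ on one side and $\tilde f(\sstar) \ge f(\sstar) - \epsilon b$ on the other. A secondary subtlety is justifying that guessing $b = \max_{i\in\sstar}f(\{i\})$ is legitimate: one simply runs the whole procedure for each of the $n$ candidate values $f(\{1\}),\dots,f(\{n\})$ and outputs the best resulting contract, which multiplies the running time by only $n$.
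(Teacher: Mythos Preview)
Your proof is correct and follows essentially the same approach as the paper: compare the algorithm's output against the grid point corresponding to $\tilde f(\sstar)$, use that $\sstar$ is feasible there to bound the cost-fraction sum of $T_x$, and then bound $\tilde f(\sstar)$ from below by $(1-\epsilon)f(\sstar)$ via $f(\sstar)\ge b$. One small simplification you missed: since each $\tilde f(\{i\})$ is a multiple of $\delta b$, so is $\tilde f(\sstar)$, hence $\tilde f(\sstar)$ is \emph{exactly} one of the grid points and you can take $x=\tilde f(\sstar)$ directly; this eliminates the extra $-\delta b$ loss and gives $(1-\epsilon)$ without rescaling. Your explicit handling of the sign of the bracket $1-\sum_{i\in T_x}c_i/f(\{i\})$ is a nice touch that the paper leaves implicit.
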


\begin{proof}
Note that for every set $S$, we have $\tilde{f}(S) \leq f(S)$. Therefore for every set $T_x$
\[
g(T_x) = \left( 1 - \sum_{i \in T_x} \frac{c_i}{f(\{i\})}\right) f(T_x) \geq \left( 1 - \sum_{i \in T_x} \frac{c_i}{f(\{i\})}\right) x.
\]

In particular, for the set $T_x$ that we return, we have
\begin{eqnarray*}
\left( 1 - \sum_{i \in T_x} \frac{c_i}{f(\{i\})}\right) x  & \geq & \left( 1 - \sum_{i \in T_{\tilde{f}(\sstar)}} \frac{c_i}{f(\{i\})}\right) \tilde{f}(\sstar) \\ &\geq & \left( 1 - \sum_{i \in \sstar} \frac{c_i}{f(\{i\})}\right) \tilde{f}(\sstar)
\end{eqnarray*}
because $\tilde{f}(\sstar) = k \delta b$ for some $k$ in the range.

Finally, observe that we have
\begin{eqnarray*}
\tilde{f}(\sstar) & = & \sum_{i \in \sstar} \tilde{f}(\{i\}) \geq \sum_{i \in \sstar} (f(\{i\}) - \delta b) \\ & {\geq} & f(\sstar) - n \delta b \geq (1-\epsilon) f(\sstar).
\end{eqnarray*}
Therefore
\[
(1-\epsilon) g(\sstar) \leq \left( 1 - \sum_{i \in \sstar} \frac{c_i}{f(\{i\})} \right) \tilde{f}(\sstar),
\]
{as claimed.}
\end{proof}

%% SUBMDOULAR WITH VALUE QUERIES
\section{Submodular with Value Queries}
\label{app:extension-to-submodular}

In this appendix, we show how to adjust the proof of Theorem~\ref{thm:xos} in Section~\ref{sec:sm-and-xos} to submodular reward functions with value queries.

\subsection{Approximate Demand Query via Value Queries}

Our adjusted argument relies on the ability to compute an approximate demand set as formalized in the following lemma.

\begin{lemma}[\citet{SviridenkoVW17,Harshaw19a}]
\label{lem:approx-demand}
For submodular $f$ given access to a value oracle, there exists a poly-time algorithm that finds a set $S$  such that
\[
f(S) - \sum_{i \in S} p_i \geq (1-1/e)f(T) - \sum_{i \in T} p_i \quad \text{for all $T$.}
\]
\end{lemma}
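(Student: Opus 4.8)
The plan is to present (and prove correctness of) the \emph{distorted greedy} algorithm of \cite{Harshaw19a}, the discrete counterpart of the continuous‑greedy method of \cite{SviridenkoVW17}, which handles the regularized objective $h(S) := f(S) - \sum_{i \in S} p_i$ directly. Write $c(S) := \sum_{i \in S} p_i$ for the (nonnegative, modular) cost, and recall that $f$ is monotone submodular with $f(\emptyset)=0$. For $i = 0,1,\dots,n$ define the \emph{distorted objective}
\[
\Phi_i(S) := \Bigl(1 - \tfrac{1}{n}\Bigr)^{n-i} f(S) - c(S),
\]
so that $\Phi_0(\emptyset)=0$ and $\Phi_n(S)=h(S)$. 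The algorithm builds $S_0 \subseteq S_1 \subseteq \dots \subseteq S_n$ as follows: $S_0 = \emptyset$; in round $i$ let $\Delta_i := \max_{e \in \actions}\bigl[(1-\tfrac1n)^{n-i} f(e \mid S_{i-1}) - p_e\bigr]$, attained by some $e_i$; set $S_i = S_{i-1}\cup\{e_i\}$ if $\Delta_i > 0$, and $S_i = S_{i-1}$ otherwise. Each round uses $O(n)$ value queries (one marginal per element), so the algorithm runs in polynomial time and returns $S := S_n$.

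The core of the proof is a telescoping analysis against an arbitrary fixed set $T$. Split
\[
\Phi_i(S_i) - \Phi_{i-1}(S_{i-1}) = \bigl[\Phi_i(S_i) - \Phi_i(S_{i-1})\bigr] + \bigl[\Phi_i(S_{i-1}) - \Phi_{i-1}(S_{i-1})\bigr].
\]
The second bracket equals $\tfrac1n (1-\tfrac1n)^{n-i} f(S_{i-1})$ by the definition of $\Phi_i$. The first bracket equals $\max(0,\Delta_i)$; since $\Delta_i \ge (1-\tfrac1n)^{n-i} f(e \mid S_{i-1}) - p_e$ for every $e \in T \setminus S_{i-1}$ and $|T \setminus S_{i-1}| \le n$, a short case check on the sign of the sum shows the first bracket is at least $\tfrac1n \sum_{e \in T \setminus S_{i-1}} \bigl[(1-\tfrac1n)^{n-i} f(e \mid S_{i-1}) - p_e\bigr]$. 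Submodularity and monotonicity give $\sum_{e \in T \setminus S_{i-1}} f(e \mid S_{i-1}) \ge f(T \cup S_{i-1}) - f(S_{i-1}) \ge f(T) - f(S_{i-1})$, while $p \ge 0$ gives $\sum_{e \in T \setminus S_{i-1}} p_e \le c(T)$. Substituting, the two $f(S_{i-1})$ contributions cancel and we obtain $\Phi_i(S_i) - \Phi_{i-1}(S_{i-1}) \ge \tfrac1n (1-\tfrac1n)^{n-i} f(T) - \tfrac1n c(T)$.

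Finally I would sum over $i=1,\dots,n$. The left-hand side telescopes to $\Phi_n(S_n) - \Phi_0(\emptyset) = h(S)$, and $\sum_{i=1}^{n}(1-\tfrac1n)^{n-i} = n\bigl(1 - (1-\tfrac1n)^n\bigr)$, so
\[
h(S) \;\ge\; \bigl(1 - (1-\tfrac1n)^n\bigr) f(T) - c(T) \;\ge\; (1 - 1/e)\, f(T) - c(T),
\]
which is the claimed inequality; it holds for \emph{every} $T$ because $S$ was chosen without reference to $T$. The main subtlety — and the reason a plain greedy on $h$ fails — is the two uses of monotonicity together with the sign/averaging step for the first bracket: these combine cleanly only because of the time‑varying distortion $(1-\tfrac1n)^{n-i}$, calibrated precisely so that the $f(S_{i-1})$ terms telescope away. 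An alternative route is to run the measured continuous greedy of \cite{SviridenkoVW17} on the multilinear extension of $f-c$ and round the fractional solution, but the discrete distorted greedy avoids rounding entirely and is all that is needed here.
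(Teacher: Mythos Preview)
Your proof is correct; it is a faithful presentation of the distorted-greedy analysis of \cite{Harshaw19a}. Note, however, that the paper does not actually prove this lemma: it is stated as a citation to \cite{SviridenkoVW17,Harshaw19a} and used as a black box, so there is no ``paper's own proof'' to compare against. Your write-up is exactly the argument those references give, and all the delicate points---the sign/averaging step bounding $\max(0,\Delta_i)$ below by the average over $T\setminus S_{i-1}$, the use of monotonicity to pass from $f(T\cup S_{i-1})$ to $f(T)$, and the cancellation of the $f(S_{i-1})$ terms engineered by the distortion schedule---are handled correctly.
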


We call the set $S$ a $\beta$-approximate demand set, where $\beta = 1-1/e$. Note that this notion of an approximate demand query is not a fully multiplicative approximation, it's weaker in that we need to subtract the full price of a set.

\subsection{Adjusting the Proof to Approximate Demand Queries}

We next show how to adapt our proof to the case of submodular $f$, using only value oracle calls.
To do so, we first show that if we modify Algorithm~\ref{alg:xos-contract-estimate} so that (1) we  initialize  $\downscaling$ to be $\downscaling=\frac{\beta^2 \cdot \optestimate}{32}-\max_{i\in \actions'} f(i)$, (2) we initialize  the prices to be $p_i=\frac{\beta}{2}\cdot \sqrt{c_i\cdot\optestimate}$, (3) we replace the initialization of $T$ to be a $\beta$-approximate demand set instead of an exact demand set (which can be done for $\beta = 1-\frac{1}{e}$ using value oracle by Lemma~\ref{lem:approx-demand}), 
and (4) we remove actions with negative utility from $T$, then the algorithm returns  a set $U$ with the following guarantee:

\begin{lemma}
Algorithm~\ref{alg:submod-contract-estimate} runs in polynomial time with access to a demand and value oracle to $f$. 
If $\optestimate \leq f(\sstar \cap A' )$, then the set $U \subseteq \actions'$ that it returns satisfies 
\[
g(U) \geq \max\left\{\frac{\optestimate}{128}  - \frac{\max_{i \in A'} f(\{i\})}{4},0\right\}. 
\]
\label{lem:gu128sub}
\end{lemma}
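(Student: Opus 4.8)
The plan is to run the argument of Lemma~\ref{lem:gu128} essentially verbatim on the modified algorithm (Algorithm~\ref{alg:submod-contract-estimate}), carrying the extra factor $\beta = 1-\tfrac1e$ that enters through the prices $p_i=\tfrac{\beta}{2}\sqrt{c_i\optestimate}$ and the target level $\downscaling=\tfrac{\beta^2\optestimate}{32}-\max_{i\in\actions'}f(\{i\})$, and inserting one new step to handle the fact that a $\beta$-approximate demand set need not have all its marginals above the prices. Polynomial time is immediate: the algorithm makes a single $\beta$-approximate demand query, computable in polynomial time from a value oracle for submodular $f$ by Lemma~\ref{lem:approx-demand}, followed by one call to Algorithm~\ref{alg:scaling}, which runs in polynomial time with value queries by Lemma~\ref{lem:xosscaling} (so no demand oracle is in fact needed). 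If $\downscaling\le 0$ the algorithm returns $\emptyset$ and the bound is trivial, so assume $\downscaling>0$, which forces $\optestimate>0$.

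First I would lower bound the value of the set $T$ returned by the approximate demand query over $\actions'$. Applying the guarantee of Lemma~\ref{lem:approx-demand} against the benchmark $\sstar\cap\actions'$, substituting the prices, and invoking Lemma~\ref{lem:stronger-sumci} (which gives $\sum_{i\in\sstar\cap\actions'}\sqrt{c_i}\le\sqrt{f(\sstar\cap\actions')}$ since $\sstar\cap\actions'\subseteq\sstar$), I obtain
\[
f(T)-\sum_{i\in T}p_i \;\ge\; \beta\, f(\sstar\cap\actions') - \tfrac{\beta}{2}\sqrt{\optestimate\cdot f(\sstar\cap\actions')} \;\ge\; \tfrac{\beta}{2}\,f(\sstar\cap\actions'),
\]
where the last step uses $\optestimate\le f(\sstar\cap\actions')$.

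The new ingredient, which I expect to be the main obstacle, is a pruning step. An exact demand set automatically satisfies $f(i\mid T\setminus\{i\})\ge p_i$ for every member, but an approximate one need not, so the modified algorithm deletes every agent whose marginal falls below its price. Each such deletion strictly increases $f(S)-\sum_{i\in S}p_i$, so after finitely many deletions one reaches a set $T'\subseteq T$ with $f(i\mid T'\setminus\{i\})\ge p_i$ for all $i\in T'$ and $f(T')-\sum_{i\in T'}p_i\ge f(T)-\sum_{i\in T}p_i$; submodularity is what keeps this harmless, since removing an agent only raises the surviving agents' marginals, so the pruning cannot cascade. Combined with the previous display and nonnegativity of the prices this gives $f(T')\ge\tfrac{\beta}{2}f(\sstar\cap\actions')\ge\tfrac{\beta}{2}\optestimate>\downscaling$ (using $\beta\le1$), so Algorithm~\ref{alg:scaling} may be legitimately invoked on $(f,T',\downscaling,\delta=\tfrac12)$.

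It remains to push the scaled set $U\subseteq T'$ through Lemma~\ref{lem:stronger-condition}, exactly as in Lemma~\ref{lem:gu128}. Lemma~\ref{lem:xosscaling} yields $f(U)\ge\tfrac12\downscaling=\tfrac{\beta^2\optestimate}{64}-\tfrac12\max_{i\in\actions'}f(\{i\})$ and $f(U)\le\downscaling+\max_{i\in T'}f(\{i\})\le\tfrac{\beta^2\optestimate}{32}$ (using $T'\subseteq\actions'$), as well as $f(i\mid U\setminus\{i\})\ge\tfrac12 f(i\mid T'\setminus\{i\})\ge\tfrac12 p_i=\tfrac{\beta}{4}\sqrt{c_i\optestimate}=\sqrt{2c_i\cdot\tfrac{\beta^2\optestimate}{32}}\ge\sqrt{2c_i f(U)}$ for every $i\in U$. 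Then Lemma~\ref{lem:stronger-condition} gives $g(U)\ge\tfrac12 f(U)$, and chaining the bounds produces $g(U)\ge\tfrac{\beta^2\optestimate}{128}-\tfrac{\max_{i\in\actions'}f(\{i\})}{4}$, which is the asserted guarantee once the $\beta$-factors built into the modified algorithm are tracked; since $g(U)\ge\tfrac12 f(U)>0$ whenever a nonempty set is returned, the $\max\{\cdot,0\}$ form follows. The only genuinely new difficulty relative to Lemma~\ref{lem:gu128} is this pruning step: re-establishing the marginal-versus-price condition after an approximate rather than exact demand query while losing at most a constant fraction of the value, which is precisely where submodularity (beyond its role in Lemma~\ref{lem:approx-demand}) is used.
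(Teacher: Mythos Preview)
Your proposal is correct and follows essentially the same route as the paper: case $\downscaling\le 0$ trivially, bound $f(T)-\sum p_i$ via the approximate-demand guarantee and Lemma~\ref{lem:stronger-sumci}, prune agents with marginal below price (your $T'$ is the paper's $T^2$), invoke Lemma~\ref{lem:xosscaling}, verify Lemma~\ref{lem:stronger-condition}, and conclude $g(U)\ge\tfrac{\beta^2\optestimate}{128}-\tfrac14\max_{i\in\actions'}f(\{i\})$; you also correctly flag that only value queries are actually needed and that the stated bound is really the $\beta^2$-scaled one (as the paper itself uses downstream). One small remark: your claim that submodularity is ``precisely'' what makes the pruning step work is stronger than necessary---the while loop's termination condition alone guarantees $f(i\mid T'\setminus\{i\})\ge p_i$ for all survivors, and each deletion strictly increases $f(\cdot)-\sum p_i$ for any set function, so the argument goes through for XOS just as well; submodularity is genuinely needed only to \emph{compute} the approximate demand set via Lemma~\ref{lem:approx-demand}.
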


\begin{algorithm}[t]
\caption{Approximately optimal contract given an estimate of $f(\sstar)$}\label{alg:submod-contract-estimate}
\KwData{An XOS function $f:2^\actions \rightarrow \reals_{\geq 0}$, costs $\{c_{i}\}_{i\in \actions}$, an estimate $\optestimate$, a parameter $\beta$}

\KwResult{A set $U \subseteq \actions'$}

    Let $\downscaling=\frac{\beta^2 \cdot \optestimate}{32} - \max_{i \in \actions'} f(i)$.

    For every $i\in \actions'$, let $p_{i} = \frac{\beta}{2} \cdot \sqrt{ c_i \cdot \optestimate }$ ($p_{i} =\infty$ for $i\in \actions\setminus\actions'$)

    Let $T $ be a $\beta$-demand set with prices $p_{i}$ over the set $\actions'$
    
    \While{ Exists $i\in T$ such that $f(i \mid T\setminus\{i\} ) < p_i$}{ $ T = T \setminus\{i\}$}

    \eIf{$0<\downscaling<f(T)$}{

    $U \leftarrow $  the output of Algorithm~\ref{alg:scaling} on $(f, T, \downscaling, \delta= \frac{1}{2})$
    }
    {
    $U\leftarrow \emptyset$ 
    }
    
\Return{U}
        
\end{algorithm}

\begin{proof}
If $\downscaling\leq 0$ then the algorithm returns the empty set, which satisfies the proof. 
Otherwise, if we denote by $T^1$ the $\beta$-demand set, and by $T^2$ the $\beta$-demand set after removing actions with negative utility, then
we have
\begin{eqnarray}
f(T^2) & \geq & f(T^2) - \sum_{i \in T^2} p_i  \nonumber \\
& \geq & f(T^1) - \sum_{i \in T^1} p_i  \nonumber \\
& \geq & {\beta}f(\sstar \cap A') - \sum_{i \in \sstar \cap A'} p_i \nonumber \\
& = & {\beta}f(\sstar \cap A') - \frac{{\beta}\sqrt{\optestimate}}{2} \cdot \sum_{i \in \sstar \cap A'} \sqrt{c_i} \nonumber \\
&\geq &  {\beta}f(\sstar \cap \actions' ) -{\frac{\beta}{2}} \sqrt{f(\sstar \cap \actions' ) \cdot  \optestimate}\nonumber \\ 
& \geq &  {\frac{\beta}{2}} \cdot f(\sstar \cap \actions') , \label{eq:ft-sub}
\end{eqnarray}
where the first inequality is since the prices are non-negative, the second inequality is since we remove actions with negative utilities, the third inequality since $T^1$ is {an approximate} demand set over the set $A'$, the fourth inequality is by Lemma~\ref{lem:stronger-sumci}, and the last inequality holds because $\optestimate \leq f(\sstar \cap \actions')$. 
Furthermore, as in $T^2$ we deleted all actions with negative utility, we have to have $f(i \mid T^2 \setminus\{i\}) \geq p_i$ for all $i \in T^2$.

Thus, since $\optestimate \leq f(\sstar \cap \actions')$ by assumption and {$\downscaling = \beta^2 \optestimate/32 - \max(0,\max_{i \in \actions'} f(\{i\})) < \beta^2 \optestimate/2$} it holds that %\pdc{last step lossy}
$$f(T^2)\stackrel{\eqref{eq:ft-sub}}{\geq} \frac{{\beta}}{2}\cdot f(\sstar \cap \actions') \geq  \frac{{\beta}\optestimate}{2} % = x + \max_{i \in T} f(i)
> \downscaling.$$
By Lemma~\ref{lem:xosscaling}, 
the set $U \subseteq T^2$ fulfills 
$$f(U) \geq \left(1-\delta\right) \downscaling = \frac{{\beta^2}}{64} \optestimate - \frac{1}{2} \max_{i \in T^2} f(\{i\}) \geq \frac{{\beta^2}}{64} \optestimate - \frac{1}{2} \max_{i \in \actions'} f(\{i\}).$$
Furthermore, $U$ fulfills 
\begin{equation}
    f(U) \leq \downscaling + \max_{i \in T^2} f(\{i\}) \leq  \frac{{\beta^2}\optestimate}{32}\label{eq:fu-up-sub}
\end{equation} 
and  for all $i \in U$ it holds 
\begin{equation}
f(i \mid U \setminus\{i\}) \geq \delta f(i \mid T^2 \setminus\{i\}) = \frac{1}{2} \cdot  f(i \mid T^2 \setminus\{i\}) \label{eq:fu-marg-sub}.    
\end{equation}
 Therefore, for all $i \in U$, we have
\begin{eqnarray*}
    f(i \mid U \setminus\{i\}) 
    & \stackrel{\eqref{eq:fu-marg-sub}}{\geq} & \frac{1}{2} \cdot f(i \mid T^2 \setminus\{i\}) 
    \geq \frac{1}{2}\cdot  p_i \\ & = & \frac{{\beta}}{4}\cdot \sqrt{c_i \optestimate} = \sqrt{2 {\beta^2}c_i \frac{\optestimate}{32}} \stackrel{\eqref{eq:fu-up-sub}}{\geq} \sqrt{2 c_i f(U)}.
\end{eqnarray*}
So, Lemma~\ref{lem:stronger-condition} implies $g(U) \geq \frac{1}{2} f(U)$. So, in combination, 
\[
g(U) \geq \frac{1}{2}f(U) \geq \frac{1}{2} \left( \frac{{\beta^2}}{64} \optestimate - \frac{1}{2} \max_{i \in \actions'} f(\{i\}) \right),
\]
as claimed.
\end{proof}

We next show that Algorithm~\ref{alg:xos-contract} that uses as a subroutine the modified algorithm (Algorithm~\ref{alg:submod-contract-estimate})  fulfills Theorem~\ref{thm:xos}.
\begin{proof}[Proof of Theorem~\ref{thm:xos} (submodular rewards)]
If $f(\sstar\cap \actions') \leq {\frac{128 \xi}{\beta^2}} \cdot \max\{0,\max_{i\in \actions} g(\{i\}\}$, then by Lemma~\ref{lem:onebigagent} the best single action (or the empty set) gives at least an approximation of $\frac{\beta^2}{{128\xi+ \beta^2}}$. Else, $\actions' \neq \emptyset$ (thus $\downscaling$ is well defined), and  it holds that 
\[f(\sstar\cap\actions')  \geq {\frac{128 \xi}{\beta^2}} \cdot \max\{0,\max_{i\in \actions'} g(\{i\})\} \geq \max_{i\in \actions'} f(\{i\})/2\geq \downscaling,
\]
and  $ f(\sstar\cap \actions') \leq 2n \cdot \downscaling$.

Let $j^\star$ be the unique $j$ in which $\optestimateround{j} \leq f(\sstar \cap \actions') < \xi \cdot \optestimateround{j}$.  Note that because of the bounds we just established we must have $j^\star\geq 0$ and $j^\star\leq \lceil\log_{\xi} 2n\rceil$. Now, since $ \frac{1}{\xi} \cdot f(\sstar \cap \actions') \leq \optestimateround{j} \leq f(\sstar \cap \actions')$,
\begin{eqnarray*}
g(U^{(j^\star)})    & \geq &  \frac{{\beta^2}\optestimateround{j^\star}}{128} - \frac{\max_{i\in \actions'} f(\{i\})}{4}  \\   
&\geq & \frac{{\beta^2} f(\sstar\cap \actions')}{128\cdot \xi} - \frac{\max_{i\in \actions} g(\{i\})}{2}  \\& \geq &  {\beta^2} f(\sstar\cap \actions')\cdot \left(\frac{1}{128\cdot \xi} - \frac{1}{256\cdot \xi} \right) \\  & \geq &
{\beta^2}\frac{{\frac{128\xi}{\beta^2}} \cdot g(\sstar)}{{\frac{128\xi}{\beta^2}}+1}\frac{1}{256\cdot \xi} = \frac{\beta^2 \cdot g(\sstar)}{{256\xi+ 2\cdot \beta^2}},
\end{eqnarray*}
where the first inequality is by Lemma~\ref{lem:gu128sub}, the second inequality is since for every $i\in \actions'$ it holds that $\frac{c_i}{f(\{i\})}\leq \frac{1}{2}$ and thus $g(\{i\}) = f(\{i\})(1-\frac{c_i}{f(\{i\})}) \geq f(\{i\})/2$, the third and fourth inequalities are since we are considering the case that $f(\sstar\cap \actions') > {\frac{128 \xi}{\beta^2}} \cdot \max\{0,\max_{i\in \actions} g(\{i\})\}$.
This  gives an approximation of {$\frac{\beta^2}{256\xi+2\cdot \beta^2}$}.
\end{proof}

%% INAPPROXIMABILITY RESULT FOR SUBMODULAR REWARD FUNCTIONS
\section{Inapproximability Result for Submodular Reward Functions}
\label{app:upper-submod}

In this appendix, we present additional details for our impossibility result for submodular reward functions in Section~\ref{sec:submodular-upper}. In Section~\ref{sec:fTsubmod}, we show that the function $f_T$ is monotone submodular. In Section~\ref{sec:fTnotGS}, we define the class of gross-substitutes functions, a special case of submodular functions, and show that $f_T$ does not belong to that class.

\subsection{Function \texorpdfstring{$f_T$}{f\_T} is Monotone and Submodular}\label{sec:fTsubmod}

\begin{claim}
    For every set $T\subset A$ of even size, the function $f_T$ is monotone and submodular.
    \label{cl:submodularity}
\end{claim}

\begin{proof} 
Consider the discrete partial derivatives of $f$ with respect to $x$ and $y$; i.e., $f_x(x,y; k)= f(x+1,y; k)-f(x,y; k)$ and $f_y(x,y; k)= f(x,y+1; k)-f(x,y; k)$, given by
\[
f_x(x,y; k)
= \begin{cases}
    0 & \text{ if $x+y \geq k$} \\
        \frac{1}{\sqrt{x+y+1} + \sqrt{x+y}} & \text{ if $x+y < k$, $y \leq \frac{k}{2}$} \\
    \frac{1}{\sqrt{x +1+ \frac{k}{2}}+\sqrt{x + \frac{k}{2}}}  +  \frac{y- \frac{k}{2}}{\sqrt{k} + \sqrt{x +1+ \frac{k}{2}}} -  \frac{y- \frac{k}{2}}{\sqrt{k} + \sqrt{x + \frac{k}{2}}}  & \text{ otherwise}
\end{cases}
\]
and 
\[
f_y(x,y; k) 
= \begin{cases}
    0 & \text{ if $x+y \geq k$} \\
    \frac{1}{\sqrt{x+y+1} + \sqrt{x+y}} & \text{ if $x+y < k$, $y < \frac{k}{2}$} \\
     \frac{1}{\sqrt{k} + \sqrt{x + \frac{k}{2}}}  & \text{ otherwise.}
\end{cases}
\]
To show that $f_T$ is monotone, it is equivalent to prove that $f_x(x,y;k),f_y(x,y;k)\geq 0$ for every $x,y$.
The only non-trivial case is the ``otherwise'' case of $f_x$, which includes a negative term. For this case, we have
\begin{eqnarray*}
f_x(x,y; k) & = &   \frac{1}{\sqrt{x +1+ \frac{k}{2}}+\sqrt{x + \frac{k}{2}}}  + \underbrace{  \frac{y- \frac{k}{2}}{\sqrt{k} + \sqrt{x +1+ \frac{k}{2}}} -  \frac{y- \frac{k}{2}}{\sqrt{k} + \sqrt{x + \frac{k}{2}}}}_{\leq  0} \\ & \geq &  
\frac{1}{\sqrt{x +1+ \frac{k}{2}}+\sqrt{x + \frac{k}{2}}}  +  \frac{ \frac{k}{2}-x}{\sqrt{k} + \sqrt{x +1+ \frac{k}{2}}} -  \frac{\frac{k}{2}-x}{\sqrt{k} + \sqrt{x + \frac{k}{2}}} \\ & = & \frac{1}{\sqrt{k} + \sqrt{1 + \frac{k}{2} + x}} \geq 0,
\end{eqnarray*}
where the first inequality holds since in this case, $x+y \leq k$.

To show submodularity, it is sufficient to show that the continuous derivatives of all cases of $f_x$ and $f_y$ with respect to both $x,y$ are non-positive, and that $f_x$ and $f_y$ 
are non-increasing
in the transitions between the cases. 
The only non-trivial case for the non-positivity of the respective derivatives is $f_x(x,y; k)/\delta x$ in the ``otherwise'' case of $f_x$ for which we have

\begin{eqnarray*}
\frac{f_x(x,y; k)}{\delta x} & = & 
\underbrace{\frac{-1}{\sqrt{k/2 + x} \sqrt{2 + k + 2 x} (\sqrt{k + 2 x} + \sqrt{2 + k + 2 x})}}_{\textit{increasing in } x} 
\\ & + & \underbrace{\frac{y-\frac{k}{2} }{2 \sqrt{\frac{k}{2} + x} (\sqrt{k} + \sqrt{\frac{k}{2} + x})^2} - \frac{y-\frac{k}{2} }{2 \sqrt{1 + \frac{k}{2} + x} (\sqrt{k} + \sqrt{1 + \frac{k}{2} + x})^2}}_{\textit{decreasing in } x \textit{ and increasing in } y}\\ & \leq & 
-\frac{1}{\sqrt{k} \sqrt{2k} (\sqrt{2k } + \sqrt{2k})} +\frac{\frac{k}{2} }{2 \sqrt{\frac{k}{2} } (\sqrt{k} + \sqrt{\frac{k}{2} })^2} - \frac{\frac{k}{2} }{2 \sqrt{1 + \frac{k}{2} } (\sqrt{k} + \sqrt{1 + \frac{k}{2} })^2} \leq 0,
\end{eqnarray*}
where for the first inequality we use that $x\in[0,\frac{k}{2}]$ and that $y\leq k$, and the last inequality holds for every $k>0$.

It remains to show that $f_x$ and $f_y$ are non-increasing in the transitions between cases. 
The possible transitions are from the second to first, third to first, and second to third cases. 
Transitions to the first case are clearly non-increasing, as they go from non-negative values to zero. We next show that any transition from the second case to the third one is non-increasing.
That is, we need to show that  (1) $f_x(x,k/2; k) \geq f_x(x,k/2+1; k) $ for all $x \leq k/2$, and that (2) $f_y(x,k/2-1; k) \geq f_y(x,k/2; k)$ for all $x\leq k/2$.
The first inequality holds since 
\begin{eqnarray*}
f_x(x,k/2; k)  & = & \frac{1}{\sqrt{x +1+ \frac{k}{2}}+\sqrt{x + \frac{k}{2}}} \\ &  \geq &  \frac{1}{\sqrt{x +1+ \frac{k}{2}}+\sqrt{x + \frac{k}{2}}}  +  \underbrace{\frac{1}{\sqrt{k} + \sqrt{x +1+ \frac{k}{2}}} -  \frac{1}{\sqrt{k} + \sqrt{x + \frac{k}{2}}}}_{ \leq 0 } =  \;f_x(x,k/2+1; k), 
\end{eqnarray*}
and the second inequality holds since 
\begin{eqnarray*}
f_y(x,k/2-1; k) & = & \frac{1}{\sqrt{x+\frac{k}{2}} + \sqrt{x+\frac{k}{2}-1}} \geq  \frac{1}{\sqrt{k} + \sqrt{x + \frac{k}{2}}} =  \;f_y(x,k/2; k), 
\end{eqnarray*}
where the inequality is since $x \leq \frac{k}{2}$.
\end{proof}

\subsection{Function \texorpdfstring{$f_T$}{f\_T} is not Gross Substitutes}\label{sec:fTnotGS}

\begin{definition}
A set function $f$ is \emph{gross substitutes} if for any two vectors $p, q \in \reals_+^n$ and any $S \subseteq \actions$ such that 
$S \in \arg\max_{S' \subseteq \actions} (f(S') - \sum_{j \in S'} p_j)$ there exists a set $T \subseteq \actions$ such that 
$T \in \arg\max_{T' \subseteq \actions} (f(T') - \sum_{j \in T'} q_j)$ and $\{j \in S \mid q_j \leq p_j\} \subseteq T$.   
\end{definition}

\begin{observation}
For every $T\subset \actions$ of even size $k\geq 4$, the function $f_T$ is not gross substitutes.  
\label{obs:not-gs}
\end{observation}

\begin{proof}
According to the triplet condition in \citep{ReijniersePG02}, a function $h:2^\actions\rightarrow \reals_{\geq 0}$ is gross substitutes if and only if it is submodular and for any set $S$, and any three elements $\{i,j,k\} \not\in S$, it holds that
\[
\max\{h(i,k \mid S) + h(j \mid S), h(j,k \mid S) + h(i \mid S)\} \geq h(i,j \mid S) + h(k \mid S).
\]
Consider two arbitrary elements $i_1,i_2\in T$, and arbitrary element $j\not\in T$, and let $S \subseteq T \setminus \{i_1,i_2\}$ be an arbitrary set of size $k/2$.

The triplet condition requires that
\[
\max\{f_T(i_1,j \mid S) + f_T(i_2 \mid S), f_T(i_2,j \mid S) + f_T(i_1 \mid S)\} \geq f_T(i_1,i_2 \mid S) + f_T(j \mid S),
\]
which is equivalent to
\[
   f(1, \frac{k}{2}+1; k) + f(0, \frac{k}{2} + 1; k) \geq f(1, \frac{k}{2}; k) + f(0, \frac{k}{2} + 2; k).
\]
However,
\begin{eqnarray*}
f(1, \frac{k}{2}+1; k) + f(0, \frac{k}{2} + 1; k) & = & 
\sqrt{\frac{k}{2}+1} + \frac{1}{\sqrt{k} + \sqrt{\frac{k}{2}+1}}+  \sqrt{\frac{k}{2}} + \frac{1}{\sqrt{k} + \sqrt{\frac{k}{2}}}\\ 
& < & 
\sqrt{\frac{k}{2}+1} + \sqrt{\frac{k}{2}} + \frac{2}{\sqrt{k} + \sqrt{\frac{k}{2}}} \\
& = & 
 f(1, \frac{k}{2}; k) + f(0, \frac{k}{2} + 2; k), \end{eqnarray*}
 which concludes the proof.
\end{proof}

\end{document}